\theoremstyle{plain}
\newtheorem{theorem}{Theorem}[section]
\newtheorem{remark}{Remark}[section]
\newtheorem{proposition}{Proposition}[section]
\newtheorem{corollary}{Corollary}[section]
\theoremstyle{definition}
\numberwithin{equation}{section}
\newtheorem{definition}{Definition}[section]
\newenvironment{example}
  {\pushQED{\qed}\examplex}
  {\popQED\endexamplex}
\newtheorem{assumption}{Assumption}
\newtheorem{conjecture}{Conjecture}
\newcommand{\N}{\mathbb{N}}
\newcommand{\Conv}{\textup{Conv}}
\newcommand{\m}{\mathfrak{m}}
\newcommand{\F}{\mathscr{F}}
\newcommand{\I}{\mathscr{I}}
\newcommand{\codim}{\textup{codim}}
\newcommand{\A}{\mathcal{A}}
\newcommand{\M}{\mathcal{M}}
\newcommand{\LLL}{\mathcal{E}}
\newcommand{\LL}{\mathbf{L}}
\newcommand{\PPP}{\mathbf{P}}
\newcommand{\K}{K}
\newcommand{\Z}{\mathbb{Z}}
\newcommand{\E}{\mathcal{E}}
\newcommand{\C}{\mathbb{C}}
\newcommand{\PP}{\mathbb{P}}
\newcommand{\V}{\mathcal{V}}
\newcommand{\D}{\delta}
\renewcommand{\d}{\partial}
\newcommand{\R}{\mathbb{R}}
\newcommand{\Vol}{\textup{Vol}}
\newcommand{\MV}{\textup{MV}}
\newcommand{\Newt}{\textup{Newt}}
\renewcommand{\Re}{\operatorname{Re}}
\renewcommand{\Im}{\operatorname{Im}}
\newcommand{\be}{\begin{equation}}
\newcommand{\ee}{\end{equation}}
\newcommand{\nn}{\nonumber}
\renewcommand{\I}{\mathcal{I}}
\renewcommand{\R}{\mathbb{R}}
\newcommand{\RR}{\mathrm{R}}
\renewcommand{\E}{\mathrm{E}}
\renewcommand{\d}{\mathrm{d}}
\newcommand{\U}{\mathcal{U}}
\renewcommand{\F}{\mathcal{F}}
\renewcommand{\V}{\mathcal{V}}
\renewcommand{\D}{\mathrm{D}}
\renewcommand{\N}{\mathrm{N}}
\renewcommand{\L}{\mathrm{L}}
\newcommand{\eps}{\varepsilon}
\renewcommand{\Z}{\mathbb{Z}}
\renewcommand{\m}{\mathsf{m}}
\renewcommand{\M}{\mathsf{M}}
\renewcommand{\C}{\mathbb{C}}
\renewcommand{\K}{\mathcal{K}}
\newcommand{\UU}{\mathbf{U}}
\newcommand{\FF}{\mathbf{F}}
\renewcommand{\Newt}{\mathrm{Newt}}
\newcommand{\conv}{\textup{Conv}}
\renewcommand{\v}[1]{\partial_v{#1}}
\newcommand{\gam}[1]{\partial_\gamma{#1}}
\title{\huge Landau Discriminants}
\author[1]{Sebastian Mizera,}\emailAdd{smizera@ias.edu}
\author[2]{Simon Telen}\emailAdd{simon.telen@mis.mpg.de}
\affiliation[1]{Institute for Advanced Study, Einstein Drive, Princeton, NJ 08540, USA}
\affiliation[2]{Max Planck Institute for Mathematics in the Sciences, Inselstra\ss e 22, 04103 Leipzig,\\ Germany}
\abstract{%
	Scattering amplitudes in quantum field theories have intricate analytic properties as functions of the energies and momenta of the scattered particles. In perturbation theory, their singularities are governed by a set of nonlinear polynomial equations, known as \emph{Landau equations}, for each individual Feynman diagram. The singularity locus of the associated Feynman integral is made precise with the notion of the \emph{Landau discriminant}, which characterizes when the Landau equations admit a solution. In order to compute this discriminant, we present approaches from classical elimination theory, as well as a numerical algorithm based on homotopy continuation. These methods allow us to compute Landau discriminants of various Feynman diagrams up to $3$ loops, which were previously out of reach. For instance, the Landau discriminant of the envelope diagram is a reducible surface of degree $45$ in the three-dimensional space of kinematic invariants. We investigate geometric properties of the Landau discriminant, such as irreducibility, dimension and degree. In particular, we find simple examples in which the Landau discriminant has codimension greater than one. Furthermore, we describe a numerical procedure for determining which parts of the Landau discriminant lie in the physical regions. In order to study degenerate limits of Landau equations and bounds on the degree of the Landau discriminant, we introduce \emph{Landau polytopes} and study their facet structure. Finally, we provide an efficient numerical algorithm for the computation of the number of master integrals based on the connection to algebraic statistics. The algorithms used in this work are implemented in the open-source \texttt{Julia} package \texttt{Landau.jl} available at \url{https://mathrepo.mis.mpg.de/Landau/}.
}
\begin{document}

\maketitle


\setcounter{page}{2}
\setcounter{tocdepth}{4}

\section{Introduction}

Feynman integrals are crucial for making theoretical predictions for high-precision particle physics experiments in the framework of perturbative quantum field theories. These integrals are extremely complicated functions of scattering energies and momenta of the particles involved.
Their explicit computation remains a challenging task spanning an enormous literature, see, e.g., \cite{Smirnov:2012gma} for a review. In fact, a lot of modern-day research is devoted to answering the simpler question: What are the singularities of a given Feynman integral and how complicated can they be?

Indeed, one of the biggest open questions in this topic has been the determination of the general analyticity properties of scattering amplitudes consistent with the underlying physical principles such a causality, locality, or unitarity. The importance of such investigations is emphasized by the recent applications in the bootstrap approaches \cite{Caron-Huot:2016owq,Caron-Huot:2019vjl,Guerrieri:2020bto,Guerrieri:2021tak}, bounds on low-energy effective field theories \cite{Adams:2006sv,Bellazzini:2020cot,Tolley:2020gtv,Caron-Huot:2020cmc}, all-multiplicity conjectures for singularities of the planar $\mathcal{N}=4$ super Yang--Mills amplitudes \cite{Prlina:2018ukf,Drummond:2019cxm,Arkani-Hamed:2019rds,Henke:2019hve,Mago:2020kmp}, or connections to cluster algebras \cite{Golden:2013xva,Chicherin:2020umh}, to name a few. A systematic study of such analyticity properties has been initiated in the 1960's in a program known as the \emph{S-matrix theory}; see, e.g., \cite{Eden:1966dnq,todorov2014analytic}. While a great deal of progress has been made in special cases---such as scattering of the lightest state in theories with a mass gap---the determination of the analytic structure of all but the very simplest Feynman diagrams remained too demanding computationally.

On the other hand, computational methods from \emph{nonlinear algebra} have seen significant advances in latest years, in particular with the development of robust numerical continuation methods \cite{timme2019mixed,telen2020robust} and fast, reliable implementations in packages such as \texttt{PHCpack} \cite{verschelde1999algorithm}, \texttt{Bertini} \cite{bates2013numerically}, and \texttt{HomotopyContinuation.jl} \cite{10.1007/978-3-319-96418-8_54}. These methods seem tailor-made to address the above questions. This work follows a natural direction in applying such recent nonlinear algebra techniques to the old problems in the S-matrix theory.

A connection between the S-matrix theory and nonlinear algebra was established in the work of Bjorken \cite{Bjorken:1959fd}, Landau \cite{Landau:1959fi}, and Nakanishi \cite{10.1143/PTP.22.128} who formulated a set of polynomial equations determining allowed positions of singularities of a given Feynman integral, nowadays known as the \emph{Landau equations}. In this way, the investigation of the analytic properties of Feynman integrals was transformed into an algebraic problem. Physically, Landau equations are the conditions for the worldline path integral of a given scattering process to localize on its classical saddle points \cite{Mizera:2021ujs,Mizera:2021fap}, in which the virtual particles become on-shell states.

Recent work on mathematical aspects of Landau equations includes \cite{Brown:2009ta,Bloch:2010gk,Abreu:2017ptx,Schultka:2019tfi,Collins:2020euz,Berghoff:2020bug,Muhlbauer:2020kut,HMSV}. For a more comprehensive summary of the literature see \cite[Sec.~II.C]{Mizera:2021fap}. We note that methods of computational algebraic geometry have been previously applied for integration-by-parts reduction of Feynman integrals, see \cite{Bendle:2020iim} for a review.

\paragraph*{Contributions.}\;
The goal of this paper is to investigate the above physical questions from an algebro-geometric and computational point of view. For each Feynman integral, we introduce the \emph{Landau discriminant} as a projective variety whose points are potential singularities of the integral. We prove its irreducibility and investigate its dimension and degree. We develop algorithmic tools for computing defining equations of the Landau discriminant, significantly advancing the state of the art. In order to showcase the effectiveness of these methods, we apply them to a gallery of examples, illustrated in Fig.~\ref{fig:diagrams}. We provide an implementation in the form of a \texttt{Julia} package \texttt{Landau.jl}, available at
\begin{center}
\url{https://mathrepo.mis.mpg.de/Landau/}.
\end{center}
The code makes use of \texttt{HomotopyContinuation.jl} \cite{10.1007/978-3-319-96418-8_54} (v2.6.0). Additionally, we study the combinatorics of polytopes arising from Feynman diagrams and Landau equations, motivated by the relation between Landau discriminants and the ${\cal A}$-discriminants from \cite{gelfand2008discriminants}. Finally, we present a numerical nonlinear algebra routine to compute the number of master integrals for any family of Feynman diagrams, exploiting the connection to maximum likelihood estimation in algebraic statistics. With this work, we aspire to pave the way for future research by both physicists and mathematicians in the study of the analytic structure of Feynman integrals, Landau discriminants, and related topics.

\paragraph*{Outline.}\;
This paper is organized as follows. We start by recalling elementary definitions and introducing the notation in Sec.~\ref{subsec:feynman}-\ref{sec:landaueq}. In addition, in App.~\ref{sec:appendix}, we include a basic introduction to Feynman integrals for readers with a mathematical background. It explains the Feynman rules and the transition from the loop-momentum integral to the worldline formalism. We show how this conversion gives rise to Schwinger parameters and Symanzik polynomials, which play a key role in this paper.

In Sec.~\ref{subsec:discriminant} we introduce the Landau discriminant $\nabla_G$, which describes the singularity locus of a given Feynman diagram $G$, see Def.~\ref{def:discriminant}.
In Thm.~\ref{thm1} we prove that $\nabla_G$ is an irreducible variety of codimension at least $1$ in the projectivized kinematic space of energies, momenta, and masses of $G$, denoted $\PP(\K_G)$. In Sec.~\ref{sec:ngon}-\ref{sec:banana} we compute Landau discriminants for the well-known examples of the $n$-gon and banana diagrams, see Fig.~\ref{fig:An} and \ref{fig:BE}.

\begin{figure}[!t]
     \centering
     \captionsetup{justification=centering}
     \begin{subfigure}[c]{0.3\textwidth}
         \centering
         \includegraphics[scale=0.8]{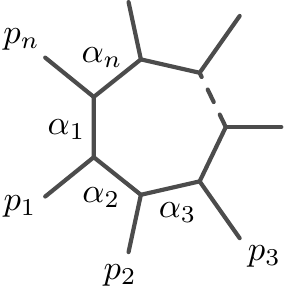}
         \caption{\label{fig:An}One-loop $n$-gon diagram, $G=\mathtt{A}_n$ (Sec.~\ref{sec:ngon})}
     \end{subfigure}
     \begin{subfigure}[c]{0.3\textwidth}
         \centering
         \includegraphics[scale=0.8]{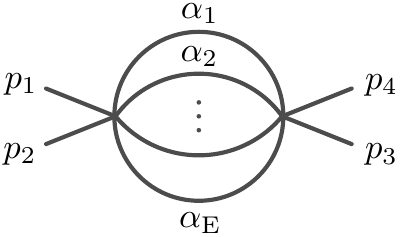}
         \caption{\label{fig:BE}Banana diagram with $\E$ edges, $G=\mathtt{B}_\E$ (Sec.~\ref{sec:banana}, \ref{sec:banana-polytopes})}
     \end{subfigure}
     \begin{subfigure}[c]{0.3\textwidth}
         \centering
         \includegraphics[scale=0.8]{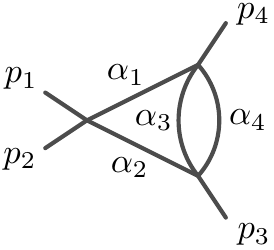}
         \caption{\label{fig:par}Parachute diagram,\\ $G=\mathtt{par}$ (Ex.~\ref{ex:UFLpar}, Thm.~\ref{thm2})}
    \end{subfigure}
     \\ \vspace{1.5em}
     \begin{subfigure}[c]{0.3\textwidth}
         \centering
         \includegraphics[scale=0.8]{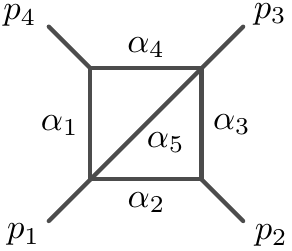}
         \caption{\label{fig:acn}Acnode diagram,\\ $G=\mathtt{acn}$ (Ex.~\ref{ex:acnode}, Rk.~\ref{ex:acnode2}, Thm.~\ref{thm2})}
     \end{subfigure}
    \begin{subfigure}[c]{0.3\textwidth}
         \centering
         \includegraphics[scale=0.8]{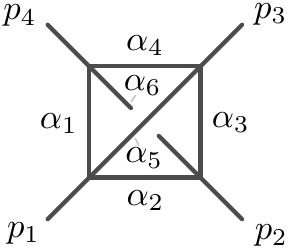}
         \caption{\label{fig:env}Envelope diagram,\\ $G=\mathtt{env}$ (Ex.~\ref{ex:env}, Sec.~\ref{sec:CN-analysis})}
     \end{subfigure}
    \begin{subfigure}[c]{0.3\textwidth}
         \centering
         \includegraphics[scale=0.8]{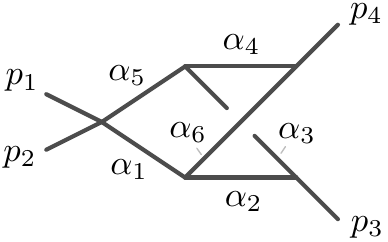}
         \caption{\label{fig:npltrb}Non-planar triangle-box diagram, $G=\mathtt{npltrb}$ (Thm.~\ref{thm2})}
     \end{subfigure}
     \\ \vspace{1.5em}
     \begin{subfigure}[c]{0.3\textwidth}
        \centering
        \includegraphics[scale=0.8]{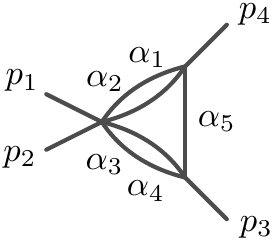}
         \caption{\label{fig:tdetri}Twice doubled-edge triangle diagram, $G=\mathtt{tdetri}$ (Thm.~\ref{thm2})}
     \end{subfigure}
     \begin{subfigure}[c]{0.3\textwidth}
        \centering
        \includegraphics[scale=0.8]{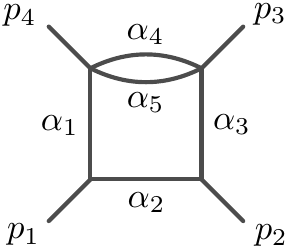}
        \caption{\label{fig:debox}Doubled-edge box diagram, $G=\mathtt{debox}$ (Thm.~\ref{thm2})}
     \end{subfigure}
     \begin{subfigure}[c]{0.3\textwidth}
        \centering
        \includegraphics[scale=0.8]{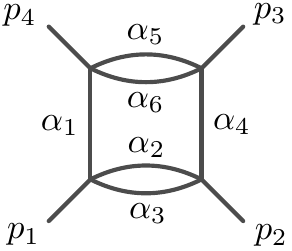}
        \caption{\label{fig:tdebox}Twice doubled-edge box diagram, $G=\mathtt{tdebox}$ (Thm.~\ref{thm2})}
     \end{subfigure}
     \\ \vspace{1.5em}
     \begin{subfigure}[c]{0.3\textwidth}
         \centering
         \includegraphics[scale=0.8]{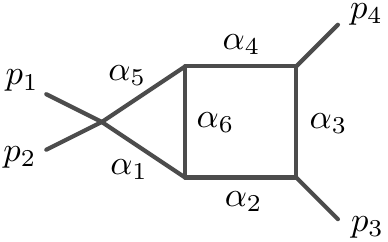}
         \caption{\label{fig:pltrb}Planar triangle-box diagram, $G=\mathtt{pltrb}$ (Sec.~\ref{subsec:generalparameters}, Thm.~\ref{thm2})}
     \end{subfigure}
     \begin{subfigure}[c]{0.3\textwidth}
         \centering
         \includegraphics[scale=0.8]{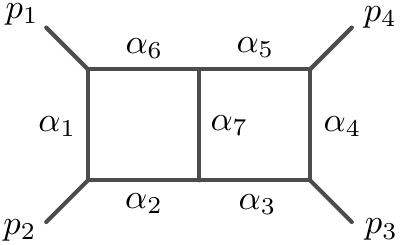}
         \caption{\label{fig:dbox}Double-box diagram,\\ $G=\mathtt{dbox}$ (Thm.~\ref{thm2})}
     \end{subfigure}
     \begin{subfigure}[c]{0.3\textwidth}
         \centering
         \includegraphics[scale=0.8]{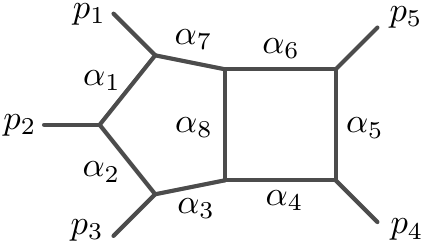}
         \caption{\label{fig:pbox}Penta-box diagram,\\ $G=\mathtt{pentb}$ (Ex.~\ref{ex:pentb})}
     \end{subfigure}
     \caption{\label{fig:diagrams}Summary of the Feynman diagrams considered in this paper.}
\end{figure}

In Sec.~\ref{sec:computing}, we introduce nonlinear algebra methods for computing Landau discriminants, constituting our main computational results. Sec.~\ref{subsec:elimination} gives a brief discussion on symbolic elimination methods using \texttt{Macaulay2} \cite{M2}, with the acnode diagram from Fig.~\ref{fig:acn} as a running example. In Sec.~\ref{subsec:sampling} we discuss a numerical approach based on homotopy continuation. It combines (pseudo-)witness sets of linear projections \cite{hauenstein2010witness}, irreducible decomposition via monodromy \cite{duff2019solving} and numerical interpolation. It is illustrated in Fig.~\ref{fig:intro}. 
To each point in the projectivized kinematic space $\PP(\K_G)$ we attach the space $X$ of admissible Schwinger parameters. The incidence variety $Y$ in the product $X \times \PP(\K_G)$ is defined by the Landau equations. The closure of its projection onto $\PP(\K_G)$ is the Landau discriminant $\nabla_G$ (both $Y$ and $\nabla_G$ are shown in blue in Fig.~\ref{fig:intro}). The numerical sampling algorithm proceeds by repeatedly intersecting $Y$ with the pullback of a generic hyperplane in $\PP(\K_G)$ (orange). The intersection points are computed using homotopy techniques, and the projection gives $\deg \nabla_G$ points on $\nabla_G$ (red). 
Defining equations for the Landau discriminant are obtained by interpolating between the red sampling points in $\PP(\K_G)$.

In the remainder of Sec.~\ref{sec:computing} we apply this technique to the computation of Landau discriminants for a range of diagrams illustrated in Fig.~\ref{fig:diagrams}. With the exception of $\mathtt{A}_n$ and $\mathtt{B}_\E$, all the computations are cutting-edge and go far beyond the solutions of Landau equations computed using previous methods \cite{Eden:1966dnq,todorov2014analytic}. These results are summarized in Thm.~\ref{thm2} and Ex.~\ref{ex:env}-\ref{ex:pentb}. The most complicated diagram we consider is the envelope diagram $\mathtt{env}$ from Fig.~\ref{fig:env}, whose Landau discriminant $\nabla_{\mathtt{env}}$ is a reducible surface of degree $45$ in the projectivized kinematic space $\PP^3$. Another large example is the penta-box diagram from Fig.~\ref{fig:pbox}, whose Landau discriminant is a degree $12$ $5$-fold in $\PP^6$. A summary of degrees and dimensions of Landau discriminants for all the diagrams from Fig.~\ref{fig:diagrams} is provided in Tab.~\ref{tab:dimdeg}. This part of the work is concluded with Sec.~\ref{sec:CN-analysis}, where we explain how to visualize the discriminant on kinematic subspaces and quantitatively determine which singularities are physically relevant.

\begin{figure}[!t]
	\centering
	\includegraphics[scale=1]{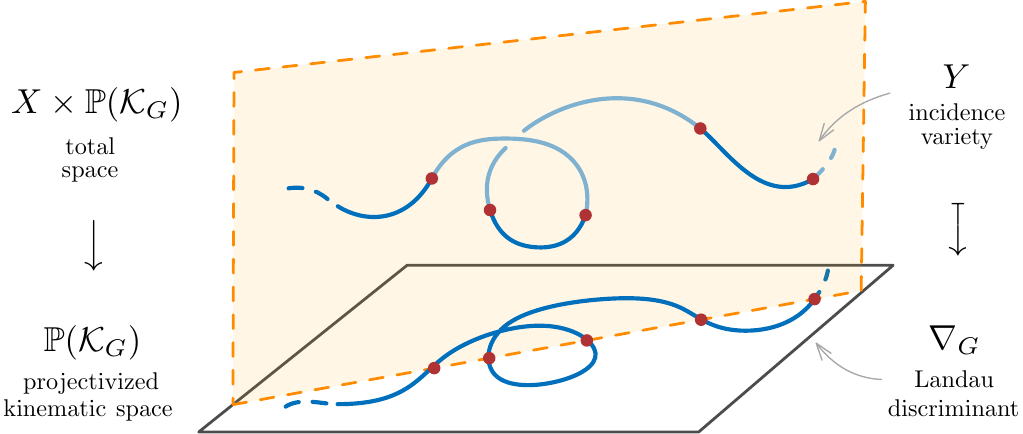}
	\caption{\label{fig:intro}Diagrammatic illustration of the numerical sampling algorithm.}
\end{figure}

In order to study degenerate solutions of Landau equations and bound the the degree of the Landau discriminant, we introduce Landau polytopes $\mathbf{L}_G$ in Sec.~\ref{sec:polytopes}. In particular, in Prop.~\ref{prop:L-boundary} we explain how their facet structure can be determined in terms of the combinatorics of the Feynman diagram $G$. As an example, in Sec.~\ref{sec:banana-polytopes} we explicitly work out the facets of $\mathbf{L}_{\mathtt{B}_\E}$ for the banana diagrams $\mathtt{B}_\E$. We conjecture that all facets of the Landau polytope are labelled by certain subdiagrams of $G$, see Conj.~\ref{conj:facets}. This conjecture is verified for all examples in Fig.~\ref{fig:diagrams}.

In Sec.~\ref{sec:counting}, we apply homotopy continuation techniques to a different problem related to Feynman integrals, namely that of counting the number of master integrals. After explaining the formulation of Feynman integrals in analytic regularization in terms of twisted cohomologies \cite{Mastrolia:2018uzb}, we formalize previous results from the literature as Thm.~\ref{thm:thm2} stating the connection between the signed Euler characteristic and the number of critical points of the potential $W_G$ associated to a given $G$. We provide a simple \texttt{Julia} routine, which proves a lower bound on the signed Euler characteristic via certified numerical computations.

Sec.~\ref{sec:conclusion} contains concluding remarks and a list of research directions.

\section{\label{sec:discriminant}Landau Analysis of Feynman Integrals}

In this section, after recalling the definition of Feynman integrals and Symanzik polynomials in Sec.~\ref{subsec:feynman}-\ref{sec:Symanzik}, we present the formulation of the \emph{Landau equations} related to a Feynman diagram in Sec.~\ref{sec:landaueq}. This sets the stage for the definition of the \emph{Landau discriminant} in Sec.~\ref{subsec:discriminant} and a discussion of its first properties. In Sec.~\ref{sec:ngon}-\ref{sec:banana} we present several examples. 

\subsection{Feynman Integrals} \label{subsec:feynman}
The motivation for this work comes from studying analyticity properties of \emph{Feynman integrals}. We briefly recall their definition here. For the reader who is not familiar with scattering amplitudes, we provide a more basic introduction in App.~\ref{sec:appendix}.
For our purposes, a Feynman diagram $G$ is a connected undirected graph with $\E_G$ internal edges, $n_G$ external legs (open edges), as well as $\L_G$ independent loops. Examples are shown in Fig.~\ref{fig:diagrams}.
Such diagrams encode interaction patterns in a scattering process of the external particles mediated by the internal ones.
Each external leg in $G$ corresponds to one of these particles and carries a \emph{momentum vector} $p_i \in \R^{1,\D-1}$ for $i=1,2,\ldots,n_G$.
Here $\R^{1,\D-1}$ represents \emph{$\D$-dimensional Minkowski momentum space} in the mostly-minus signature. Concretely, this means that the $p_i$ are vectors with $\D$ real entries, and the Minkowski pairing of $p = (p^{(0)}, p^{(1)}, \ldots, p^{(\D-1)})$, $q = (q^{(0)}, q^{(1)}, \ldots, q^{(\D-1)}) \in \R^{1,\D-1}$ is given by $p \cdot q = p^{(0)}q^{(0)} - p^{(1)} q^{(1)} - \cdots - p^{(\D-1)} q^{(\D-1)}$. We will use the standard notation $p^2 := p \, \cdot \, p$. Most physical interest lies in $\D=4$.
To each internal edge $e$ we associate a \emph{Schwinger parameter} $\alpha_e \in \C^\ast$, which needs to be integrated out, and a mass $m_e \in \R_+ \cup \{0\}$, indexed by $e=1,2,\ldots,\E_G$. 

In the \emph{worldline formalism} the \emph{Feynman integral} associated to a Feynman diagram $G$ is an integral over positive values of all the $\E_G$ Schwinger parameters, $\alpha_e \in \R_+$. Roughly speaking, it takes the following form:
\be\label{eq:Feynman0}
\int_{\R_+^{\E_G}}  \frac{\d^{\E_G}\alpha}{\U^{\D/2}_G} \N_G \exp \left[ \frac{i}{\hbar} \V_G \right].
\ee
The measure is simply $\d^{\E_G}\alpha := \prod_{e=1}^{\E_G} \d \alpha_e$. The exponent involves the function $\V_G$, which has the interpretation of the worldline action for $G$ after analytic continuation. It can be written as the ratio
\be\label{eq:V}
\V_G := \F_G / \U_G
\ee
of two polynomials, $\U_G$ and $\F_G$, called \emph{Symanzik polynomials}. Both $\U_G$ and $\F_G$ are homogeneous in the Schwinger parameters, and $\F_G$ involves the momentum vectors $p_i$ and the internal masses $m_e$. We will define these polynomials in terms of the combinatorics of the diagram $G$ in Sec.~\ref{sec:Symanzik}. They will play a central role in this work. The factor $\N_G$ is a polynomial in the Schwinger parameters (and typically other external data such as polarization vectors or color factors) that encodes the physics of the particle interactions. It will not be important for our purposes. Eq.~\eqref{eq:Feynman0} also features $i=\sqrt{-1}$ and the reduced Planck constant $\hbar>0$. 

For completeness, let us mention that the integrals of the type \eqref{eq:Feynman0} do not converge in general. Their more precise definition needs small deformations of the contour, which can be implemented by inserting an infinitesimal parameter $\epsilon>0$ (the Feynman $i\epsilon$ factor) in the exponent, ensuring exponential suppression of the integrand as each $\alpha_e \to \infty$:
\be\label{eq:Feynman}
\I_G^{\mathrm{reg}} := \int_{\R_+^{\E_G}}  \frac{\d^{\E_G}\alpha}{\U^{\D/2}_G} \N_G \RR^{\mathrm{reg}}_G \exp \left[ \frac{i}{\hbar} \left( \V_G + i\epsilon \textstyle\sum_{e=1}^{\E_G} \alpha_e \right) \right].
\ee
In Sec.~\ref{sec:counting} we will give an equivalent procedure with contour deformations.
Additionally, one needs to employ a regularization procedure, summarized in the factor $\RR_G^{\mathrm{reg}}$. Two popular choices are \emph{dimensional regularization} (dim), corresponding to a shift $\D \to \D - 2\eps$ for a parameter $\eps \in \C \setminus \Z$ (not related to $\epsilon$), and \emph{analytic regularization} (an), which introduces small parameters $\delta_e \in \C\setminus \Z$ for each edge $e$,
\be
\RR^{\mathrm{reg}}_G := \begin{dcases} \U_G^{\eps} \qquad&\text{reg}=\text{dim},\\
\textstyle\prod_{e=1}^{\E_G} \alpha_e^{\delta_e} \qquad&\text{reg}=\text{an}.
\end{dcases}
\ee
As a result, \eqref{eq:Feynman} becomes a meromorphic function of the regulators, which after summing over all Feynman diagrams contributing to a given scattering process are taken to zero.
A more precise definition of Feynman integrals in analytic regularization will be given later in Sec.~\ref{sec:counting}, and it will not play any role in the intervening sections.

In this paper, we will not be concerned with the important problem of \emph{evaluating} Feynman integrals. Rather, we are interested in investigating singularities of ${\cal I}_G^{{\rm reg}}$ as a function of the kinematic data. As we will see, this is a problem from \emph{nonlinear algebra} \cite{michalek2021invitation}.

\subsection{\label{sec:Symanzik}Symanzik Polynomials and the Kinematic Space}

A central role in the study of singularities of Feynman integrals \eqref{eq:Feynman} is played by the Symanzik polynomials $\U_G$ and $\F_G$. In this section, we present their definitions in terms of concepts from graph theory. This will establish Lorentz invariance of the Feynman integrals \eqref{eq:Feynman}, leading us to the definition of kinematic space. 

Let $G$ be a connected Feynman diagram. A \emph{spanning tree} in $G$ is a connected subset of $\E_G - \L_G$ internal edges that contains all vertices of $G$.
We write ${\cal T}_G$ for the set of all spanning trees of $G$. 

\begin{definition}[First Symanzik polynomial]
The \emph{first Symanzik polynomial} $\U_G$ is
\be
\U_G := \sum_{T \in {\cal T}_G} \prod_{e \notin T} \alpha_e, 
\ee
where the product runs over the $\L_G$ internal edges that were removed from $G$ to obtain the spanning tree $T$.
\end{definition}

A \emph{spanning $2$-tree} $T_1 \sqcup T_2$ in $G$ is a disjoint union (with respect to the sets of edges and vertices) of two trees $T_1$ and $T_2$ in $G$, containing all of its vertices. For a subset of external legs $S$, let ${\cal T}_{G,S}$ denote the set of all spanning $2$-trees $T_S \sqcup T_{\bar{S}}$ in $G$, such that $T_S$ contains the vertices attached to the external legs labeled by $S$, and no vertices attached to external legs labeled by the complementary set $\bar{S} = \{1,2,\ldots, n_G\} \setminus S$, and similarly for $T_{\bar{S}}$. 

\begin{definition}[Second Symanzik polynomial]
For $S \subset \{1, \ldots, n_G\}$, define
\be\label{eq:FGS}
\F_{G,S} := \sum_{T_S \sqcup T_{\bar{S}} \in {\cal T}_{G,S}} \prod_{e \notin T_S, T_{\bar{S}}} \alpha_e,
\ee
where the product is over all the $\L_G + 1$ edges that needed to be removed in order obtain the $2$-tree $T_S \sqcup T_{\bar{S}}$.
The second Symanzik polynomial $\F_G$ is 
\be\label{eq:FG}
\F_G := \sum_{\{S,\bar{S}\} \in {\cal P}_G} \!\!\!(\textstyle\sum_{i \in S} p_i)^2\, \F_{G,S} -  \left( \sum_{e=1}^{\E_G} m_e^2 \alpha_e\right) \U_G.
\ee
Here ${\cal P}_G$ denotes the set of all partitions of the $n_G$ external legs into two disjoint non-empty sets $S$ and $\bar{S}$. The number of terms in the first sum is $|{\cal P}_G| = 2^{n_G - 1} - 1$. Each $\F_{G,S}$ is weighted with the Minkowski norm $(\textstyle\sum_{i \in S} p_i)^2$ of the total external momentum flowing into $S$.
\end{definition}

The first (second) Symanzik polynomial is a homogeneous polynomial of degree $\L_G$ ($\L_G+1$) in the Schwinger parameters $\alpha_e$, at most linear (quadratic) in each individual $\alpha_e$. Only $\F_G$ depends on the external parameters $p_i, m_e$.

Using the momentum conservation constraint $\sum_{i = 1}^{n_G} p_i = 0$, one can show that the number of independent Lorentz invariants $(\sum_{i\in I}p_i)^2$ is equal to $n_G (n_G - 1)/2$ whenever $n_G \leq \D+1$. We can distinguish between the case $|I| = 1$, in which $p_i^2 =: M_i^2$ is the squared mass of the $i$-th external particle, and the cases with $2 \leq |I| \leq n_G {-}2$, in which case $(\sum_{i\in I}p_i)^2$ are called \emph{Mandelstam invariants}. There is a canonical choice of basis for such invariants, which we summarize in the following.

\begin{definition}[Kinematic space]
For any subset of external variables $I \subset \{1,2,\ldots,n_G\}$ such that $2 \leq |I| \leq n_G {-}2$, the \emph{Mandelstam invariant} $s_I$ is
\be
s_{I} := (\textstyle\sum_{i\in I}p_i)^2.
\ee
A canonical basis ${\cal B}_{n_G}$ of Mandelstam invariants is given by all such $I$ that consist of consecutive labels in the cyclic ordering $(1,2,\ldots,n_G)$. For the squared masses of external and internal particles we introduce respectively
\be
\M_i := M_i^2, \qquad \m_e := m_e^2.
\ee
The \emph{kinematic space} $\K_G$ for a given diagram $G$ is defined to be
\begin{align}
{\cal K}_{G} := \{ s_I \in \C \;|\; I \in {\cal B}_{n_G} \} &\times \{ \M_i \in \C \;|\; i=1,2,\ldots,n_G\}\times \{\m_e \in \C \;|\; e=1,2,\ldots,E_G\}.\nn
\end{align}
\end{definition}
Under the assumption $n_G \leq \D + 1$ (which we will make throughout, see Rk.~\ref{rem:nGD+1}), the kinematic space is an affine space of dimension $\dim_{\C} {\cal K}_G = \frac{n_G(n_G{-}1)}{2} + \E_G$.
Feynman integrals are multi-valued functions on $\K_G$. The goal of this paper is to determine varieties in $\K_G$ along which the Feynman integral \eqref{eq:Feynman} can develop singularities. We note that the notation $\M_{i}$, $\m_e$ for the squared masses is non-standard in the physics literature, but we employ it here to make $\F_G$ homogeneous in the kinematic invariants and to simplify the expressions given later in the text. The physically most interesting setup is to fix all the $\M_i$ and $\m_e$ to specific constant values and consider the behavior of \eqref{eq:Feynman} as the Mandelstam invariants $s_I$ are varied. Alternatively, one may interpret $\M_i$ as the norm of the total (off-shell) momentum attached to a given vertex, in which case it is interesting to also vary it.
\begin{remark} \label{rem:nGD+1}
When $n_G > \D+1$, there are additional constraints on the Mandelstam invariants coming from the fact that the $n_G$ momentum vectors $p_i$ are embedded in $\R^{1,\D-1}$. Concretely, these constraints are given by the vanishing of all the $(\D{+}1) \times (\D{+}1)$ minors of the Gram matrix with $(i,j)$-entry $p_i {\cdot} p_j$. Since all the examples in this work have $n_G \leq 5$ external legs, these constraints do not play any role in $\D=4$ dimensions.
\end{remark}

\begin{example}[Four-point scattering]
For diagrams $G$ with with $n_G=4$ (four-point scattering), the kinematic space is given by
\be
\K_{G} = \{ (s,t,\M_1,\M_2,\M_3,\M_4,\m_1,\m_2,\ldots,\m_{\E_G}) \in \C^{\E_G + 6} \},
\ee
where the two independent Mandelstam invariants are given by
\be
s := s_{12} = (p_1+p_2)^2, \qquad t := s_{23} = (p_2 + p_3)^2
\ee
in the conventional notation. For example, the Mandelstam invariant $u := s_{13}$ is not independent because using momentum conservation $\sum_{i=1}^{4} p_i = 0$ we can express it as
\be
u = (p_1 + p_3)^2 = \M_1 + \M_2 + \M_3 + \M_4 - s - t.
\ee
The kinematic space is therefore $(\E_G {+} 6)$-dimensional. In the above notation, the second Symanzik polynomial reads
\begin{align}
\F_{G} = &\; s (\F_{G,12} - \F_{G,13}) + t (\F_{G,23} - \F_{G,13})+{\textstyle\sum}_{i=1}^{4} \M_i (\F_{G,i} + \F_{G,13}) - \left( \textstyle\sum_{e=1}^{\E_G} \m_e \alpha_e \right) \U_G.\nn
\end{align}
\end{example}

We follow with two simple examples of diagrams with $n_G = 4$.

\begin{example}[Box diagram] \label{ex:box}
As the first example we consider the box diagram, $G=\texttt{A}_4$, as illustrated in Fig.~\ref{fig:An}. We have
\be
n_{\texttt{A}_4} = 4, \qquad \E_{\texttt{A}_4} = 4, \qquad \L_{\texttt{A}_4} = 1.
\ee
The first Symanzik polynomial is a sum over four spanning trees obtained by removing a single edge,
\be
\U_{\texttt{A}_4} = \alpha_1 + \alpha_2 + \alpha_3 + \alpha_4.
\ee
Similarly, there are six spanning $2$-trees obtained by removing two edges in all possible combinations, which gives the second Symanzik polynomial
\begin{align}
\F_{\texttt{A}_4} = s \alpha_1 \alpha_3 + t \alpha_2 \alpha_4 + \M_1 \alpha_1 \alpha_2 &+ \M_2 \alpha_2 \alpha_3 + \M_3 \alpha_3 \alpha_4 + \M_4 \alpha_4 \alpha_1 \\
& -(\m_1 \alpha_1 + \m_2 \alpha_2 + \m_3 \alpha_3 + \m_4 \alpha_4) \U_{\texttt{A}_4}.\nn
\end{align}
\end{example}

\begin{example}[Sunrise diagram]
Let us consider the sunrise diagram, $G=\texttt{B}_3$, illustrated in Fig.~\ref{fig:BE}. In this case we have
\be
n_{\texttt{B}_3} = 4,\qquad \E_{\texttt{B}_3} = 3,\qquad \L_{\texttt{B}_3} = 2.
\ee
The three spanning trees are obtained by removing two edges in all possible combinations, giving
\be
\U_{\texttt{B}_3} = \alpha_1 \alpha_2 + \alpha_2 \alpha_3 + \alpha_3 \alpha_1.
\ee
There is a unique spanning $2$-tree corresponding to the removal of all three edges,
\be
\F_{\texttt{B}_3} = s \alpha_1 \alpha_2 \alpha_3 - (\m_1 \alpha_1 + \m_2 \alpha_2 + \m_3 \alpha_3) \U_{\texttt{B}_3}.
\ee
In particular, the answer is $t$-independent since there is no way of separating the external legs $\{2,3\}$ from $\{1,4\}$.
\end{example}

\subsection{Saddle Points and Landau Equations} \label{sec:landaueq}

Regularization of Feynman integrals overcomes global divergence issues (such as ultraviolet or infrared divergences). However, the question whether the integral exists for a specific point in the kinematic space $\K_G$ still remains. We would like to determine the singular locus of \eqref{eq:Feynman} in $\K_G$. Physically, such singularities correspond to the classical limit, $\hbar\to0$, and are known as \emph{anomalous thresholds}; see, e.g., \cite{Eden:1966dnq}. To be more precise, by singularities we mean branch points or poles of the Feynman integral in the kinematic space. They are determined by the critical points of $\V_G$ and have the interpretation of intermediate particles becoming long-lived on-shell states. To set the background for the remainder of this paper, in this subsection we make the connection between singularities of \eqref{eq:Feynman} and saddle points of $\V_G$ more precise.

Recall that $\V_G$, as defined in \eqref{eq:V}, is a homogeneous rational function in the Schwinger parameters of degree $1$. In what follows, we assume that the parameters $\M_i, \m_e, s_I$ are fixed and we are interested in finding $\alpha^\ast$ such that 
\be\label{eq:saddle}
\frac{\partial \V_G}{\partial \alpha_e}(\alpha^\ast) = 0, \quad e = 1,2, \ldots, \E_G.
\ee
This is a system of $\E_G$ rational function equations in $\E_G$ homogeneous variables.
Homogeneity implies that the solutions $\alpha^\ast$
live naturally in a projective space: if $\alpha^\ast=(\alpha_1^\ast,\cdots,\alpha_{\E_G}^\ast)$ is a solution, then so is $\lambda \alpha^\ast=(\lambda\alpha_1^\ast,\cdots,\lambda\alpha_{\E_G}^\ast)$ for any $\lambda \in \C^\ast$. Let us first focus on critical points with nonzero coordinates, i.e.,
\be\label{eq:alpha-toric}
\alpha^* = (\alpha_1^\ast: \alpha_2^\ast : \cdots : \alpha_{\E_G}^\ast) \, \in \, \PP^{\E_G-1} \setminus V_{\PP^{\E_G-1}} (\alpha_1\cdots \alpha_e \U_G),
\ee
where $V_A(f)$ denotes the subvariety given by $\{f=0\}$ in $A$.
Here we also excluded $\U_G = 0$ and we will come back to this point shortly.
Note that for all these solutions we have $\V_G(\alpha^\ast) = 0$, since Euler's rule gives
\be
\V_G({\alpha}^\ast) = \sum_{e=1}^{\E_G} \alpha_e^\ast \frac{\partial \V_G}{\partial \alpha_e}({\alpha}^\ast) = 0.
\ee
Since \eqref{eq:saddle} is a system of $\E_G$ constraints on $(\E_G{-}1)$-dimensional space of Schwinger parameters $\alpha_e$, one expects that there are no solutions for generic kinematic parameters in $\K_G$. This is indeed the case, as we will show in Thm.~\ref{thm1}.

In order to see why solutions of \eqref{eq:saddle} lead to divergences of the Feynman integral, let us explicitly perform the integration over the projective scale $\lambda$. Without loss of generality we can assume that $\N_G$ is homogeneous with degree $m_G$. After the change of variables $\alpha_e \to \lambda \alpha_e$ and setting $\alpha_{\E_G} = 1$ we have
\begin{gather}
\U_G \to \lambda^{\L_G} \U_G, \qquad \V_G \to \lambda \V_G, \qquad \N_G \to \lambda^{m_G} \N_G,\\
\d^{\E_G}\alpha \to \lambda^{\E_G - 1} \d\lambda\, \d^{\E_G-1}\alpha_e 
\end{gather}
as well as
\be
\mathrm{R}_G^{\text{reg}} \to \lambda^{\delta} \mathrm{R}_G^{\text{reg}},\qquad \delta := \begin{dcases} \eps \L_G \qquad&\text{reg}=\text{dim},\\
\textstyle\sum_{e=1}^{\E_G} \delta_e \qquad&\text{reg}=\text{an}.
\end{dcases}
\ee
Introducing the \emph{degree of divergence} $d_G := m_G + \E_G - \L_G \D/2 + \delta$, the Feynman integral \eqref{eq:Feynman} becomes
\begin{align}
\I_{G}^{\text{reg}} &= \int_{\R_+^{\E_G-1}}  \frac{\d^{\E_G-1}\alpha}{\U^{\D/2}_G} \N_G \RR^{\mathrm{reg}}_G \int_{\R_+} \frac{\d\lambda}{\lambda^{1-d_G}} \exp \left[ \frac{i \lambda}{\hbar} \left( \V_G + i\epsilon \textstyle\sum_{e=1}^{\E_G} \alpha_e \right) \right]\nn\\
&= (i\hbar)^{-d_G} \Gamma(d_G) \int_{\R_+^{\E_G-1}}  \frac{\d^{\E_G-1}\alpha}{\U^{\D/2}_G} \frac{\N_G \RR^{\mathrm{reg}}_G}{(\V_G + i\epsilon \textstyle\sum_{e=1}^{\E_G} \alpha_e)^{d_G}}.\label{eq:IG}
\end{align}
Therefore, on a saddle point the $\epsilon \to 0^+$ limit gives a singularity if $d_G > 0$. Note that $\V_G = 0$ by itself does not imply a singularity because one can deform the integration contour to avoid it. An explicit deformation of this type is given later in \eqref{eq:Gamma-G}. See Ex.~\ref{ex:pinch} for an illustration in the case of the bubble diagram (Fig.~\ref{fig:bubble}).

The saddle point conditions \eqref{eq:saddle} are known as the \emph{leading Landau equations} \cite{Landau:1959fi}, which were first written in this form by Nakanishi in \cite{10.1143/PTP.22.128}. Traditionally, they have been associated with pinch singularities of the integration contour; see, e.g., \cite{Eden:1966dnq}. An explicit example is given in Ex.~\ref{ex:pinch}. The physical interpretation in terms of saddle points in the worldline formalism was given in \cite{Mizera:2021ujs,Mizera:2021fap}.

Singularities of regularized Feynman integrals can also come from non-toric saddle points, i.e., those for which there is a set $E$ of one or more edges such that $\alpha_e = 0$ for all $e\in E$. The equations \eqref{eq:saddle} where $\partial \F_G /\partial \alpha_e = 0$ is substituted for $\alpha_e = 0$ for all $e \in E$ are called \emph{subleading Landau equations}. They correspond to leading Landau equations for a diagram $G/E$ obtained from $G$ by contracting all the edges in $E$ \cite{Landau:1959fi}. It is also possible that $\alpha_e = 0$ and $\partial \F_G / \partial \alpha_e = 0$ simultaneously for a subset of edges, but we do not study such solutions in this work. We parenthetically remark that also solutions to $\U_G = 0$ are known to correspond to \emph{second-type} Landau singularities \cite{doi:10.1063/1.1724262}, which physically correspond to collinear divergences in the external kinematics.

\begin{example}\label{ex:subleading}
Consider the diagram $G = \mathtt{env}$ from Fig.~\ref{fig:env}. Its subleading Landau singularities can be determined as follows. Shrinking any of its $6$ internal edges leads to a twice doubled-edge triangle diagram of the same topology as $\mathtt{tdetri}$ from Fig.~\ref{fig:tdetri}. For each of them, we can further shrink the edge $5$ (in the notation of Fig.~\ref{fig:tdetri}), resulting in a banana integral with four edges, $\mathtt{B}_4$ from Fig.~\ref{fig:BE}. Shrinking of any other set of edges leads to diagrams that do not depend on any Mandelstam invariants, which we do not consider here.
\end{example}

Without loss of generality, we can focus only on the leading Landau equations and their toric solutions \eqref{eq:alpha-toric}. (Subleading Landau equations of $G$ are the same as the leading ones for all possible $G/E$.) Since this excludes solutions with a vanishing first Symanzik polynomial $\U_G$, the saddle point conditions are equivalent to
\be\label{eq:LE}
\frac{\partial \F_G(\alpha_e^\ast)}{\partial \alpha_e} = 0\quad\text{for}\quad e=1,2,\ldots,\E_G, \qquad \U_G(\alpha_e^\ast) \neq 0. 
\ee
We will refer to these equations (together with the inequation) as the \emph{Landau equations}, omitting the word ``leading'' for conciseness. We are interested in characterizing the kinematic parameters for which the Landau equations have solutions. Below, we will formalize this question in terms of the \emph{Landau discriminant}.

\subsection{\label{subsec:discriminant}Landau Discriminants}

Once we fix some kinematic data $\M_i, \m_e, s_I$, i.e., we fix a point $q \in \K_G$, the leading Landau singularities are points in the variety
\be \label{eq:defX}
X := \PP^{\E_G-1} \setminus V_{\PP^{\E_G-1}}(\alpha_1 \alpha_2 \cdots \alpha_{\E_G} \U_G).
\ee
They correspond to the points in $X$ where the projective hypersurface $\{\F_G = 0\} \subset \PP^{\E_G-1}$ is singular. This imposes $\E_G$ conditions on a $(\E_G{-}1)$-dimensional space. One could expect that for general parameters $q$, the solution set is empty. In this section, we show that this is indeed what happens. 

By homogeneity of $\F_G$ in the parameters, it is natural and convenient to work in the projectivized kinematic space
$\PP(\K_G) := \K_G/\sim$, 
where $q \sim \lambda q$ for all $\lambda \in \C^*$. 
First, let us define the \emph{incidence variety}
\begin{equation} \label{eq:incidence}
    Y := \left \{ (\alpha, q) \in X \times \PP(\K_G)  ~\bigg|~ \frac{\partial \F_G}{\partial \alpha_e}(\alpha; q) = 0,~ e = 1,2, \ldots, \E_{G} \right \}.
\end{equation}
Here the notation $\F_G(\alpha;q)$ makes the dependence of $\F_G$ on the kinematic parameters explicit. The variety $Y$ has two natural projection maps associated to it:
\be
\pi_X:\; Y \rightarrow X, \qquad \pi_{\PP(\K_G)}:\; Y \rightarrow \PP(\K_G).
\ee
These are given by
\be
\pi_X(\alpha,q) = \alpha, \qquad \pi_{\PP(\K_G)}(\alpha,q) = q.
\ee
The solutions to the Landau equations with kinematic data $q$ are the points in $(\pi_X \circ \pi_{\PP(\K_G)}^{-1})(q)$. We are interested in finding parameters $q \in \PP(\K_G)$ for which the Landau equations have solutions. That is, for which $(\pi_X \circ \pi_{\PP(\K_G)}^{-1})(q) \neq \emptyset$. 

\begin{definition}[Landau discriminant]\label{def:discriminant}
The \emph{Landau discriminant} $\nabla_G$ of a Feynman diagram $G$ is the subvariety of $\PP(\K_G)$ given by the Zariski closure
\be\label{eq:nablaG}
\nabla_G := \overline{\pi_{\PP(\K_G)}(Y)} \subset \PP(\K_G)
\ee
of $\pi_{\PP(\K_G)}(Y)$ in $\PP(\K_G)$. If $\nabla_G$ is a hypersurface, its defining polynomial $\Delta_G$ ($\nabla_G =: \{ \Delta_G = 0 \})$, which is unique up to scaling, is called the \emph{Landau discriminant polynomial}. If $\nabla_G$ has codimension greater than 1, we set $\Delta_G = 1$.
\end{definition}

\begin{theorem}\label{thm1}
For any Feynman diagram $G$, the Landau discriminant $\nabla_G$ is an irreducible, proper subvariety of $\PP(\K_G)$. 
\end{theorem}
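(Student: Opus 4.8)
The plan is to exhibit $\nabla_G$ as the closure of the image of an \emph{irreducible} variety under a morphism, since the continuous image (and hence the Zariski closure of the image) of an irreducible variety is irreducible. Concretely, I would argue that the incidence variety $Y$ defined in \eqref{eq:incidence} is irreducible, and then $\nabla_G = \overline{\pi_{\PP(\K_G)}(Y)}$ is automatically irreducible. To see that $Y$ is irreducible, I would study it through the \emph{other} projection, $\pi_X : Y \to X$. The key observation is that, for a fixed $\alpha \in X$ (so all $\alpha_e \neq 0$ and $\U_G(\alpha) \neq 0$), the conditions $\partial \F_G / \partial \alpha_e(\alpha; q) = 0$ are \emph{linear} and homogeneous in the kinematic parameters $q$, because $\F_G$ is linear in the $s_I, \M_i, \m_e$ (it is a sum $\sum_I s_I \F_{G,S} - \sum_e \m_e \alpha_e \U_G$ with coefficients depending only on $\alpha$). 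Hence each fiber $\pi_X^{-1}(\alpha)$ is a linear subspace of $\PP(\K_G)$, i.e. $\pi_X$ realizes $Y$ as (the projectivization of) a vector bundle, or at least a linear fibration, over the irreducible base $X$. A standard argument — a variety fibered over an irreducible base with irreducible equidimensional fibers is irreducible, and linear fibers are certainly irreducible — then gives irreducibility of $Y$, provided the fiber dimension is constant. If the fiber dimension jumps, one instead restricts to the open locus where it is minimal and takes closures; I would need to check that $Y$ has no component lying entirely over a proper closed subset of $X$, which follows because the linear system of conditions has rank bounded above uniformly and the "bad" locus in $X$ where extra conditions degenerate is closed, so the generic fiber dimension is attained on a dense open set whose preimage is dense in $Y$.

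Next, properness: I must show $\nabla_G \subsetneq \PP(\K_G)$, equivalently that $\pi_{\PP(\K_G)}$ is not dominant, equivalently (by the fiber-dimension theorem) that a \emph{generic} fiber $\pi_{\PP(\K_G)}^{-1}(q)$ is empty. This is precisely the heuristic dimension count already flagged in the text: the Landau equations impose $\E_G$ conditions on the $(\E_G - 1)$-dimensional space $X$, so one expects no solutions for generic $q$. To make this rigorous I would compute $\dim Y$ using the $\pi_X$ fibration: $\dim Y = \dim X + (\text{generic fiber dimension of } \pi_X) = (\E_G - 1) + (\dim \PP(\K_G) - r)$, where $r$ is the generic rank of the $\E_G \times \dim \K_G$ matrix of coefficients of the linear system $\{\partial \F_G/\partial \alpha_e = 0\}_e$ in the $q$-variables. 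It then suffices to show $r = \E_G$ generically — i.e. that for generic $\alpha$ the $\E_G$ linear forms $\partial_{\alpha_e}\F_G(\alpha; \cdot)$ in $q$ are linearly independent — because then $\dim Y = \dim \PP(\K_G) - 1 < \dim \PP(\K_G)$, forcing $\overline{\pi_{\PP(\K_G)}(Y)}$ to be a proper subvariety. Linear independence of those forms can be checked by evaluating at a convenient $\alpha$ (e.g. generic real positive values) and observing that the $\m_e$-columns alone already contribute a full-rank block: $\partial_{\alpha_e} \F_G$ contains the term $-\m_e \U_G - \sum_{e'}\m_{e'}\alpha_{e'} \cdot \partial_{\alpha_e}\U_G$, and the $\m$-dependence lets one isolate each equation, so the $\E_G \times \E_G$ submatrix of coefficients of $\m_1, \ldots, \m_{\E_G}$ is, up to the nonzero scalar $\U_G(\alpha)$, a matrix one can check is invertible for generic $\alpha$. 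Alternatively, and perhaps more cleanly, one can use the $\M_i$ and $s_I$ columns; the precise choice is a routine linear-algebra verification.

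The main obstacle I anticipate is the rank computation showing $r = \E_G$, i.e. that the Landau system genuinely behaves like $\E_G$ independent constraints rather than collapsing. The subtlety is that $\F_G$ is not an arbitrary polynomial but a very structured sum over spanning trees and $2$-trees, so a priori the forms $\partial_{\alpha_e}\F_G$ could conspire to be dependent; one must exploit enough of the combinatorial structure (or at least the presence of the $\m_e \alpha_e \U_G$ term, which appears with "generic" coefficients $\m_e$) to rule this out. A clean way to organize this is: restrict $\F_G$ to the sublocus of kinematic space where all $s_I = \M_i = 0$, so $\F_G = -(\sum_e \m_e \alpha_e)\U_G$; then $\partial_{\alpha_e}\F_G = -\m_e \U_G - (\sum_{e'}\m_{e'}\alpha_{e'})\partial_{\alpha_e}\U_G$, and the coefficient matrix in the variables $(\m_1, \dots, \m_{\E_G})$ is $-\U_G(\alpha)\, \mathrm{Id} - (\partial_{\alpha_{e}}\U_G(\alpha))_e \cdot (\alpha_{e'})_{e'}^{\!\top}$, a rank-one perturbation of a nonzero multiple of the identity, which is invertible for generic $\alpha$ (its determinant is $(-\U_G)^{\E_G}(1 - \sum_e \alpha_e \partial_{\alpha_e}\U_G / \U_G) = (-\U_G)^{\E_G}(1 - \L_G)$ by Euler's identity, nonzero whenever $\L_G \neq 1$; for $\L_G = 1$ one instead uses the $s_I$-columns, where $\F_{G,S}$ is multilinear of degree $\L_G + 1 = 2$ and a similar direct check applies). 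Assembling these cases yields $r = \E_G$ in general, completing both the dimension bound and the properness claim, and together with the irreducibility of $Y$ this proves the theorem.
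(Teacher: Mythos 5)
Your approach coincides with the paper's: realize $Y$ as a linear fibration over the irreducible base $X$ via $\pi_X$, conclude $Y$ is irreducible, and bound $\dim Y$ by showing that the coefficient matrix $M$ of the linear system $\partial\F_G(\alpha;\cdot)/\partial\alpha_e = 0$ in the kinematic variables has rank $\E_G$; the paper likewise isolates the $\m_e$-columns of $M$ and applies Euler's identity. However, the rank computation at the heart of your argument has a sign error that changes the conclusion. With $u=(\partial_{\alpha_e}\U_G(\alpha))_e$ and $v=(\alpha_e)_e$, the $\m$-block is indeed $-\U_G(\alpha)\,\mathrm{Id}-u v^{\top}$, but applying the matrix-determinant lemma with $A=-\U_G(\alpha)\,\mathrm{Id}$ gives
\begin{equation}
\det\!\left(A-uv^{\top}\right)=\left(1-v^{\top}A^{-1}u\right)\det A=\left(1+\frac{1}{\U_G}\sum_{e}\alpha_e\,\partial_{\alpha_e}\U_G\right)(-\U_G)^{\E_G}=(1+\L_G)(-\U_G)^{\E_G},
\end{equation}
because $A^{-1}=-\U_G^{-1}\mathrm{Id}$ contributes an extra minus sign that you dropped; the factor is $1+\L_G$, not $1-\L_G$.

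This matters in two ways. First, the supposed degenerate case $\L_G=1$ (which would include \emph{all} one-loop diagrams) does not exist, and the patch you propose for it, switching to the $s_I$-columns and doing ``a similar direct check,'' is both unnecessary and left entirely unverified. Second, and more structurally, the corrected determinant $(1+\L_G)\U_G(\alpha)^{\E_G}$ is nonzero for \emph{every} $\alpha\in X$, not merely generic $\alpha$. Hence every fiber of $\pi_X$ is a linear space of the same dimension $\dim\K_G-\E_G$, and irreducibility of $Y$ follows from the standard result on fibrations with equidimensional irreducible fibers over an irreducible base. You flagged the risk that $Y$ might have components lying over a proper closed subset of $X$ if the fiber dimension jumps and sketched a workaround, but with the correct computation no jumping occurs and that discussion evaporates. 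So the strategy matches the paper's, but the one substantive calculation is wrong by a sign, and the complications you introduced downstream are artifacts of that error rather than genuine obstacles.
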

\begin{proof}
First, we show that the fibers $\pi_X^{-1}(\alpha)$ are equidimensional linear spaces of dimension $\dim \K_G-\E_G$. Since the image of $\pi_X$ is closed in $X$ by \cite[Ch.~8, \S 5, Thm.~6]{cox2013ideals}, the proof of \cite[Thm.~1.26]{MR3100243} implies that $Y$ is irreducible of dimension $\dim \K_G - 2$. The fiber $\pi_X^{-1}(\alpha)$ consists of the points $q \in \PP(\K_G)$ such that 
\begin{equation} \label{eq:linlandau}
\frac{\partial \F_G(\alpha;q)}{\partial \alpha_1} = \cdots = \frac{\partial \F_G(\alpha;q)}{\partial \alpha_{\E_G}} = 0.
\end{equation}
These are linear equations in the coordinates of $q$, so they can be written in matrix format $M q = 0$. We show that $M$ has full rank for any $\alpha \in X$ by computing the maximal minor of $M$ corresponding to the columns indexed by the internal mass parameters $\m_1, \m_2, \ldots, \m_{E_G}$. This is the determinant of 
\be\begin{pmatrix}
\U_G(\alpha) \\ & \U_G(\alpha) \\ & & \ddots \\ 
& & & \U_G(\alpha)
\end{pmatrix} +  \begin{pmatrix}
 \frac{\partial \U_G}{\partial \alpha_1}(\alpha) \\ \frac{\partial \U_G}{\partial \alpha_2}(\alpha)  \\ \vdots \\ \frac{\partial \U_G}{\partial \alpha_{\E_G}}(\alpha)
\end{pmatrix} \begin{pmatrix}
\alpha_1 & \alpha_2 & \cdots \alpha_{\E_G}
\end{pmatrix}.
\ee
Using the identity $\det (A + u v^\top) = (1 + v^\top A^{-1} u) \det(A)$, this minor evaluates to 
\be
\left ( 1 + \sum_{e = 1}^{\E_G} \frac{\alpha_e}{\U_G(\alpha)} \frac{\partial \U_G}{\partial \alpha_e}(\alpha) \right) \U_G(\alpha)^{\E_G} = (\L_G+1) \U_G(\alpha)^{\E_G} \neq 0.
\ee
Here we used the fact that $\U_G$ is homogeneous of degree $\L_G$ in combination with Euler's rule. This shows that $M$ has rank $\E_G$.

Since $Y$ is irreducible of dimension $\dim \K_G - 2$, $\pi_{\PP(\K_G)}(Y)$ is irreducible of codimension at least 1, and so is its Zariski closure $\nabla_G$. 
\end{proof}
\begin{remark}[Dimension] \label{rem:dimension}
Thm.~\ref{thm1} does not make claims about the dimension of $\nabla_G$. In most of our examples, it is a hypersurface. However, as we will see in Sec.~\ref{subsec:generalparameters}, the Landau discriminant may have codimension $> 1$. In this case, the projection map $\pi_{\PP(\K_G)}: Y \rightarrow \PP(\K_G)$ has positive dimensional fibers. 
\end{remark}

\begin{remark}[Degree] \label{rem:degree}
Presently, there is no closed formula for the degree of $\nabla_G$ in terms of the combinatorics of the diagram $G$. Classical results from the theory of discriminants and resultants provide upper bounds (Prop.~\ref{prop:degAdisc} and \ref{prop:degmixedres}). We postpone these results to Sec.~\ref{sec:polytopes}, as they involve polytopes related to Symanzik polynomials and Landau equations. 
\end{remark}

\begin{remark}[Restriction] \label{rem:restriction}
In practice, we may only be interested in kinematic parameters inside a subvariety $\LLL$ of $\PP(\K_G)$. We will mostly consider restrictions to a linear subspace $\LLL  = \PP^q \subset \PP(\K_G)$. This allows, for instance, to set $\m_e = \m, \M_i = \M$, which is the physically meaningful case where external and internal masses are all equal. We replace $Y$ by $Y \cap (X \times \LLL)$. The closure of the projection of $Y$ to $\LLL$ is, with a slight abuse of terminology, also called the Landau discriminant. We will denote it by $\nabla_G(\LLL)$. The projection map will be denoted by $\pi_\LLL: Y \rightarrow \LLL$. We warn the reader that Thm.~\ref{thm1} only makes claims about the most general case $\LLL = \PP(\K_G)$. Once we restrict to a smaller $\LLL$, $Y$ and its projection may become reducible. Moreover, we will see that we may have $\nabla_G(\LLL) \subsetneq \nabla_G \cap \LLL$ (see, e.g., Ex.~\ref{ex:box2}). For $\LLL = \PP^q$, we will write $\Delta_G(\LLL)$ for the defining equation of the codimension 1 component(s) of $\nabla_G(\LLL)$. In case $\codim \nabla_G(\LLL) > 1$, we set $\Delta_G(\LLL) = 1$.
\end{remark}

To illustrate Thm.~\ref{thm1} and Rk.~\ref{rem:restriction}, we compute Landau discriminants in two well-known examples: the families of \emph{one-loop} and \emph{banana diagrams}.

\subsection{\label{sec:ngon}One-Loop Diagrams}
For the family of one-loop diagrams with $n$ external legs, $\mathtt{A}_n$, illustrated in Fig.~\ref{fig:An}, we have $\L_{\mathtt{A}_n}=1$, $\E_{\mathtt{A}_n} = n$ and the Symanzik polynomials are given by
\be
\U_{\mathtt{A}_n} = \sum_{i=1}^{n} \alpha_i, \qquad \F_{\mathtt{A}_n} = \frac{1}{2}\sum_{i,j=1}^{n} \mathbf{Y}_{ij} \alpha_i \alpha_j.
\ee 
Here the entries of the $n {\times} n$ symmetric matrix $\mathbf{Y}$ are given by
\be
\mathbf{Y}_{ij} := (\textstyle\sum_{k=i}^{j-1}p_k)^2 - \m_i - \m_j \text{ when } i < j, \qquad \text{ and } \mathbf{Y}_{ii} = -2\m_i.
\ee
When $i=j{-}1$ we have $\mathbf{Y}_{i,i+1} = \M_i - \m_i - \m_{i+1}$, and otherwise $(\textstyle\sum_{k=i}^{j-1}p_k)^2 = s_{i,i+1,\ldots,j-1}$ are Mandelstam invariants. We write $\mathbf{Y}(\LLL)$ for the restriction of $\mathbf{Y}$ to a subspace $\LLL \subset \PP(\K_{\mathtt{A}_n})$. The leading Landau equations \eqref{eq:LE} impose
\be
\mathbf{Y}  {\alpha} = 0, \qquad \alpha_1 \alpha_2 \cdots \alpha_n \left (\sum_{i=1}^{n} \alpha_i \right) \neq 0.
\ee
The following statement is an immediate consequence.
\begin{proposition}[One-loop diagrams]
The Landau discriminant $\nabla_{\mathtt{A}_n}(\LLL)$ is the Zariski closure of the subset of $\LLL \subset \PP(\K_{\mathtt{A}_n})$ defined by
\be\label{eq:conditiononeloop}
\det \mathbf{Y}(\LLL) = 0, \quad \textup{ker} \mathbf{Y}(\LLL) \not \subset \{ \alpha_1 \alpha_2 \cdots \alpha_n (\alpha_1 + \cdots + \alpha_n) = 0 \} .
\ee
\end{proposition}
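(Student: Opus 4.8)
The plan is to unwind Def.~\ref{def:discriminant} and Rk.~\ref{rem:restriction} and reduce the claim to an elementary statement about the kernel of the symmetric matrix $\mathbf{Y}$. Since $\partial\F_{\mathtt{A}_n}/\partial\alpha_e = (\mathbf{Y}\alpha)_e$ and $\U_{\mathtt{A}_n}=\alpha_1+\cdots+\alpha_n$, we have $X = \PP^{n-1}\setminus V_{\PP^{n-1}}\bigl(\alpha_1\cdots\alpha_n(\alpha_1+\cdots+\alpha_n)\bigr)$, and the restricted incidence variety is
\be
Y\cap(X\times\LLL) \;=\; \bigl\{\,(\alpha,q)\in X\times\LLL \;\big|\; \mathbf{Y}(\LLL)(q)\,\alpha = 0\,\bigr\}.
\ee
By Rk.~\ref{rem:restriction}, $\nabla_{\mathtt{A}_n}(\LLL)$ is the Zariski closure of $\pi_\LLL\bigl(Y\cap(X\times\LLL)\bigr)$, which, by the displayed description, is the set of those $q\in\LLL$ for which $\mathbf{Y}(\LLL)(q)\,\alpha=0$ admits a solution $\alpha\in\PP^{n-1}$ with $\alpha_1\cdots\alpha_n(\alpha_1+\cdots+\alpha_n)\neq0$.

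The only substantive step is to rewrite this existence condition in the shape of \eqref{eq:conditiononeloop}. For fixed $q$, the locus $\{\mathbf{Y}(\LLL)(q)\,\alpha=0\}\subseteq\PP^{n-1}$ is the projective linear subspace $\PP\bigl(\ker\mathbf{Y}(\LLL)(q)\bigr)$. Writing $H:=\{\alpha_1\cdots\alpha_n(\alpha_1+\cdots+\alpha_n)=0\}$, an admissible $\alpha$ exists if and only if $\PP\bigl(\ker\mathbf{Y}(\LLL)(q)\bigr)\not\subseteq H$: the ``if'' direction holds because $H\cap\PP(\ker\mathbf{Y}(\LLL)(q))$ is then a proper Zariski-closed subset of the irreducible projective linear space $\PP(\ker\mathbf{Y}(\LLL)(q))$, so its complement is nonempty; the ``only if'' direction is immediate. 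Such a $q$ in particular has $\ker\mathbf{Y}(\LLL)(q)\neq0$, i.e.\ $\det\mathbf{Y}(\LLL)(q)=0$, since otherwise $\PP(\ker\mathbf{Y}(\LLL)(q))=\emptyset\subseteq H$. Hence $\pi_\LLL\bigl(Y\cap(X\times\LLL)\bigr)$ is exactly the set in \eqref{eq:conditiononeloop}, and passing to Zariski closures concludes.

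I do not anticipate a genuine obstacle — this is essentially a reformulation, which is presumably why the paper calls it an immediate consequence — but I would take care with two points. First, the bookkeeping between affine and projective language: the clause ``$\ker\mathbf{Y}(\LLL)\not\subset\{\cdots=0\}$'' must be read as the affine subspace $\ker\mathbf{Y}(\LLL)(q)\subseteq\C^n$ not lying inside the affine cone over $H$ — equivalently, $\mathbf{Y}(\LLL)(q)$ has a null vector all of whose coordinates, and whose coordinate sum, are nonzero — and it is precisely the passage to the Zariski-open complement of $H$ in $\PP(\ker\mathbf{Y}(\LLL)(q))$ that makes the projected set match \eqref{eq:conditiononeloop}. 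Second, on a restricted family the relevant matrix is the pulled-back $\mathbf{Y}(\LLL)$, whose rank may drop below $n$ identically on $\LLL$ even though, by Thm.~\ref{thm1}, the generic $\mathbf{Y}$ on all of $\PP(\K_{\mathtt{A}_n})$ is invertible; no genericity assumption on $\LLL$ enters the argument, and when $\ker\mathbf{Y}(\LLL)(q)$ happens to be a line contained in $H$ the point $q$ is correctly excluded from $\nabla_{\mathtt{A}_n}(\LLL)$, consistent with the phenomenon $\nabla_G(\LLL)\subsetneq\nabla_G\cap\LLL$ flagged in Rk.~\ref{rem:restriction}.
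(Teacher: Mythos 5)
Your proof is correct and follows the same route the paper intends: the paper presents the proposition without proof as an "immediate consequence" of the observation that the Landau equations for $\mathtt{A}_n$ are $\mathbf{Y}\alpha=0$ with $\alpha_1\cdots\alpha_n\,\U_{\mathtt{A}_n}\neq 0$, and you carefully unpack that observation via the incidence variety, with the only nontrivial point (existence of an admissible $\alpha$ iff $\PP(\ker\mathbf{Y}(\LLL)(q))\not\subseteq H$, using irreducibility of a projective linear space) handled correctly.
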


\begin{example}[Bubble diagram]\label{ex:example4}
For $n=2$, we have the bubble diagram $\mathtt{A}_2$. Calling $\M_1 = \M_2 = s$ (which corresponds to attaching legs $1$ and $2$ on one end and $3$ and $4$ on the other end of the diagram, as a special case of Fig.~\ref{fig:BE} with $\E_G=2$), we find for generic internal masses $\m_1, \m_2$:
\be\label{eq:Delta-A2}
\Delta_{\mathtt{A}_2} = \det \left(\begin{array}{cc}
-2\m_1  & s-\m_1 - \m_2 \\
s-\m_1 - \m_2 & - 2\m_2
\end{array}\right) = 4 \m_1 \m_2 - (s - \m_1 - \m_2)^2.
\ee
This defines an irreducible curve in $\PP^2$ with coordinates $(s:\m_1:\m_2)$, dual to the rational curve in $(\PP^2)^\vee$ parametrized by $(\alpha_1\alpha_2:-\alpha_1(\alpha_1+\alpha_2):-\alpha_2(\alpha_1+\alpha_2))$. Restricting to the line $\LLL = \PP^1 = \{\m_1 = \m_2 = \m\} \subset \PP^2$ with coordinates $(s:\m)$, the determinant factors:
\be \label{eq:detYrestricted}
\det \mathbf{Y}(\LLL)= \det \left(\begin{array}{cc}
-2\m  & s-2\m \\
s-2\m & - 2\m
\end{array}\right) = -s (s-4\m).
\ee
The first component, $\{s=0\}$, is not in the discriminant: it corresponds to the null vector $ {\alpha} = (1,-1)$ whose entries sum to zero (instead, it is a second-type Landau singularity). The second factor, $\{s=4\m\}$, has $ {\alpha} = (1,1)$.
The discriminant polynomial in the equal-mass case is therefore
$\Delta_{\mathtt{A}_2}(\LLL) = s - 4\m$. In passing from general masses to equal masses, the degree of the discriminant drops by one: the variety of \eqref{eq:detYrestricted}, which is $\nabla_{\mathtt{A}_2} \cap \LLL$, strictly contains $\nabla_{\mathtt{A}_2}(\LLL) = \{(4:1)\}$.
\end{example}
\begin{remark}
We note that while $\Delta_{\mathtt{A}_2}$ given in \eqref{eq:Delta-A2} is irreducible, we can return to the variables $m_1$, $m_2$ (with $\m_i = m_i^2$), in terms of which $\Delta_{\mathtt{A}_2}$ for generic masses factors as
\be\label{eq:Delta-A2b}
\Delta_{\mathtt{A}_2} = -\left(s - (m_1 {+} m_2)^2\right)\left(s - (m_1 {-} m_2)^2\right).
\ee
The singularities associated to the two factors are known as the \emph{normal} and \emph{pseudo-normal} thresholds respectively. 
\end{remark}

\begin{example}[Box diagram]\label{ex:box2}
For the box diagram $\mathtt{A}_4$ with $n=4$ we have
\be
\Delta_{\mathtt{A}_4} = \det \left(
\begin{array}{cccc}
 -2 \m_1 & \M_1-\m_1-\m_2 & s-\m_1-\m_3
   & \M_4 -\m_1-\m_4 \\
 \M_1 -\m_1-\m_2 & -2 \m_2 &
   \M_2-\m_2-\m_3 & t-\m_2-\m_4 \\
 s-\m_1-\m_3 & \M_2-\m_2-\m_3 & -2 \m_3
   & \M_3 -\m_3-\m_4 \\
 \M_4-\m_1-\m_4 & t-\m_2-\m_4 &
   \M_3-\m_3-\m_4 & -2 \m_4 \\
\end{array}
\right)
\ee
and hence the discriminant polynomial is irreducible of degree $4$.
Restricting to the $\PP^3$ given by $\m_e = \m$ and $\M_i = \M$, we find
\begin{align}
\det \mathbf{Y} (\PP^3) &= \det \left(
\begin{array}{cccc}
 -2 \m & \M-2\m & s-2\m
   & \M -2\m \\
 \M -2\m & -2 \m &
   \M-2\m & t-2\m \\
 s-2\m & \M-2\m & -2 \m
   & \M - 2\m \\
 \M-2\m & t-2\m &
   \M-2\m & -2 \m \\
\end{array}
\right)\\
&= s\, t \left(st + 4 \m (4 \M-s-t)-4 \M^2\right).\nn
\end{align}
For $s=0$ and $t=0$ we find the null vectors to be respectively
\be
 {\alpha} = (1, 0, -1, 0) \qquad\text{and}\qquad  {\alpha} = (0,1,0,-1),
\ee
whose entries add up to zero in both cases, and some entries are zero. Therefore, the discriminant polynomial in the equal-mass case is
\be \label{eq:A4disc}
\Delta_{\mathtt{A}_4}(\PP^3) = st + 4 \m (4 \M-s-t)-4 \M^2.
\ee
This defines an irreducible quadratic surface strictly contained in $\Delta_{\mathtt{A}_4} \cap \PP^3$. One can check that the kernel condition in  \eqref{eq:conditiononeloop} is not satisfied for $s = \m = \M = 0$, yet \eqref{eq:A4disc} vanishes at this point in $\PP^3$. This is an example of a point in the kinematic space that is added to the discriminant by taking the closure of $\pi_{\PP^3}(Y)$. The same thing happens for $(s:t:\m:\M) = (1:0:0:0)$.
\end{example}

\subsection{\label{sec:banana}Banana Diagrams}

We start by presenting a new, explicit proof of the following well-known result. 
\begin{proposition}[Banana diagrams]
Substituting $\m_e = m_e^2$ in the Landau discriminant of the banana diagram with $\E$ internal edges ($\mathtt{B}_\E$ in Fig.~\ref{fig:BE}) gives
\be\label{eq:Delta-BE}
\Delta_{\mathtt{B}_E}(s, m_1^2, \ldots, m_{\E}^2) = \prod_{\{ \eta_e \} } \left( s - \Big( \sum_{e=1}^{\E} \eta_e m_e \Big)^2 \right),
\ee
where the product runs over all $2^{\E-1}$ projectively-inequivalent ways of assigning the signs $\eta_e \in \{\pm 1\}$ to each edge $e=1,2,\ldots,\E$.
\end{proposition}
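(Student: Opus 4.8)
\section*{Proof proposal}

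The plan is to read off $\nabla_{\mathtt{B}_\E}$ directly from the Landau equations \eqref{eq:LE}, exploiting the fact that the Symanzik polynomials of a banana factor nicely on the toric chart. By the definitions in Sec.~\ref{sec:Symanzik} (and as in the $\mathtt{B}_3$ example), $\U_{\mathtt{B}_\E} = \sum_{e=1}^{\E}\prod_{f\neq e}\alpha_f$ and $\F_{\mathtt{B}_\E} = s\prod_{e}\alpha_e - \big(\sum_e \m_e\alpha_e\big)\U_{\mathtt{B}_\E}$, and $\F_{\mathtt{B}_\E}$ depends only on $s$ and the $\m_e$. Setting $A := \prod_e \alpha_e$, $Q := \sum_e \m_e\alpha_e$, and $B := \sum_e \alpha_e^{-1}$ on $\{\alpha_e\neq 0\}$, one has $\U_{\mathtt{B}_\E} = AB$ and the clean factorization $\F_{\mathtt{B}_\E} = A\,(s - QB)$. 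Since $\F_{\mathtt{B}_\E}$ is homogeneous of degree $\E$ in $\alpha$, Euler's identity shows that the leading Landau equations $\partial_{\alpha_e}\F_{\mathtt{B}_\E}=0$ force $\F_{\mathtt{B}_\E}=0$, i.e.\ $s=QB$ (as $A\neq 0$ on $X$); substituting this back gives $\partial_{\alpha_e}\F_{\mathtt{B}_\E} = A\,(Q\,\alpha_e^{-2} - \m_e B)$. Hence on $X$ the leading Landau equations are equivalent to
\be
s = QB, \qquad Q\,\alpha_e^{-2} = \m_e B \ \ (e=1,\dots,\E), \qquad AB\neq 0.
\ee

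Next I would solve this system for generic squared masses $\m_e = m_e^2$. From $AB\neq 0$ one gets $B\neq 0$, and then $Q\neq 0$ since each $\m_e\neq 0$; therefore $\alpha_e^{-2} = \m_e\,(B/Q)$ for every $e$, so $(\alpha_1^{-2}:\dots:\alpha_\E^{-2}) = (\m_1:\dots:\m_\E)$, which forces $(\alpha_1^{-1}:\dots:\alpha_\E^{-1}) = (\eta_1 m_1 : \dots : \eta_\E m_\E)$ for some sign vector $\eta\in\{\pm 1\}^{\E}$, with $\eta$ and $-\eta$ giving the same projective point. Evaluating the scale-invariant quantity $QB$ at the representative $\alpha_e^{-1}=\eta_e m_e$ gives $B=\sum_e \eta_e m_e$ and $Q = \sum_e m_e^2/(\eta_e m_e) = \sum_e \eta_e m_e = B$, hence $s = QB = \big(\sum_e \eta_e m_e\big)^2$. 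Conversely, each sign vector yields an honest point of $X$: for generic $m_e$ one has $B = \sum_e\eta_e m_e\neq 0$ and $A = \prod_e (\eta_e m_e)^{-1}\neq 0$, so $\U_{\mathtt{B}_\E}=AB\neq 0$. Thus, for generic $\m$, the leading Landau equations of $\mathtt{B}_\E$ have a solution in $X$ precisely when $s$ equals one of the $2^{\E-1}$ values $\big(\sum_e\eta_e m_e\big)^2$, each arising from exactly one solution.

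Finally I would package this into the claimed factorization: the set just described is, for generic $\m$, the slice of $\pi_{\PP(\K_G)}(Y)$ over $\{\m\ \text{generic}\}$, so its Zariski closure $\nabla_{\mathtt{B}_\E}$ equals the hypersurface $\big\{\prod_{\{\eta_e\}}\big(s - (\sum_e\eta_e m_e)^2\big)=0\big\}$, giving \eqref{eq:Delta-BE} up to a constant. I expect the main obstacle to be making this last step airtight by careful bookkeeping of the excluded loci: one must check that none of the $2^{\E-1}$ candidate solutions is removed by $\alpha_1\cdots\alpha_\E\,\U_{\mathtt{B}_\E}\neq 0$ (otherwise a factor would be missing), that the only other critical configurations — those with $Q=0$, which force $B=0$ and hence $\U_{\mathtt{B}_\E}=AB=0$ — genuinely lie outside $X$ (so no extra factor appears and the closure adds no component), and that the $2^{\E-1}$ values $\sum_e\eta_e m_e$ are pairwise distinct up to sign for generic $m_e$ (so $\deg\nabla_{\mathtt{B}_\E}=2^{\E-1}$ and no factor repeats).
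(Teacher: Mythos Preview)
Your proof is correct and follows essentially the same approach as the paper: factor $\F_{\mathtt{B}_\E}=A(s-QB)$ on the torus, use Euler's identity to force $s=QB$, differentiate to obtain $\m_e\alpha_e^2=Q/B$ independent of $e$, solve projectively to get $\alpha_e^{-1}\propto\eta_e m_e$, and evaluate $s=(\sum_e\eta_e m_e)^2$. You are, if anything, more explicit than the paper about the closure step (checking that no sign vector is lost to $\U_G=0$, that the $Q=0$ locus is excluded, and that the $2^{\E-1}$ factors are generically distinct), which the paper handles only implicitly.
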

\begin{proof}
The diagrams $\mathtt{B}_\E$ have $\L_{\mathtt{B}_\E}=\E{-}1$, $\E_{\mathtt{B}_\E} = \E$ and the Symanzik polynomials are given by
\be
\F_{\mathtt{B}_\E} = s \prod_{e=1}^{\E} \alpha_e - \U_{\mathtt{B}_\E} \sum_{e=1}^{\E} \m_e \alpha_e
\qquad\text{with}\qquad
\U_{\mathtt{B}_\E} = \sum_{e=1}^{\E} \prod_{\substack{e'=1\\ e' \neq e}}^{\E} \alpha_{e'}.
\ee
In order to simplify the notation let us introduce $\F_{\mathtt{B}_\E} = (\prod_{e=1}^{\E}\alpha_e) \widetilde{\F}$, where 
\be
\widetilde{\F} := s - \Big(\sum_{e=1}^{\E} \frac{1}{\alpha_e}\Big) \Big(\sum_{e=1}^{\E} \m_e \alpha_e\Big). 
\ee
Toric solutions have $\widetilde{\F}=0$. On the support of this constraint the leading Landau equations are
\be
\frac{\partial \F_{\mathtt{B}_\E}}{\partial \alpha_{e^\ast}} = \Big(\prod_{e=1}^{\E}\alpha_e\Big) \frac{\partial \widetilde{\F}}{\partial \alpha_{e^\ast}} = \frac{\prod_{e=1}^{\E}\alpha_e}{\alpha_{e^\ast}^2} \bigg[ \Big(\sum_{e=1}^{\E} \m_e \alpha_e\Big) - \m_{e^\ast} \alpha_{e^\ast}^2 \Big(\sum_{e=1}^{\E} \frac{1}{\alpha_e}\Big) \bigg] = 0,
\ee
together with the constraint $\U_{\mathtt{B}_\E} \neq 0$.
The term in the square brackets has to vanish for all $e^\ast$. We eliminate the term $\sum_e \m_e \alpha_e$ using $\widetilde{\F}=0$ (as well as $\U_{\mathtt{B}_\E}\neq 0$) to get
\be
\frac{s}{\left(\sum_{e=1}^{\E} \frac{1}{\alpha_e}\right)^2} = \m_{e^\ast} \alpha_{e^\ast}^2.
\ee
The term on the left-hand side is independent of the choice of $e^\ast$. It implies that for every pair of edges $e_1$ and $e_2$ the solution must satisfy
\be
\left(\frac{\alpha_{e_1}}{\alpha_{e_2}}\right)^{\!2} =  \frac{\m_{e_2}}{\m_{e_1}}.
\ee
Let us restore $\m_e = m_e^2$. The solutions of the Landau equations can be stated as
\be\label{eq:BE-alphas}
\left( \alpha_1 : \alpha_2 : \cdots : \alpha_\E \right) = \left( \frac{\eta_1}{m_1} : \frac{\eta_2}{m_2} : \cdots : \frac{\eta_\E}{m_\E}\right)
\ee
together with
\be
s = \Big( \sum_{e=1}^{\E} \eta_e m_e \Big)^2
\ee
for all $2^{\E-1}$ projectively inequivalent ways of assigning the signs $\eta_e \in \{\pm 1\}$ to each edge. The Landau discriminant polynomial is therefore given by \eqref{eq:Delta-BE}, up to a constant.
\end{proof}

\begin{example}[Bubble diagram]
Let us consider the case $\E=2$, which gives the same bubble diagram as the one studied in Ex.~\ref{ex:example4}. The two inequivalent ways of assigning signs are
\be
(\eta_1, \eta_2) \in \left\{ (1,1), \; (1,-1) \right\},
\ee
which according to \eqref{eq:Delta-BE} gives
\be
\Delta_{\mathtt{B}_2} = \left(s - (m_1 {+} m_2)^2\right)\left(s - (m_1 {-} m_2)^2\right),
\ee
in agreement with the previous result \eqref{eq:Delta-A2b}, $\Delta_{\mathtt{A}_2} = - \Delta_{\mathtt{B}_2}$.
\end{example}

\section{\label{sec:computing}Computing Landau Discriminants}
In this section, we focus on the computation of the Landau discriminant $\nabla_G$ using methods from nonlinear algebra. We will focus on the case where $\nabla_G$ is a hypersurface in the projectivized kinematic space $\PP(\K_G)$, and our goal is to compute its defining equation $\Delta_G = 0$.
There are several ways to go about this. Depending on which tools are used, one strategy may be preferable over another. In Sec.~\ref{subsec:elimination}, we briefly discuss two approaches using classical elimination theory. In contrast, Sec.~\ref{subsec:sampling} describes numerical sampling methods using state-of-the-art technology from numerical nonlinear algebra. We use the latter approach to compute the Landau discriminant of the diagram $\mathtt{env}$ from Fig.~\ref{fig:env}, assuming equal internal masses ($\m_1 = \cdots = \m_{6} = \m$) and equal external masses ($\M_1 = \cdots = \M_4 = \M$). This is a reducible surface of degree $45$ in $\LLL = \PP^3$.

\subsection{\label{subsec:elimination}Symbolic Elimination Methods}

For a given $G$, the polynomials $\frac{\partial \F_G}{\partial \alpha_e}$ from \eqref{eq:LE} generate an ideal $I$ in the homogeneous coordinate ring
\be
\C[\PP(\K_G) \times \PP^{\E_G - 1}] = \C[s_I, \M_i, \m_e,\alpha_1, \ldots, \alpha_{\E_G}]
\ee
of $\PP(\K_G) \times \PP^{\E_G - 1}$. Its associated subvariety of $\PP(\K_G) \times \PP^{\E_G - 1}$ contains the Zariski closure $\overline{Y} \subset \PP(\K_G) \times \PP^{\E_G - 1}$ of the incidence variety $Y$ defined in \eqref{eq:incidence}. In order to eliminate spurious components, we must saturate\footnote{The \emph{saturation} of an ideal $I$ of a ring $R$ by a polynomial $g \in R$ is the ideal $(I:g^\infty) = \{ f \in R ~|~ g^k f \in I \text{ for some } k \in {\mathbb{N}} \}$. For more information and a geometric interpretation, see \cite[Ch.~4, \S 4]{cox2013ideals}.} the ideal $I$ by the polynomial $g = (\alpha_1 \cdots \alpha_{\E_G}) \U_G$: 
\be
V_{\PP(\K_G) \times \PP^{E_G - 1}} ( I : g^\infty ) = \overline{Y}.
\ee
In practice, it is more efficient to saturate by each of the factors of $g$ separately, i.e.
\be \label{eq:seqElim}
(I:g^\infty) = (((\cdots(I:\alpha_1^\infty): \cdots ):\alpha_{\E_G}^\infty):\U_G^\infty) .
\ee
The projection map $\pi_{\PP(\K_G)}$ extends naturally from $Y$ to $\overline{Y}$, and since $\PP^{\E_G}$ is complete, we have $\pi_{\PP(\K_G)}(\overline{Y}) = \overline{\pi_{\PP(\K_G)}(Y)} = \nabla_G$. Therefore, $\Delta_G$ is the generator of the elimination ideal $(I:g^\infty) \cap \C[s_I,\M_i,\m_e]$.

\begin{example}[Box diagram]
Consider the box diagram $\mathtt{A}_4$ from Ex.~\ref{ex:box} and \ref{ex:box2}. We define its Symanzik polynomials in the computer algebra software \texttt{Macaulay2} \cite{M2} as follows: 
\begin{minted}{julia}
R = QQ[a_1..a_4,M_1..M_4,m_1..m_4,s,t]

U = a_1+a_2+a_3+a_4;
F1 = a_1*a_2; F2 = a_2*a_3;
F3 = a_3*a_4; F4 = a_1*a_4;
F12 = a_1*a_3; F23 = a_2*a_4;

F = (s*F12 + t*F23 + M_1*F1 + M_2*F2 + M_3*F3 + M_4*F4
    - U*(m_1*a_1+m_2*a_2+m_3*a_3+m_4*a_4));
\end{minted}
The ideal $I$ is defined by 
\begin{minted}[firstnumber=last]{julia}
I = ideal apply(4, i->diff(a_(i+1), F))
\end{minted}
We saturate by the polynomial $g$. In this case, it turns out that $(I:g^\infty) = (I:\alpha_1^\infty)$. Geometrically, this means that in this case the spurious component of $V_{\PP(\K_{\mathtt{A}_4}) \times \PP^3}(I)$ contained in $\{g = 0\}$ is contained in $\{\alpha_1 = 0\}$. The \texttt{Macaulay2} command for saturating by $\alpha_1$ is
\begin{minted}[firstnumber=last]{julia}
J = saturate(I, a_1)
eliminate(J, {a_1,a_2,a_3,a_4})
\end{minted}
Here we also eliminated the Schwinger parameters, resulting in the Landau discriminant polynomial $\Delta_{\mathtt{A}_4}$, which is homogeneous of degree 4 and equal to the determinant in Ex.~\ref{ex:box2}.
\end{example}

An alternative way to take the non-vanishing of $g = (\alpha_1 \cdots \alpha_{\E_G}) \U_G$ into account is to work directly in the ring $\C[\PP(\K_G)] \otimes \C[X]$, where $\C[\PP(\K_G)] = \C[s_I,\M_i,\m_e]$ is the homogeneous coordinate ring of $\PP(\K_G)$ and $\C[X]$ is the coordinate ring of the affine variety $X = \PP^{\E_G - 1} \setminus V(g)$. We use the representation
\be
\C[X] = \C[\alpha_1, \ldots, \alpha_{\E_G-1}, y]/\langle 1 - y \tilde{g} \rangle,
\ee
where $\tilde{g}$ is obtained from setting $\alpha_{\E_G} = 1$ in $g$. Let $\tilde{I} \subset \C[s_I,\M_i,\m_e, \alpha_1, \ldots, \alpha_{\E_G-1}, y]$ be the ideal
\be
    \tilde{I} = \left \langle \frac{\partial \F_G}{\partial \alpha_e} \bigg|_{\alpha_{\E_G} = 1},\quad e = 1, \ldots, \E_G \right \rangle + \langle 1 - y \tilde{g} \rangle.
\ee
The elimination ideal $\tilde{I} \cap \C[s_I,\M_i,\m_e]= \langle \Delta_G \rangle$ is again generated by the Landau discriminant polynomial. 
\begin{example}
We apply the alternative technique for computing $\Delta_G$ to the box diagram $G=\mathtt{A}_4$. We add the variable $y$ to the ring $\mathtt{R}$ in Ex.~\ref{ex:box2} and execute 
\begin{minted}[firstnumber=10]{julia}
Itilde = sub(I, a_4=>1) + ideal(a_1*a_2*a_3*sub(U,a_4=>1)*y-1)
eliminate(Itilde, {a_1,a_2,a_3,y})
\end{minted}
This gives indeed the same answer as before. 
\end{example}

The symbolic elimination methods outlined here cannot deal with much larger examples, as the number of variables involved in the computation grows too big. However, as mentioned above (Rk.~\ref{rem:restriction}), it is often meaningful to make some simplifying assumptions on the kinematic parameters. For instance, the Feynman diagram $G = \mathtt{acn}$ with equal internal masses $\m_1 = \cdots = \m_5 = \m$ and equal external masses $\M_1 = \cdots = \M_4 = \M$ may be analyzed using these techniques. 

\begin{example}\label{ex:acnode}
The first Symanzik polynomial for the Feynman diagram $G=\mathtt{acn}$ from Fig.~\ref{fig:acn} is 
\be
\U_{\mathtt{acn}} = \alpha_5(\alpha_1+ \alpha_2 + \alpha_3+\alpha_4) + (\alpha_4 + \alpha_1)(\alpha_2 + \alpha_3).
\ee
with the assumptions $\m_e = \m$, $\M_i = \M$, the second Symanzik polynomial is 
\begin{align}
    \F_{\mathtt{acn}} &= s \alpha_1 \alpha_3 \alpha_5 + t \alpha_2 \alpha_4 \alpha_5 + \M~ [\alpha_1\alpha_2\alpha_5 + \alpha_2\alpha_3(\alpha_1 + \alpha_4 + \alpha_5) \\
    &\quad + \alpha_3\alpha_4\alpha_5 + \alpha_1 \alpha_4 (\alpha_2 + \alpha_3 + \alpha_5) ] - \m\, \U_{\mathtt{acn}}   ({\textstyle\sum}_{e=1}^5 \alpha_e  ) .\nn
\end{align} 
Setting $I = \langle \partial \F_{\mathtt{acn}}/\partial \alpha_e, e = 1, \ldots, 5 \rangle$, saturating by $g = (\alpha_1 \cdots \alpha_5) \cdot \U_{\mathtt{acn}}$ and eliminating $\alpha_1, \ldots, \alpha_5$ gives the Landau discriminant, which is a surface in $\PP^3$ with two irreducible components: $\Delta_{\mathtt{acn}}(\PP^3)$ factors as $\Delta_{\mathtt{acn}}(\PP^3) = \Delta_{\mathtt{acn},1} \cdot \Delta_{\mathtt{acn},2}$ with
\begin{dmath*}
\Delta_{\mathtt{acn},1} = 9 \M^4 \m^2 s^2-54 \M^3 \m^3 s^2+81 \M^2 \m^4 s^2+9 \M^2 \m^3 s^3-54 \M \m^4 s^3+9 \m^4 s^4 +16 \M^6 s t 
    -144 \M^5 \m s t+450 \M^4 \m^2 s t-540 \M^3 \m^3 s t+162 \M^2 \m^4 s t+12 \M^4 \m s^2 t-126 \M^3 \m^2 s^2 t 
    +297 \M^2 \m^3 s^2 t -162 \M \m^4 s^2 t+9 \M^2 \m^2 s^3 t+36 \m^4 s^3 t-10 \m^3 s^4 t+9 \M^4 \m^2 t^2-54 \M^3 \m^3 t^2 
    +81 \M^2 \m^4 t^2+12 \M^4 \m s t^2 -126 \M^3 \m^2 s t^2+297 \M^2 \m^3 s t^2-162 \M \m^4 s t^2-8 \M^4 s^2 t^2 
    +84 \M^3 \m s^2 t^2-189 \M^2 \m^2 s^2 t^2+54 \m^4 s^2 t^2-11 \M^2 \m s^3 t^2+42 \M \m^2 s^3 t^2-30 \m^3 s^3 t^2 
    +\m^2 s^4 t^2+9 \M^2 \m^3 t^3-54 \M \m^4 t^3+9 \M^2 \m^2 s t^3+36 \m^4 s t^3  -11 \M^2 \m s^2 t^3+42 \M \m^2 s^2 t^3 
    -30 \m^3 s^2 t^3+\M^2 s^3 t^3-4 \M \m s^3 t^3+2 \m^2 s^3 t^3+9 \m^4 t^4-10 \m^3 s t^4+\m^2 s^2 t^4,
\end{dmath*} 
as well as
\begin{align}
\Delta_{\mathtt{acn},2} = \m^2 \left(4 \M^2-5 \M (s+t)+(s+t)^2\right)-\m s t (s+t-5 \M)-\M^2 s t.
\end{align}
It turns out that the diagram $\mathtt{acn}$ provides an example for which the restriction $\U_G \neq 0$ in \eqref{eq:LE} shrinks the Landau discriminant significantly. If we saturate the ideal $I$ by the product $\alpha_1 \cdots \alpha_5$ (instead of by $g$) and then eliminate the $\alpha_e$'s, we obtain an extra factor $\Delta_{\mathtt{acn},0} = 4 \M - s - t$ in the generator of the elimination ideal. We now explain that this extra factor comes from spurious solutions to \eqref{eq:LE} satisfying $\U_{\mathtt{acn}} = 0$. Define the line $ L = \{ \alpha_1 = -\alpha_2 = \alpha_3 = -\alpha_4 \}$ in $\PP^4 \supset X$. Restricting the Landau equations to this line, they simplify to 
\begin{align}
      0 &= -\alpha_1^2 \M + \alpha_1\alpha_5 (-2\M - 4\m + s) - \m\alpha_5^2,\nn \\
      0 &=  -\alpha_1^2 \M + \alpha_1\alpha_5 (2\M - 4\m -t) - \m\alpha_5^2,\nn \\
      0 &= -\alpha_1^2 \M + \alpha_1\alpha_5 (-2\M - 4\m + s) - \m\alpha_5^2, \\
      0 &=  -\alpha_1^2 \M + \alpha_1\alpha_5 (2\M - 4\m -t) - \m\alpha_5^2,\nn \\
      0 &= -\alpha_1^2(4\M - s - t).\nn
      \end{align}
This shows that for general kinematic parameters satisfying $4 \M - s - t = 0$, the Landau equations have two solutions on $L \setminus V(\alpha_1 \cdots \alpha_5)$. The first Symanzik polynomial vanishes on both of these solutions, as one can easily check that it vanishes identically on $L$. 
\end{example}

\subsection{\label{subsec:sampling}Numerical Sampling Methods}

We now turn to numerical methods for computing the Landau discriminant. The punchline is that, although symbolic verification remains valuable when possible, methods from numerical nonlinear algebra can be used to compute $\Delta_G$ and some of its invariants in an efficient way. The basic strategy is to \emph{sample} the discriminant using numerical continuation techniques and then interpolate the sample points to obtain the defining equation $\Delta_G$. The methods outlined here are implemented in a \texttt{Julia} package \texttt{Landau.jl}, made available at \url{https://mathrepo.mis.mpg.de/Landau/}. The section will serve as a short tutorial on how to use some functions in this package. As a running example, we will use the acnode diagram $G = \texttt{acn}$ with equal internal and external masses. 

Our starting point is the system of polynomial equations 
\begin{equation} \label{eq:incidenceeq}
\frac{\partial\F_G}{\partial \alpha_1} \bigg|_{\alpha_{\E_G} = 1} = \cdots = \frac{\partial \F_G}{\partial \alpha_{\E_G}} \bigg|_{\alpha_{\E_G} = 1}  = 1 - y\tilde{g}= 0, 
\end{equation}
where $\tilde{g}$ is obtained as in the previous section, by substituting $\alpha_{\E_G} = 1$ in $g = (\alpha_1 \cdots \alpha_{\E_G}) \U_G$. These are the defining equations of the incidence variety $Y$ from \eqref{eq:incidence} embedded in $\C^{\E_G} \times  \PP(\K_G)$, where $\C^{\E_G}$ has coordinates $\alpha_1, \ldots, \alpha_{\E_G-1}, y$.
Using \texttt{Landau.jl}, the Landau equations are generated by specifying the edges and the \emph{node} labels. Here the \emph{nodes} are the vertices of the graph $G$ to which the external legs with the momenta $p_i$ are attached, in the order of appearance. In the notation of Fig.~\ref{fig:acn}, for $G = \mathtt{acn}$ we have
\begin{minted}{julia}
edges = [[4,1],[1,2],[2,3],[3,4],[1,3]]
nodes = [1,2,3,4]

LE, y, α, p, mm = affineLandauEquations(edges, nodes)
\end{minted} 

The Landau equations are stored in \texttt{LE}. The output elements \texttt{y}, \texttt{α} represent the auxiliary variable $y$ and the list of Schwinger parameters $\alpha_e$, while \texttt{p} and \texttt{mm} represent lists of the external momenta and internal masses, which will later be substituted by Mandelstam invariants. As in Rk.~\ref{rem:restriction}, we may restrict the kinematic parameters to a linear subspace $\LLL = \PP^q \subset \PP(\K_G)$ and replace $Y$ by $Y \cap (\C^{\E_G} \times \LLL) \subset  \C^{\E_G} \times \LLL$. The Landau discriminant is $\nabla_G(\LLL) = \overline{\pi_{\LLL}(Y)}$. We will focus on its components of dimension $q - 1$. Here is how to restrict to the linear subspace $\PP^q = \PP^3$ of equal external and internal masses. 
\begin{minted}[firstnumber=last]{julia}
LE, s, t, M, m = substitute4legs(LE, p, mm; equalM = true, equalm = true)
\end{minted} 
Let us consider the problem of estimating the degree of $\nabla_G(\LLL)$. We write $(z_0:z_1: \cdots:z_q)$ for the homogeneous coordinates on $\LLL$. Adding $q-1$ random linear equations 
\begin{equation} \label{eq:lineq}
a_{i,0} z_0 + \cdots + a_{i,q} z_q = 0, \quad a_{i,j} \in \C, ~ i = 1, \ldots, q-1 
\end{equation}
to \eqref{eq:incidenceeq} has the geometric interpretation of slicing $\pi_{\LLL}(Y)$ with a line $L$. The solutions to this larger set of  equations \eqref{eq:incidenceeq} and \eqref{eq:lineq} form the pre-image $\pi_{\LLL}^{-1}(\pi_{\LLL}(Y) \cap L) = \pi_{\LLL}^{-1}(L) \cap Y$. By genericity of the coefficients $a_{i,j}$, with probability one, we have $|\pi_{\LLL}(Y) \cap L| = | \nabla_G(\LLL) \cap L | = \deg \nabla_G(\LLL)$. This gives the following algorithm to compute $\deg \nabla_G(\LLL)$:
\begin{enumerate}[leftmargin=2.5em]
\item Compute $S = \pi_{\LLL}^{-1}(L) \cap Y$ by solving \eqref{eq:incidenceeq} $+$ \eqref{eq:lineq}. 
\item Count the number of distinct points in the projection $ \pi_{\LLL}(S)$. 
\end{enumerate}
Here, step 1 can be performed using software from numerical nonlinear algebra. We choose to use \emph{numerical homotopy continuation}. For all our computations, we use the \texttt{Julia} package \texttt{HomotopyContinuation.jl}. We dehomogenize by setting, for instance, $z_0 = 1$. In step 2, the points in the approximate solution set $S$ are considered equal or distinct according to some sensible heuristic, e.g. based on their relative distance $\lVert s - s' \rVert/\lVert s \rVert$. This is implemented in the function \texttt{degreeProjection} in \texttt{Landau.jl}: 
\begin{minted}[firstnumber=last]{julia}
vlist = [α[1:end-1];y]; plist = [M;m;s;t]
dproj = degreeProjection(LE, vlist, plist)
\end{minted} 
Here \texttt{vlist} is a vector of affine coordinates on $\C^{\E_G}$ and \texttt{plist} is a vector of coordinates on $\LLL = \PP^3$. As a by-product of this degree computation, we obtain $\deg \nabla_G(\LLL)$ points on the Landau discriminant $\nabla_G(\LLL)$. Essentially, in the language of numerical algebraic geometry, we have computed a \emph{(pseudo-)witness set} for $\nabla_G(\LLL)$ \cite{hauenstein2010witness}. Doing this for different lines $L$, we may sample $\nabla_G(\LLL)$ at will. Let $L ' \neq L$ be a new generic line in $\LLL$. From a homotopy continuation point of view, we prefer to use the previously computed points $S$ in order to obtain $S' = \pi_{\LLL}^{-1}(L') \cap Y$, rather than to start anew from scratch. We deform the line $L$ continuously into $L'$ by introducing a parameter into the equations and tracking the paths described by the points in $S$ along this deformation to end up in the points $S'$. A new set of $\deg \nabla_G(\LLL)$ points on $\nabla_G(\LLL)$ is obtained from $\pi_{\LLL}(S')$. This procedure is carried out repeatedly by the function \texttt{sampleProjection} in \texttt{Landau.jl}. Here is how to collect 500 samples:
\begin{minted}[firstnumber=last]{julia}
samp, R, A, b, H = sampleProjection(LE, vlist, plist; npoints = 500)
\end{minted} 
The output \texttt{H} contains the equations \eqref{eq:incidenceeq} $+$ \eqref{eq:lineq} used to find the first set of points (i.e., the \emph{starting solutions} for the homotopy). The line $L$ defined by \eqref{eq:lineq}, together with some random dehomogenization in $\LLL$, is given by \texttt{A*plist + b = 0}. The output \texttt{R} is a result returned by \texttt{HomotopyContinuation.jl} when solving \texttt{H}, and \texttt{samp} contains a list of at least 500 samples. The number 500 is chosen arbitrarily here. The number of sample points should be at least the dimension of the space of homogeneous polynomials of degree \texttt{dproj} in \#\texttt{plist} variables, minus one. This is because the remaining step is to interpolate these sample points by such a polynomial. In practice, for numerical reasons, we use more than the minimal amount of sample points. Our heuristic is to use at least $1.2 ~\times$ the minimal number of samples. Interpolation is done via
\begin{minted}[firstnumber=last]{julia}
disc, c, gap = interpolate_deg(samp, dproj, plist; homogeneous = true)
\end{minted} 
Here $\texttt{disc}$ is the Landau discriminant polynomial $\Delta_G(\LLL)$, \texttt{c} contains its coefficients and \texttt{gap} represents the ratio of the two smallest singular values of the coefficient matrix $A$ in the interpolation problem. The latter serves as a measure of trust in the computation: the size of the gap governs the sensitivity of the kernel of $A$ to perturbations in its entries \cite{stewart1991perturbation}. A large gap corresponds to a well-conditioned interpolation problem. In this example, we find \texttt{gap = 2.1438277266482883e10}. This means that the smallest singular value that was considered numerically nonzero is about $10^{10}$ times larger than the last singular value. A larger gap can sometimes be obtained by using more sample points. 

The coefficients of \texttt{disc} are floating point numbers, many of which are close to zero. We approximate these by rational numbers using 
\begin{minted}[firstnumber=last]{julia}
ratdisc = rat(disc)
\end{minted} 
The result is a reducible polynomial of degree 12, equal (up to a nonzero rational factor) to  $\Delta_{\mathtt{acn}} (\LLL)$ in Ex.~\ref{ex:acnode}.

\begin{remark}[Non-reduced incidence schemes]
For some Feynman diagrams $G$, the equations \eqref{eq:incidenceeq} do not define the vanishing ideal of the incidence variety $Y$. That is, the solution set in $\C^{\E_G} \times \LLL $ is $Y$, but the ideal generated by the $\E_G + 1$ polynomials is strictly smaller than the ideal of polynomials vanishing on $Y$. In such cases, \eqref{eq:incidenceeq} may define $Y$ with a certain \emph{multiplicity}. This happens, for instance, for $G = \mathtt{npltrb}$. The solutions to \eqref{eq:incidenceeq} $+$ \eqref{eq:lineq} are isolated points with multiplicity greater than one, which calls for more brute force sampling techniques based on homotopy \emph{end-games}. In \texttt{Landau.jl}, these methods are invoked by adding the option \texttt{findSingular = true} in \texttt{sampleProjection}. This makes the computations more time consuming, but it allows to deal with such singular components. 
\end{remark}

The approach outlined above works in general, under the assumption that $\nabla_G(\LLL) \subset \LLL$ has codimension 1. As the acnode example illustrates, reducing to $\LLL \subset \PP(\K_G)$ often leads to reducible Landau discriminants. In the rest of this section, we show how homotopy techniques provide a natural way of computing the irreducible factors of $\Delta_G(\LLL)$ separately. This leads to smaller interpolation problems that are numerically better behaved, and to faster computations. 

Let $\nabla_{G,i} \subset \nabla_{G}(\LLL) \subset \LLL$ be an irreducible component and let $Y_i = \pi_{\LLL}^{-1}(\nabla_{G,i})$. In general, $Y_i$ may consist of several irreducible components. Let $Y_{i,j}$ be any such component whose projection $\pi_{\LLL}(Y_{i,j})$ is dense in $\nabla_{G,i}$. The set of points $S_{i,j} = \pi_{\LLL}^{-1}(L) \cap Y_{i,j}$ projects to $\deg \nabla_{G,i}$ distinct points in $\nabla_{G,i}$. Moreover, the same is true for the points $S_{i,j}' = \pi_{\LLL}^{-1}(L') \cap Y_{i,j}$, obtained via continuation by continuously moving $L$ to $L'$. For reducible $Y$, the monodromy group of 
\be
\{(y,L) \in Y \times \textup{Gr}(2,q+1) ~|~ y \in \pi_{\LLL}^{-1}(L) \} \longrightarrow \textup{Gr}(2,q+1) \ee
acts non-transitively on a general fiber \cite{hauenstein2018numerical}. Therefore, the partitioning of $S$ into the groups $S_{i,j}$ can be realized using \emph{monodromy loops}. With this partitioning, we can sample the components $\nabla_{G,i}$ separately. 
The following command returns a list containing one representative for each of the solution groups $S_{i,j}$.
\begin{minted}[firstnumber=last]{julia}
reps = decompose(H, solutions(R), [A[:]; b])
\end{minted} 
In our acnode example, \texttt{reps} has two elements. We sample the component of $\nabla_G(\LLL)$ corresponding to the first representative by feeding a monodromy seed to the function \texttt{sampleProjection}. This consists of the first representative solution in \texttt{reps} and the parameter values \texttt{A}, \texttt{b} for the line $L$. 
\begin{minted}[firstnumber=last]{julia}
samp1, R1, A, b, H = sampleProjection(LE, vlist, plist; npoints = 500,
                               seedsol = reps[1], seedA = A, seedB = b)
\end{minted} 
An analogous syntax is used to estimate the degree of this component and to interpolate the samples:
\begin{minted}[firstnumber=last]{julia}
dproj1 = degreeProjection(LE, vlist, plist, seedsol = reps[1], 
                                          seedA = A, seedB = b)
disc1, N1, gap1 = interpolate_deg(samp1, dproj1, plist; homogeneous = true)                                     
\end{minted} 
The rationalization \texttt{rat(disc1)} gives $\Delta_{\mathtt{acn},1}$ from Ex.~\ref{ex:acnode} up to a nonzero rational factor. An analogous computation gives the second component $\Delta_{\mathtt{acn},2}$. As a check of correctness, we compute 
\begin{minted}[firstnumber=last]{julia}
expand(462//300*rat(disc) + rat(disc1)*rat(dics2))
\end{minted}
which gives \texttt{0}. The factor $462/300$ comes from the fact that \texttt{interpolate\_deg} returns a polynomial whose largest coefficient has modulus 1.
Notably, the singular value gaps \texttt{gap1} and \texttt{gap2} are $\approx 10^3$ times larger than \texttt{gap}.

\begin{remark}[Higher precision arithmetic] \label{rem:hp}
In the case where $\Delta_G(\LLL)$ or its factors have coefficients of strongly varying magnitude in the interpolation basis (here chosen as monomials), it might be necessary to use augmented precision in order to make reasonable rational approximations with the function \texttt{rat}. The package \texttt{Landau.jl} offers a higher precision version of \texttt{sampleProjection}, called \texttt{sampleProjection\_HP}, which computes points on $\Delta_G(\LLL)$ in \texttt{Julia}'s \texttt{BigFloat} format. These high precision coordinates are obtained by performing Newton iterations on the initial set of double precision solutions.
\end{remark}

\begin{remark}[Iterative methods for interpolation] \label{rem:iter}
Computing the coefficients of the interpolant through a collection of sample points requires the computation of a vector in the kernel of a complex matrix $M$. The standard way of doing this is via the singular value decomposition (SVD). However, if $M$ is too large, this may be infeasible. We observe that if the kernel of $M$ has dimension one, then it is spanned by the eigenvector of $M^H \cdot M$ corresponding to the eigenvalue 0 (here $\cdot^H$ is the Hermitian transpose). This eigenvector may be computed efficiently using iterative methods, such as the \texttt{eigs} function implemented in \texttt{Arpack.jl}. This may give less accurate results than for the SVD. However, it could give us an idea of which coefficients of the discriminant are zero. Using this information, the columns of $M$ corresponding to zero coefficients can be dropped, reducing the complexity of the kernel computation. 
\end{remark}

\subsection{Computational Results}

\subsubsection{Landau Discriminants in $\PP(\K_G)$} \label{subsec:generalparameters}
We use \texttt{Landau.jl} to compute dimension and degree of the Landau discriminants $\nabla_G$ corresponding to the diagrams in Fig.~\ref{fig:diagrams} (in the most general case, where $\LLL = \PP(\K_G)$). This can be done using the following three lines of code.
\begin{minted}{julia}
LE, y, α, p, m = affineLandauEquations(edges, nodes)
LE, s, t, M, m = substitute4legs(LE, p, m; equalM = false, equalm = false)
deg = degreeProjection(LE, [α[1:end-1];y], [s;t;M;m])
\end{minted}
For the only five leg diagram $G = \mathtt{pentb}$, we replace the last two lines by
\begin{minted}{julia}
LE, s12, s23, s34, s45, s51, M, m = substitute5legs(LE, p, m; equalM = false,
                                                              equalm = false)
deg = degreeProjection(LE, [α[1:end-1];y], [s12; s23; s34; s45; s51; M; m])
\end{minted}
The result is shown in the left half of Tab.~\ref{tab:dimdeg}.
\begin{table}[t]
\centering
\begin{tabular}{ll|ccc|ccc}
 & Diagram $G$       & $\textup{codim} \nabla_G$ & $\deg \nabla_G$ & time (sec) & $\nabla_G(\LLL)$ & $t_{\texttt{symb}}$ (sec) & $t_{\texttt{num}}$ (sec)     \\ \hline
 & $\mathtt{par}$    & 1                         & 6 & 0.176 & $[1,2]_1$  & 0.2 & 0.5               \\
 & $\mathtt{acn}$    & 1                         & 16 & 0.489   & $[4,8]_1$ & 175306.0 &  1.4              \\
 & $\mathtt{env}$    & 1                         & 114 & 13.1  & $[8,8,8,9,12]_1$ & $\times$ & 1226.1       \\
 & $\mathtt{npltrb}$ & 2                         & 10 & 37.2   & $[1,1]_1$ & 1.9 & 4.0             \\
 & $\mathtt{tdetri}$ & 1                         & 12 & 1.04  & $[2]_1, [1]_2$ & 8.1 & 1.2     \\
 & $\mathtt{debox}$  & 1                         & 8 & 0.366 & $[3]_1, [1]_2$ & 7.9 & 0.5     \\
 & $\mathtt{tdebox}$ & 1                         & 16 & 10.5 & $[2]_1, [1]_2$ & 1476.8 & 4.3     \\
 & $\mathtt{pltrb}$  & 2                         & 9 & 24.3  & $[1,1]_2$ & 0.6 & $\times$                         \\
 & $\mathtt{dbox}$   & 1                         & 12 & 8.64  & $[2,4]_1$ & 13634.2 & 4.5              \\
 & $\mathtt{pentb}$  & 1                         & 14 & 62.8  & $[12]_1$  & $\times$ & 815.9             
\end{tabular}
\caption{Dimension and degree of the Landau discriminants corresponding to the diagrams in Fig.~\ref{fig:diagrams}, with computation times. Here $\LLL$ is the linear subspace of $\PP(\K_G)$ where $\M_i = \M$ and $\m_e = \m$.}
\label{tab:dimdeg}
\end{table}
By Thm.~\ref{thm1}, each $\nabla_G \subset \PP(\K_G)$ is irreducible. For the diagrams $G = \mathtt{npltrb}, \mathtt{pltrb}$, the discriminant has codimension 2 in $\PP(\K_G)$. In order to compute its degree, we add the option \texttt{codimen = 2} in \texttt{degreeProjection}:
\begin{minted}[firstnumber=last]{julia}
deg = degreeProjection(ALE, [α[1:end-1];y], [s;t;M;m]; codimen = 2)
\end{minted}
For $G = \mathtt{pltrb}$. We checked symbolically that for equal external masses and generic external masses ($\LLL = \PP^8$ has coordinates $(s:t:\M:\m_1:\cdots:\m_6)$), $\nabla_G(\LLL)$ is defined by two equations:
\small
\begin{align*}
\m_1\m_4(\m_1-\m_2+\m_4)+((-\m_1+\m_2+\m_5)\m_2-(\m_1+\m_2)\m_4)\m_5 +\m_6 s^2 \qquad\qquad\;\;\;& \\
+(\m_1-\m_5)(\m_2-\m_4)\m_6 +((\m_1-\m_2)(\m_5-\m_4) -(\m_1 +\m_2 +\m_4 +\m_5 )\m_6 +\m_6^2 )s &= 0, \\
\M (\m_2 -\m_4)^2+(\M^2 -\M(\m_2 +2\m_3+\m_4) +(\m_2 -\m_3)(\m_4 - \m_3) )s+\m_3 s^2 &= 0.
\end{align*}
\normalsize
We find numerically that general fibers of the projection map $\pi_{\PP(\K_G)} : Y \rightarrow \nabla_G$ are curves of degree 5. For $G = \mathtt{npltrb}$, this degree is 8. 

\subsubsection{Equal-Mass Case} \label{subsubsec:equalmass}
We now consider the case where $\LLL = \PP^q \subset \PP(\K_G)$ is the $q$-dimensional subspace for which all external and internal masses are equal: $\M_i = \M, \m_e = \m$. In the case where $G$ has 4 legs, the space $\LLL = \PP^3$ has coordinates $(s:t:\M:\m)$. In case $n_G = 5$, the coordinates on $\LLL = \PP^6$ are $(s_{12}:s_{23}:s_{34}:s_{45}:s_{51}:\M:\m)$. As we have seen in previous examples, and as was pointed out in Rk.~\ref{rem:restriction}, the Landau discriminant $\nabla_G(\LLL)$ might be reducible. In fact, its irreducible components may have different dimensions. In Tab.~\ref{tab:dimdeg}, we encoded the components of $\nabla_G$ for all $G$ from Fig.~\ref{fig:diagrams} with their dimension and degree in the following way. A bracket $[d_1, \ldots, d_k]_c$ indicates that $\nabla_G$ has $k$ irreducible components of codimension $c$ with degrees $d_1, \ldots, d_k$. These numbers were obtained by using the methods in Sec.~\ref{subsec:elimination} and/or the methods in Sec.~\ref{subsec:sampling}. The columns $t_{\texttt{symb}}$ and $t_{\texttt{num}}$ report computation times for the symbolic and the numerical approach respectively. The symbolic method we opted for was the approach based on sequential saturation, as in \eqref{eq:seqElim}, since this gave the best results. All computations were performed using \texttt{HomotopyContinuation.jl} v2.6.0 on a 16 GB MacBook Pro with an Intel Core i7 processor working at 2.6 GHz. We warn the reader that the symbolic method (implemented in \texttt{Macaulay2}) computes the full elimination ideal. The column $t_{\texttt{num}}$ only comprises the time for computing the codimension 1 components of $\nabla_G(\LLL)$. In the rows of Tab.~\ref{tab:dimdeg} for which $t_{\texttt{symb}} = \times$, the \texttt{Macaulay2} computation did not finish within reasonable time. In all other cases, the discriminants $\nabla_G(\LLL)$ are computed \emph{exactly}. The same result was obtained using $\texttt{Landau.jl}$ for the codimension 1 components. We note that for the diagrams \texttt{tdetri}, \texttt{debox}, \texttt{tdebox}, the existence of the codimension two components and their degrees can be verified numerically by adapting the methods from Sec.~\ref{subsec:sampling}. We arrive at the following computational result. 

\begin{theorem}\label{thm2}
For $G \in \{ \mathtt{acn}, \mathtt{par},\mathtt{npltrb}, \mathtt{tdetri}, \mathtt{debox}, \mathtt{tdebox}, \mathtt{pltrb}, \mathtt{dbox} \}$, let $\LLL = \PP^3 \subset \PP(\K_G)$ be the subspace with coordinates $(s:t:\M:\m)$ for which $\M_i = \M$, $\m_e = \m$. We have that $\nabla_{\mathtt{acn}}(\LLL)$ is as in Ex.~\ref{ex:acnode}, and 
\begin{align*}
    \nabla_{\mathtt{par}}(\LLL) &= \{ (\M - \m)(\M^2 - 10\M\m + 9\m^2  + 4\m s) = 0 \} ,\\
    \nabla_{\mathtt{npltrb}}(\LLL) &= \{ \m(s - \M) = 0 \}, \\
    \nabla_{\mathtt{tdetri}}(\LLL) &= \{ s-4\M = s-4 \m = 0 \} \cup \{ 9 \m^2-10 \m \M+\m s+\M^2 = 0\}, \\
    \nabla_{\mathtt{debox}}(\LLL) &=  \{ t-\M = t-\m = 0 \} \\
    &\quad\;\cup \{ 36 \m^2 \M-9 \m^2 s-28 \m \M t+10 \m s t  
     +4 \m t^2+4\M^2 t-s t^2 = 0 \}, \\
    \nabla_{\mathtt{tdebox}}(\LLL) &= \{ t-4 \m = 40 \m+4 \M-11 t = 0\} \\
    &\quad\;\cup \{ 36 \m^2-40 \m \M+16 \m s +4 \m t+4 \M^2-s t = 0 \},\\
    \nabla_{\mathtt{pltrb}}(\LLL) &= \{ s - 3\m = s-\M = 0 \} \cup \{ s- 3 \m = s-3\M = 0 \},\\
    \nabla_{\mathtt{dbox}}(\LLL) &= \{ (4 \m \M-\m s-4 \m t+s t) \Big(144 \m^2 \M^2-72 \m^2 \M s-96 \m^2 \M t+9 \m^2 s^2 \\
&\quad\; +24 \m^2 s t +16 \m^2 t^2-96 \m \M^3+24 \m \M^2 s+16 \m \M^2 t+40 \m \M s t \\
&\quad\; -10 \m s^2 t-8 \m s t^2+16 \M^4-8 \M^2 s t+s^2 t^2\Big) = 0 \}.
\end{align*}
\end{theorem}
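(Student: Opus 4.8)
The statement is established diagram by diagram, by carrying out the elimination procedure of Sec.~\ref{subsec:elimination} on the restricted kinematic space $\LLL = \PP^3$ and cross-checking the output with the numerical sampling of Sec.~\ref{subsec:sampling}. The plan is as follows. For each $G$ in the list, first read the internal edges and the external nodes off Fig.~\ref{fig:diagrams}, enumerate the spanning trees and spanning $2$-trees, and assemble $\U_G$ and $\F_G$ from the definitions in Sec.~\ref{sec:Symanzik}; then substitute $\M_i = \M$, $\m_e = \m$ to obtain $\F_G|_\LLL \in \C[s,t,\M,\m][\alpha_1,\dots,\alpha_{\E_G}]$. It is essential here to recompute the discriminant directly over $\LLL$ rather than to restrict $\nabla_G$: as Ex.~\ref{ex:box2} already illustrates, $\nabla_G(\LLL)$ can be strictly smaller than $\nabla_G \cap \LLL$, and, as Rk.~\ref{rem:restriction} warns, $\nabla_G(\LLL)$ is typically reducible, so $\LLL = \PP^3$ must be built into the equations from the start.

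Next comes the symbolic step. Form the bihomogeneous ideal $I = \langle \partial \F_G|_\LLL / \partial \alpha_e : e = 1,\dots,\E_G \rangle$ in $\C[s,t,\M,\m,\alpha_1,\dots,\alpha_{\E_G}]$, saturate it sequentially by $\alpha_1,\dots,\alpha_{\E_G}$ and then by $\U_G|_\LLL$ as in \eqref{eq:seqElim}; this produces the ideal of $\overline{Y \cap (X \times \LLL)}$. Eliminating $\alpha_1,\dots,\alpha_{\E_G}$ and using that $\PP^{\E_G-1}$ is complete (so the projection is already closed), the radical of the resulting elimination ideal is the vanishing ideal of $\nabla_G(\LLL)$. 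For $G \in \{\mathtt{acn},\mathtt{par},\mathtt{npltrb},\mathtt{dbox}\}$ this ideal is principal, and factoring its generator and checking irreducibility of each factor with \texttt{Macaulay2} yields the displayed products. For $G \in \{\mathtt{tdetri},\mathtt{debox},\mathtt{tdebox}\}$ the variety $\nabla_G(\LLL)$ is mixed-dimensional (a surface together with a line not contained in it), and for $G = \mathtt{pltrb}$ it is pure of codimension $2$; in these cases the elimination ideal is not principal, and one instead computes its minimal primes, recovering the stated unions. Throughout, saturating by $\U_G$ --- and not merely by $\prod_e \alpha_e$ --- is what removes spurious solutions with $\U_G = 0$; omitting it leaves extra factors such as the $4\M - s - t$ of Ex.~\ref{ex:acnode} in the elimination ideal.

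As an independent check, rerun the computation numerically with \texttt{Landau.jl}: the functions \texttt{degreeProjection}, \texttt{sampleProjection} and \texttt{interpolate\_deg} applied to the incidence equations \eqref{eq:incidenceeq} over $\LLL$, with \texttt{decompose} (monodromy loops) used to separate irreducible components, and the option \texttt{findSingular} for $G = \mathtt{npltrb}$, whose incidence scheme is non-reduced. One then verifies that the interpolated polynomials agree, up to a nonzero scalar, with the symbolic generators, and that the reported singular-value gaps certify the interpolation. The one diagram for which only the symbolic route applies is $\mathtt{pltrb}$, since its discriminant has codimension $2$ in $\LLL$.

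The main obstacle is not conceptual but computational: the \Groebner basis computations are heavy (by Tab.~\ref{tab:dimdeg}, $\mathtt{acn}$ runs for roughly two days and $\mathtt{dbox}$ for hours), so the numerical pipeline is in practice the more robust tool for the larger diagrams and for double-checking. The remaining care is bookkeeping --- tracking which components come from genuine toric solutions versus the locus $\U_G = 0$, and separating the codimension-one hypersurface parts from the lower-dimensional pieces. The concluding step is to confirm that the listed factorizations are exhaustive and that each displayed factor is irreducible.
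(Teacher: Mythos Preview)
Your proposal is correct and matches the paper's own approach: the theorem is a computational result obtained by the symbolic saturation-then-elimination procedure of Sec.~\ref{subsec:elimination} (with sequential saturation as in \eqref{eq:seqElim}) and independently confirmed via the numerical sampling pipeline of Sec.~\ref{subsec:sampling}. Your treatment of the mixed-dimensional cases (minimal primes for $\mathtt{tdetri}$, $\mathtt{debox}$, $\mathtt{tdebox}$, $\mathtt{pltrb}$) and the special handling of $\mathtt{npltrb}$ and $\mathtt{pltrb}$ is exactly what the paper does.
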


\begin{example}
The real section of $\nabla_{\mathtt{dbox}}(\LLL)$ is illustrated in Fig.~\ref{fig:nabla-dbox}. The figure shows that the limit $\lim_{\epsilon \rightarrow 0^+} \nabla_{\mathtt{dbox}}(\LLL) \cap \{ \m = \epsilon \}$ contains the curves $\{s = 0\}$, $\{t=0\}$ and $\{st - 4\M^2 = 0\}$ in the affine $(s,t)$-plane. However, one can check that $\alpha_1 \cdots \alpha_7 ~\U_{\mathtt{dbox}}$ is contained in the ideal generated by $\frac{\partial \F_{\mathtt{dbox}}}{\partial \alpha_e}\big|_{\m = 0}$, $e = 1, \ldots, 7$.
Therefore, the curves $\{\m = s = 0\}$, $\{\m = t = 0\}$ and $\{ \m = st - 4\M^2 = 0\}$ end up in $\nabla_{\mathtt{dbox}}(\LLL)$ by taking the closure in \eqref{eq:nablaG}.
\begin{figure}[!t]
    \centering
    \includegraphics[width=0.6\textwidth]{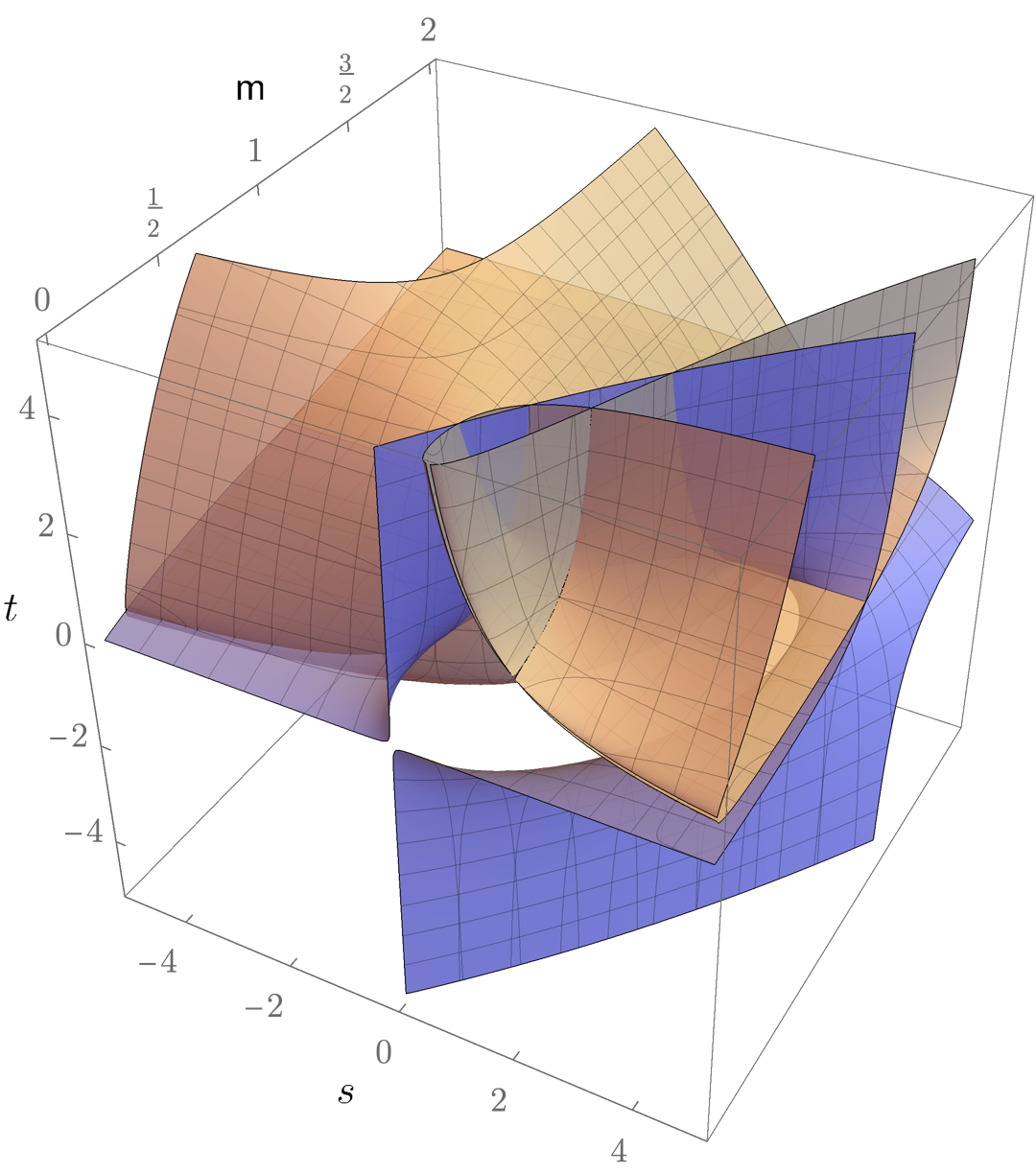}
    \caption{Real part of the Landau discriminant $\nabla_{\mathtt{dbox}}(\LLL)$ for the double-box diagram given in Thm.~\ref{thm2} with dehomogenization $\M=1$. The two components are illustrated in blue (degree $2$) and orange (degree $4$).
    }
    \label{fig:nabla-dbox}
\end{figure}
\end{example}

For the more complicated diagrams $G = \mathtt{env}$ and $G = \mathtt{pentb}$, the discriminant could not be computed symbolically. We discuss our results for these diagrams in the following two examples. 
\begin{example}[Envelope diagram $G = \mathtt{env}$] \label{ex:env}
In \cite{doi:10.1063/1.1664557}, the authors attempted to compute the Landau discriminant for $G = \texttt{env}$, but the results are limited to a numerical plot. Using our numerical sampling method, we find $5$ irreducible components
\be
\Delta_{\mathtt{env}}(\LLL) = \prod_{i=1}^{5} \Delta_{\mathtt{env},i},
\ee
where the first three are degree-$8$ with $83$ terms each:
\begin{dmath}[style={\small}]
\Delta_{\mathtt{env},1} = -16 s \M^7-432 \m^2 \M^6+20 s^2 \M^6+192 \m s \M^6+4 s t \M^6+1728 \m^3 \M^5-8 s^3 \M^5-240 \m
   s^2 \M^5-48 \m^2 s \M^5+216 \m^2 t \M^5-8 s^2 t \M^5-72 \m s t \M^5-2592 \m^4 \M^4+s^4 \M^4+96
   \m s^3 \M^4+492 \m^2 s^2 \M^4-27 \m^2 t^2 \M^4+s^2 t^2 \M^4+6 \m s t^2 \M^4-1280 \m^3 s
   \M^4-864 \m^3 t \M^4+2 s^3 t \M^4+134 \m s^2 t \M^4-84 \m^2 s t \M^4+1728 \m^5 \M^3-12 \m s^4
   \M^3-240 \m^2 s^3 \M^3-128 \m^3 s^2 \M^3+108 \m^3 t^2 \M^3-28 \m s^2 t^2 \M^3+48 \m^2 s t^2
   \M^3+2448 \m^4 s \M^3+1296 \m^4 t \M^3-40 \m s^3 t \M^3-408 \m^2 s^2 t \M^3+1232 \m^3 s t
   \M^3-432 \m^6 \M^2+30 \m^2 s^4 \M^2+224 \m^3 s^3 \M^2+2 \m s^2 t^3 \M^2-6 \m^2 s t^3 \M^2-468
   \m^4 s^2 \M^2-162 \m^4 t^2 \M^2+4 \m s^3 t^2 \M^2+136 \m^2 s^2 t^2 \M^2-468 \m^3 s t^2
   \M^2-1728 \m^5 s \M^2-864 \m^5 t \M^2+2 \m s^4 t \M^2+156 \m^2 s^3 t \M^2+156 \m^3 s^2 t
   \M^2-2052 \m^4 s t \M^2-28 \m^3 s^4 \M-72 \m^4 s^3 \M-20 \m^2 s^2 t^3 \M+76 \m^3 s t^3 \M+432
   \m^5 s^2 \M+108 \m^5 t^2 \M-32 \m^2 s^3 t^2 \M-48 \m^3 s^2 t^2 \M+576 \m^4 s t^2 \M+432 \m^6 s
   \M+216 \m^6 t \M-12 \m^2 s^4 t \M-136 \m^3 s^3 t \M+288 \m^4 s^2 t \M+1080 \m^5 s t \M+9 \m^4
   s^4+\m^2 s^2 t^4-4 \m^3 s t^4+2 \m^2 s^3 t^3+6 \m^3 s^2 t^3-54 \m^4 s t^3-108 \m^6 s^2-27
   \m^6 t^2+\m^2 s^4 t^2+20 \m^3 s^3 t^2-45 \m^4 s^2 t^2-162 \m^5 s t^2+10 \m^3 s^4 t+18 \m^4
   s^3 t-162 \m^5 s^2 t-108 \m^6 s t,
\end{dmath}
while the other two are obtained by relabelling:
\be
\Delta_{\mathtt{env},2} = \Delta_{\mathtt{env},1}|_{s \leftrightarrow t}, \qquad \Delta_{\mathtt{env},3} = \Delta_{\mathtt{env},1}|_{t \rightarrow u}.
\ee
where $u := 4\M {-} s {-} t$.
The remaining components are lengthy when written out in the variables $\M,\m,s,t$. However, noticing that they are permutation invariant with respect to the external legs, we express them in terms of the elementary symmetric functions
\be
\sigma_2 := st + tu + us, \qquad \sigma_3 := stu.
\ee
The degree-$9$ component is 
\be
\Delta_{\mathtt{env},4} = 4 \m^3 \sigma _3^2+64 \m^2 \M^3 (\m-\M)^4-\sigma _3 \left(27 \m^4-2 \m^2 \M^2+8 \m
   \M^3-\M^4\right) (\m-\M)^2.
\ee
Similarly, the last component is degree-$12$ with $81$ terms (versus $347$ before using permutation-invariant variables) and reads
\begin{dmath}[style={\small}]
\Delta_{\mathtt{env},5} = 4096 \M^{12}-65536 \m \M^{11}+475136 \m^2 \M^{10}-2048 \sigma _2 \M^{10}-2064384 \m^3 \M^9+24576 \m \sigma _2 \M^9+2048 \sigma _3 \M^9+5988352 \m^4 \M^8+256 \sigma _2^2 \M^8-147456 \m^2 \sigma _2 \M^8+14336 \m \sigma _3 \M^8-12222464 \m^5 \M^7-2048 \m \sigma _2^2 \M^7+606208 \m^3 \sigma _2 \M^7-313344 \m^2 \sigma _3 \M^7-512 \sigma _2 \sigma _3 \M^7+18006016 \m^6 \M^6+15360 \m^2 \sigma _2^2 \M^6+128 \sigma _3^2 \M^6-1847296 \m^4 \sigma _2 \M^6+1783808 \m^3 \sigma _3 \M^6-8704 \m \sigma _2 \sigma _3 \M^6-19300352 \m^7 \M^5-79872 \m^3 \sigma _2^2 \M^5+2560 \m \sigma _3^2 \M^5+4096000 \m^5 \sigma _2 \M^5-5031936 \m^4 \sigma _3 \M^5+82432 \m^2 \sigma _2 \sigma _3 \M^5+14946304 \m^8 \M^4-1024 \m^2 \sigma _2^3 \M^4+230912 \m^4 \sigma _2^2 \M^4+27136 \m^2 \sigma _3^2 \M^4+32 \sigma _2 \sigma _3^2 \M^4-6348800 \m^6 \sigma _2 \M^4+7350272 \m^5 \sigma _3 \M^4+1280 \m \sigma _2^2 \sigma _3 \M^4-390656 \m^3 \sigma _2 \sigma _3 \M^4-8159232 \m^9 \M^3+4096 \m^3 \sigma _2^3 \M^3-32 \sigma _3^3 \M^3-374784 \m^5 \sigma _2^2 \M^3-438784 \m^3 \sigma _3^2 \M^3-1408 \m \sigma _2 \sigma _3^2 \M^3+6602752 \m^7 \sigma _2 \M^3-4241408 \m^6 \sigma _3 \M^3-4096 \m^2 \sigma _2^2 \sigma _3 \M^3+1171968 \m^4 \sigma _2 \sigma _3 \M^3+2981888 \m^{10} \M^2-6144 \m^4 \sigma _2^3 \M^2+1184 \m \sigma _3^3 \M^2+343040 \m^6 \sigma _2^2 \M^2+1443456 \m^4 \sigma _3^2 \M^2-5440 \m^2 \sigma _2 \sigma _3^2 \M^2-4360192 \m^8 \sigma _2 \M^2-1185792 \m^7 \sigma _3 \M^2+43520 \m^3 \sigma _2^2 \sigma _3 \M^2-2141696 \m^5 \sigma _2 \sigma _3 \M^2-655360 \m^{11} \M+4096 \m^5 \sigma _2^3 \M-7072 \m^2 \sigma _3^3 \M-165888 \m^7 \sigma _2^2 \M-242688 \m^5 \sigma _3^2 \M+67200 \m^3 \sigma _2 \sigma _3^2 \M+1646592 \m^9 \sigma _2 \M+1425408 \m^8 \sigma _3 \M-129024 \m^4 \sigma _2^2 \sigma _3 \M+2366976 \m^6 \sigma _2 \sigma _3 \M+65536 \m^{12}+\sigma _3^4-1024 \m^6 \sigma _2^3+13024 \m^3 \sigma _3^3-48 \m \sigma _2 \sigma _3^3+33024 \m^8 \sigma _2^2-3433728 \m^6 \sigma _3^2+768 \m^2 \sigma _2^2 \sigma _3^2-149472 \m^4 \sigma _2 \sigma _3^2-270336 \m^{10} \sigma _2+458752 \m^9 \sigma _3-4096 \m^3 \sigma _2^3 \sigma _3+137472 \m^5 \sigma _2^2 \sigma _3-1276416 \m^7 \sigma _2 \sigma _3.
\end{dmath}
To obtain the correct rational coefficients of this component, we used higher precision as explained in Rk.~\ref{rem:hp}. The total degree of $\Delta_{\mathtt{env}}(\LLL)$ is $45$.
\end{example}

\begin{example}[The penta-box diagram $G = \mathtt{pentb}$] \label{ex:pentb}
 For $G = \mathtt{pentb}$, we consider the subspace $\LLL = \PP^6 \subset \PP(\K_{\mathtt{pentb}})$ with coordinates $(s_{12}:s_{23}:s_{34}:s_{45}:s_{51}:\M:\m)$ given by $\M_i = \M, \m_e = \m$. We used the iterative eigenvalue technique from Rk.~\ref{rem:iter} to compute the kernel vector of a $22280 \times 18564$ matrix. This gives a homogeneous polynomial of degree 12 in the 7 parameters with 2601 terms. The discriminant can be found at \url{https://mathrepo.mis.mpg.de/Landau/}.
\end{example}

\begin{remark}[Acnode diagram]\label{ex:acnode2}
The diagram $\mathtt{acn}$ was previously studied in \cite{doi:10.1063/1.1703752,doi:10.1063/1.1664557} with the specific assignment of masses (in the notation of Fig.~\ref{fig:acn}):
\be
\M_1 = \M_3 = M^2, \qquad \M_2 = \M_4 = m^2, \qquad \m_e = 1.
\ee
This defines a linear subspace $\LLL \subset \PP(\K_{\mathtt{acn}})$. Using either of the techniques presented in the above sections we find that the Landau discriminant has two irreducible components: $\Delta_{\mathtt{acn}}(\LLL) = \Delta_{\mathtt{acn},1} \cdot \Delta_{\mathtt{acn},2}$.
The first factor has $184$ terms and can be found at \url{https://mathrepo.mis.mpg.de/Landau/}.
While it has not appeared in the literature, Ref.~\cite[Eq. (4)]{doi:10.1063/1.1703752} provided its parametrization in the chart where $m=1$:
\begin{align}
s &= 5 + 4 \cos \phi + 2 \left(2 - \tfrac{1}{2}M^2 + \cos \theta + \cos \phi \right) \sin \phi / \sin \theta,\\
t &= 5 + 4 \cos \theta + 2 \left(2 - \tfrac{1}{2} M^2 + \cos \theta + \cos \phi \right) \sin \theta / \sin \phi,
\end{align}
with $\theta + \phi = \pi/3$. We checked that this indeed correctly parametrizes $\Delta_{\mathtt{acn},1}$. For a range of masses satisfying $M^2 \geq 4 + 2\sqrt{2}$, this curve develops cusps in the real $(s,t)$-space on the physical sheet known as \emph{acnodes} and \emph{crunodes} \cite{doi:10.1063/1.1703752}, which provided an explicit counterexample to the validity of the Mandelstam representation.

The remaining component of the Landau discriminant is given by
\begin{align}
\Delta_{\texttt{acn},2} = &\left( u - m^2(M^2 {-} 1)\right)\left((s{-}1)(t{-}1) + (M^2 {-}1)(1{+}2m^2 {-}M^2) \right)+ m^4(m^2 {-} 4)(M^2 {-} 1),
\end{align}
where $u := 2m^2 + 2M^2 - s- t$, in agreement with the result quoted in \cite[Eq. (9)]{doi:10.1063/1.1664557}. It was argued in Ref.~\cite{doi:10.1063/1.1664557} that this component never lies on the physical sheet.

One can check that the above results evaluated at $m=M$ match those of the Ex.~\ref{ex:acnode} evaluated at $\M = M^2$, $\m=1$.
\end{remark}

\subsection{\label{sec:CN-analysis}Coleman--Norton Analysis of the Envelope Diagram}

Recall that the Feynman integral \eqref{eq:Feynman} is in general a multi-valued function on the kinematic space $\K_G$.
The physically-relevant branch (consistent with causality) is defined by the $i\epsilon$ prescription in \eqref{eq:Feynman} within the \emph{physical regions} ${\cal P}_G$, given by a union of disconnected subsets of $\mathbb{RP}(\K_G)$ corresponding to the energies of the external momenta $p_i$ being real; see, e.g., \cite{PhysRev.117.1159}. For instance, when $n_G = 4$ the physical regions are given by
\be\label{eq:PG}
{\cal P}_G = \R \PP (\K_G) \cap \{ \det \big(p_i {\cdot} p_j \big)_{i,j=1,2,3} > 0\}.
\ee
This will be illustrated concretely in Fig~\ref{fig:ResRet}. The Landau equations give necessary but not sufficient conditions for singularities of Feynman integrals. That is, not all points on the Landau discriminant affect the numerical evaluation of \eqref{eq:Feynman}. The purpose of this subsection is to qualitatively identify the parts that do.

For a Feynman diagram $G$, let $\LLL \subset \PP(\K_G)$ be a nonempty subvariety and let $\nabla_G(\LLL) \subset \LLL$ be the associated Landau discriminant. 
It is physically meaningful to ask whether a point on $\nabla_G(\LLL)$ leads to singular points of the hypersurface $\{ \F_G = 0 \}$ on the projectivized integration domain
\be
\R \PP^{\E_G -1}_+ = \{ (\alpha_1: \cdots : \alpha_{\E_G}) \in \R\PP^{\E_G -1} ~|~ \alpha_e > 0 \text{ for all $e$ } \} \subset X.
\ee
This motivates the following definition, in which we use our previous notation $\pi_\LLL: Y \rightarrow \LLL$ for the projection of the incidence variety $Y \subset X \times \LLL$ to $\LLL$ (Rk.~\ref{rem:restriction}).
\begin{definition}[$\alpha$-positive point]
A point $q \in \nabla_G(\LLL)$ is called \emph{$\alpha$-positive} if $
\pi_{\LLL}^{-1}(q) \cap (\R \PP^{\E_G -1}_+ \times \{q \} )$ is nonempty.
\end{definition}
Note that the above definition applies to any $q \in \nabla_G(\LLL)$, but the physical interpretation is more subtle.
The significance of an $\alpha$-positive singularity in a physical region, $q \in \nabla_G(\LLL) \cap {\cal P}_G$, was explained by Coleman and Norton \cite{Coleman:1965xm}: it represents kinematics for which the internal particles of the Feynman diagram propagate along their classical trajectories, where each Schwinger parameter $\alpha_e$ is real and (in the massive case, $\m_e > 0$) proportional to the proper time elapsed between pairs of vertices of the Feynman diagram $G$. Since the value of a Feynman integral away from ${\cal P}_G$ needs to be defined by analytic continuation, it is in general much more difficult to determine if a given point on $\nabla_G(\LLL)$ is a singularity of the integral on the appropriate sheet.

\begin{example}
For the family of banana diagrams $\mathtt{B}_{\E}$, the explicit solution of the Landau discriminant was given in Sec.~\ref{sec:banana}. Using \eqref{eq:BE-alphas}, the only $\alpha$-positive component is given by $\eta_e = 1$ for all $e$, corresponding to
\be\label{eq:BE-s}
s = ({\textstyle\sum}_{e=1}^{\E}m_e)^2
\ee
known as the normal threshold. The remaining $2^{\E-1}-1$ components are singularities on sheets in the $s$-plane accessible by analytic continuation through the branch cut extending from \eqref{eq:BE-s} to $s=+\infty$. 
\end{example}

Below, we will briefly illustrate a qualitative Coleman--Norton analysis on the example of the envelope diagram, $G = \mathtt{env}$, with generic masses. We focus on its leading singularities only. We point out that the presented techniques can be applied to other diagrams in a straightforward way. 

\begin{proposition}
Consider the affine plane $\LLL = \C^2 \subset \PP(\K_{\mathtt{env}})$ obtained by setting all $\M_i, \m_e$ to fixed real values, with coordinates $(s,t)$. All the $\alpha$-positive points on the Landau discriminant $\nabla_{\mathtt{env}}(\LLL)$ are real.
\end{proposition}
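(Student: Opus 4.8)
The plan is to exploit the fact that $\F_{\mathtt{env}}$ is affine-linear in the Mandelstam variables $s$ and $t$, so that at a fixed Schwinger point the Landau equations become a \emph{linear} system in $(s,t)$ with real coefficients. Write $\F_{\mathtt{env}} = s\,P(\alpha) + t\,Q(\alpha) + R(\alpha)$ with $P = \F_{\mathtt{env},12} - \F_{\mathtt{env},13}$ and $Q = \F_{\mathtt{env},23} - \F_{\mathtt{env},13}$ (integer coefficients), and $R$ collecting the fixed, real mass terms. If $q = (s,t) \in \nabla_{\mathtt{env}}(\LLL)$ is $\alpha$-positive, then by definition there is a representative $\alpha^\ast \in \R^{6}_{>0}$ of a point of $\R\PP^{5}_{+}$ with $\frac{\partial \F_{\mathtt{env}}}{\partial \alpha_e}(\alpha^\ast; s, t) = 0$ for $e = 1, \ldots, 6$; the side condition $\U_{\mathtt{env}}(\alpha^\ast) \neq 0$ is automatic here, since $\U_{\mathtt{env}}$ has nonnegative coefficients and hence is strictly positive on $\R^{6}_{>0}$. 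These six equations amount to the real linear system $M(\alpha^\ast)\,(s,t)^{\top} = -c(\alpha^\ast)$, where $M(\alpha^\ast)$ is the $6 \times 2$ real matrix with rows $\big(\partial_{\alpha_e} P(\alpha^\ast),\, \partial_{\alpha_e} Q(\alpha^\ast)\big)$ and $c(\alpha^\ast) = \big(\partial_{\alpha_e} R(\alpha^\ast)\big)_{e}$. The argument then hinges on showing that $M(\alpha^\ast)$ has rank $2$ for every $\alpha^\ast \in \R^{6}_{>0}$: granting this, the system is consistent (it is solved by $q$ over $\C$) and has a \emph{unique} solution, which is necessarily real since $M(\alpha^\ast)$ and $c(\alpha^\ast)$ are real — for instance it equals $-(M^{\top}M)^{-1}M^{\top}c$. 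Hence $q \in \R^2$.

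To prove the rank statement I would use the combinatorics of $\mathtt{env}$, which is the complete graph $K_4$ with one external leg attached at each of its four vertices. With this labeling each of the three ``two-channel'' Symanzik polynomials is a single monomial, $\F_{\mathtt{env},12} = yz$, $\F_{\mathtt{env},23} = xy$, $\F_{\mathtt{env},13} = xz$, where $x$, $y$, $z$ are the products of the two Schwinger parameters over each of the three pairs of opposite edges of $K_4$. Thus $P = z(y - x)$ and $Q = x(y - z)$, and computing a few $2 \times 2$ minors of $M$ shows that each of them is, up to sign, a monomial in the $\alpha_e$ times one of the three signed sums $x - y - z$, $y - x - z$, $z - x - y$. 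On $\R^{6}_{>0}$ these monomial prefactors are strictly positive, and no two of the three signed sums can vanish simultaneously (that would force one of $x, y, z$ to be $0$); hence at least one $2 \times 2$ minor of $M$ is nonzero at every point of $\R^{6}_{>0}$, so $M$ has rank $2$ there. Note that $M$ does not involve the masses at all, so this property — and therefore the whole conclusion — holds for arbitrary fixed real $\M_i$, $\m_e$.

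The main obstacle is precisely this rank computation. It does not follow formally from Thm.~\ref{thm1}, which only gives full rank of the incidence matrix over the \emph{entire} kinematic space, not after freezing the masses and restricting to the two columns indexed by $s$ and $t$; it genuinely uses the structure of $\mathtt{env}$, and the clean factorization of the minors reflects the $\F_{\mathtt{env},S}$ being monomials. For a general four-point diagram the polynomials $\F_{G,S}$ are sums of several monomials and the minors no longer split so transparently, so a uniform argument would instead verify, diagram by diagram, that some finite collection of $2 \times 2$ minors of the analogous matrix has no common zero in the positive orthant — something one could certify explicitly via a Positivstellensatz argument.
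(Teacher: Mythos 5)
Your proof is correct and takes essentially the same approach as the paper: at a fixed positive Schwinger point the Landau equations become a real-linear system in $(s,t)$, and the key point is that the associated $6\times 2$ coefficient matrix has full rank on $\R^6_{>0}$, so the unique solution is forced to be real (the paper phrases this equivalently as the vanishing of $\Im\,\partial_{\alpha_e}\F_{\mathtt{env}}$ forcing $\Im(s)=\Im(t)=0$). The paper dismisses the rank verification with ``one checks easily''; your factorization of the nontrivial $2\times2$ minors as positive monomials times $x-y-z$, $y-x-z$, $z-x-y$ — with the observation that the sum of any two of these equals $-2$ times one of $x,y,z>0$, hence cannot both vanish — correctly supplies the omitted check.
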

\begin{proof}
We need to show that if the Landau equations (restricted to $\LLL$) have a positive solution, i.e., a solution $\alpha \in \R \PP_+^{5}$, we must have $\Im(s) = 0$ and $\Im(t) = 0$. Viewing the second Symanzik polynomial $\F_{\mathtt{env}}$ as a polynomial in $s,t$ with real coefficients parametrized by polynomials in the Schwinger parameters $\alpha_e$, its imaginary part equals
\be
\Im (\F_{\mathtt{env}}) = \Im (s)\, \alpha_1 \alpha_3 \left( \alpha_5 \alpha_6 - \alpha_2 \alpha_4 \right) + \Im (t)\, \alpha_2 \alpha_4 \left( \alpha_5 \alpha_6 - \alpha_1 \alpha_3 \right).
\ee
Suppose $(\alpha_1: \ldots: \alpha_6)$ is a positive solution to the Landau equations, so that it satisfies $\Im \frac{\partial \F_{\mathtt{env}}}{\partial \alpha_e} = 0$. One checks easily that this implies $\Im(s) = \Im(t) = 0$.
\end{proof}
The above result implies that, for finding $\alpha$-positive points with fixed masses, it is sufficient to investigate the real part $\R \nabla_{\mathtt{env}}(\LLL)$ of the Landau discriminant in $\R \LLL = \R^2$ with coordinates $(\Re(s), \Re(t))$. Our analysis will be qualitative, in the sense that we investigate the $\alpha$-positive points by means of a plot. 

The approach we present is numerical.
It bypasses the interpolation stage described in Sec.~\ref{subsec:sampling}. We argue that using these techniques, plotting $\nabla_G$ can be achieved for complicated diagrams for which it is not feasible to compute $\Delta_G$ symbolically. 

The strategy is to sample $\R \nabla_G(\LLL)$ (or its higher-dimensional analogues for $n_G > 4$) by intersecting it with a family of parallel lines in $\R \LLL$. Using sufficiently many of such lines, we will have sampled the discriminant densely enough for a detailed plot. In \texttt{Julia}, augmenting the Landau equations \texttt{LE} with such a pencil of lines can be done via 
\begin{minted}{julia}
@var e
line = randn()*s + randn()*t + e
LE_line = System([LE;line], parameters = [e])
\end{minted}
The parameter \texttt{e} of the family of lines represents the offset from the origin. We define an array \texttt{targetpars} of values of \texttt{e} for which we want to compute the intersections. Suppose we want to plot the Landau discriminant inside a bounded real box $B \subset \R \LLL = \R^2$. With the help of the basic auxiliary functions \texttt{filter_in_box}, which returns all points $y \in Y$ for which $\pi_{\LLL}(y) \in B$, and $\texttt{s_t_coordinates}$, which returns the projection $\pi_{\LLL}(y)$ of these points, the sampling can be done as follows.
\begin{minted}[firstnumber=last]{julia}
samples = []
for ee in targetpars
    R = solve(LE_line; target_parameters = [ee])
    sols = filter_in_box(solutions(R))
    samples_in_box = s_t_coordinates(sols)
    samples = push!(samples, samples_in_box...)
end
\end{minted}
We apply this to the Landau discriminant of the envelope diagram $G=\mathtt{env}$ evaluated at generic masses. As listed in Tab.~\ref{tab:dimdeg}, it is irreducible with degree $114$ and hence would be too impractical to compute symbolically. Instead, using the above algorithm we plot $\R \nabla_{\mathtt{env}}(\LLL)$, where $\LLL = \C^2$ is defined by the masses
\begin{align}\label{eq:env-masses1}
(\M_1, \M_2, \M_3, \M_4) &= (4,5,6,7),\\
\qquad (\m_1, \m_2, \m_3, \m_4, \m_5, \m_6) &= (\sfrac{1}{4}, \sfrac{1}{5}, \sfrac{1}{6}, \sfrac{1}{7}, \sfrac{10}{8}, \sfrac{10}{9}).
\end{align}
\begin{figure}[t]
    \centering
    \includegraphics[width=\textwidth]{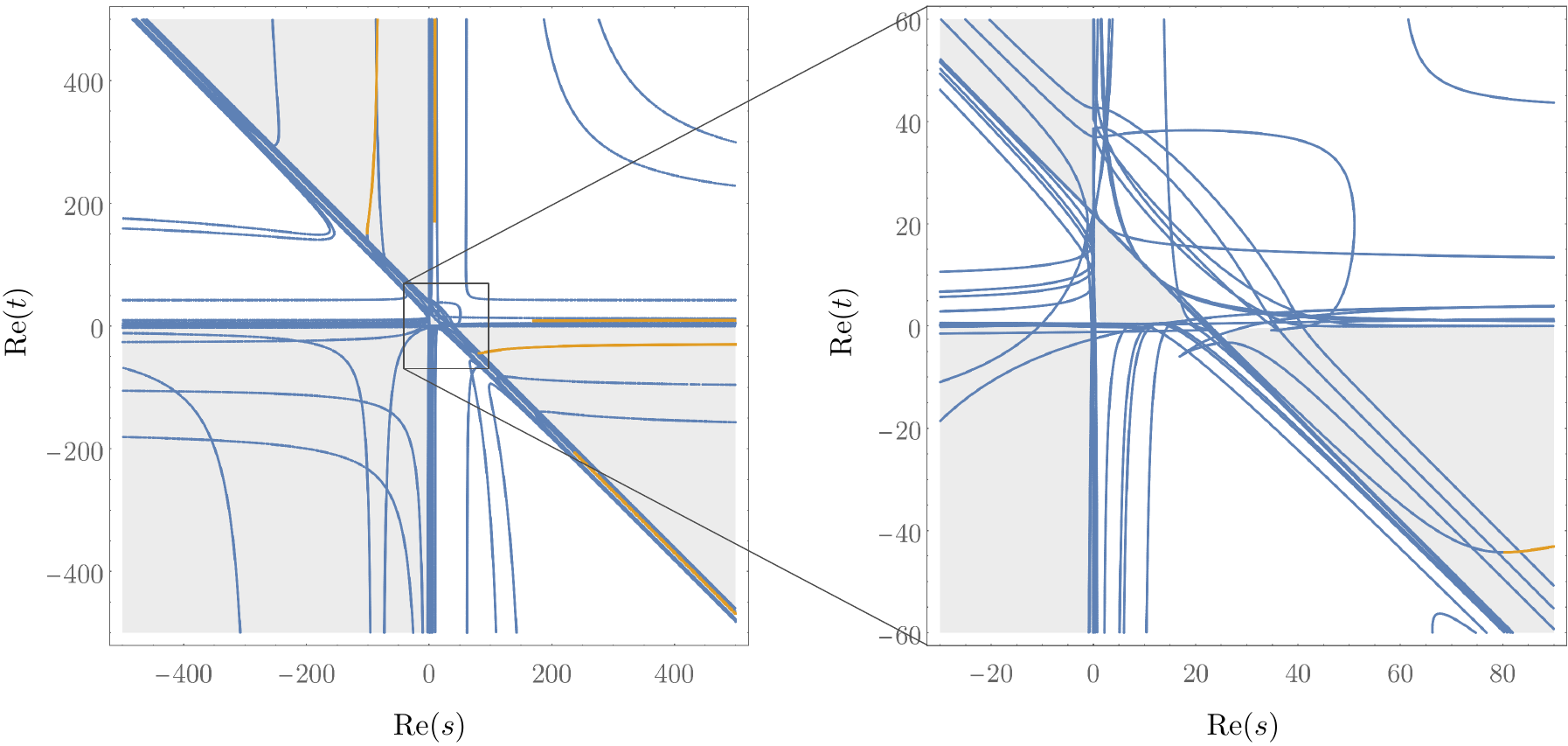}
    \caption{Leading Landau singularities for $G = \mathtt{env}$ with masses (\ref{eq:env-masses1}) at two different magnifications, featuring $\alpha$-positive (orange) and non-$\alpha$-positive (blue) curves. Physical regions ${\cal P}_G$ have $4$ disconnected components (gray). There are $3$ $\alpha$-positive curve segments intersecting ${\cal P}_G$, also known as physical-sheet singularities.}
    \label{fig:ResRet}
\end{figure}
The result is displayed in Fig.~\ref{fig:ResRet}. It features $5$ disconnected $\alpha$-positive parts (orange), among multiple other curve segments that do no satisfy the $\alpha$-positive criteria (blue). In this example, the physical regions ${\cal P}_G$ (gray) are $4$ disconnected regions carved out by the inequality in \eqref{eq:PG} translating to
\be
-s^2t-s t^2+22 s t-3 s+t-22 > 0,
\ee
for the choice of masses \eqref{eq:env-masses1}.
Within the confines of the plot, only $3$ out of the $\alpha$-positive curve segments intersect a physical region, including the one in the bottom-right corner of the left panel in Fig.~\ref{fig:ResRet}. In the physics language, those $3$ curve segments are said to lie on the \emph{physical sheet}. The right panel illustrates how one of the curves stops being $\alpha$-positive around $(s,t) \approx (80.4, -44.3)$, which is where it leaves the physical sheet. We found that at this point, the curve intersects one of the $\alpha$-positive branches of the subleading Landau singularity of $G$ obtained by shrinking the edge $1$.

One can easily include subleading Landau singularities of $\mathtt{env}$, as listed in Ex.~\ref{ex:subleading}, which are all straight lines (constant $s$, $t$, or $u = \sum_{i=1}^{4} \M_i {-} s {-} t$). They also feature many $\alpha$-positive components, but we do not include them here to avoid cluttering Fig.~\ref{fig:ResRet} further.

We note that anomalous thresholds have effects visible at the energy scales a couple of orders of magnitude larger than the mass scales involved in this problem.

\section{\label{sec:polytopes}Landau Polytopes}

This section is concerned with polytopes associated to Feynman diagrams and Landau equations. As explained in Sec.~\ref{subsec:motiv}, what links such polytopes to Landau discriminants is the theory of ${\cal A}$-discriminants and toric resultants. The faces of these polytopes inherit factorization properties of Symanzik polynomials. These factorization properties are listed in Sec.~\ref{sec:factorization}, and translated to the polytope setting in Sec.~\ref{subsec:faces}. In Sec.~\ref{sec:banana-polytopes}, we apply the results to the banana diagrams $\mathtt{B}_\E$ (see Fig.~\ref{fig:diagrams}). Finally, in Sec.~\ref{subsec:degree} we deduce bounds on the degree of the Landau discriminant from the theory of ${\cal A}$-discriminants and toric resultants.  

\subsection{Motivation} \label{subsec:motiv}

Convex lattice polytopes naturally arise in the context of toric geometry. They encode the stratification of normal, compact toric varieties into torus orbits. For background, see \cite{cox2011toric}. Such toric varieties are often considered as solution spaces for families of systems of equations with a fixed monomial support, see for instance \cite[Ch.~5]{telen2020thesis}. In particular, they provide a natural setting for studying \emph{discriminants} and \emph{resultants}.

\emph{Discriminants} are classical objects in algebraic geometry. The class of \emph{${\cal A}$-discriminants} was extensively studied in the pioneering work of Gelfand, Kapranov, and Zelevinsky \cite{gelfand2008discriminants}. The symbol ${\cal A}$ typically denotes a subset of lattice points in $\Z^n$, corresponding to the exponent vectors $a$ occurring in a Laurent polynomial $f = \sum_{a \in {\cal A}} c_a x^{a}$. The \emph{${\cal A}$-discriminant} is the Zariski closure of the set of all $(c_a) \in \C^{|{\cal A}|}$ for which the hypersurface $\{f = 0 \}$ has singularities in the algebraic torus $(\C^*)^n$. It naturally lives in the projective space $\PP(\C^{|{\cal A}|}) = \PP^{|{\cal A}|-1}$, and can be identified with the dual of the projective toric variety $X_{{\cal A}} \subset (\PP^{|{\cal A}|-1})^{\vee}$. Properties such as the degree of the ${\cal A}$-discriminant can be understood from the combinatorics of the convex polytope $\PPP = \textup{Conv}(\A) \subset \R^n$. The associated toric variety $X_{\PPP}$ is the normalization of $X_{{\cal A}}$. We would like to apply some standard results in the theory of ${\cal A}$-discriminants to our study of Landau discriminants. The role of the polynomial $f$ will be played by the second Symanzik polynomial $\F_G$ (with $n = \E_G$). Its monomials in the Schwinger parameters $\alpha_1, \ldots, \alpha_{\E_G}$ give the set ${\cal A}$, and its coefficients are parametrized linearly by the kinematic space $\PP(\K_G) \subset \PP^{|{\cal A}|-1}$. This justifies the study of the corresponding \emph{Symanzik polytope}, i.e.,~the \emph{Newton polytope} $\FF_G$ of $\F_G$, and also the Newton polytope of $\U_G$ will make a natural appearance. We warn the reader that, apart from restricting to the kinematic space or even smaller subspaces $\LLL$ of $\PP^{|{\cal A}|-1}$, an important difference with ${\cal A}$-discriminant analysis is that we discard singularities of $\{\F_G = 0\}$ inside the locus $\{\U_G = 0\}$. Related recent applications of $\cal A$-discriminants to holonomic systems for Feynman integrals include \cite{delaCruz:2019skx,Klausen:2019hrg,Feng:2019bdx,Tellander:2021xdz}.

\emph{Resultants} are closely related to discriminants. They encode parameter values for which overdetermined systems of equations admit solutions. A well-understood class of resultants is that of \emph{mixed $({\cal A}_0, \ldots, {\cal A}_{n})$-resultants}. The set $\A_i \subset \Z^n$ encodes monomials occurring in a Laurent polynomial $f_i = \sum_{\alpha \in \A_i} c_{i,\alpha} x^\alpha$. The $({\cal A}_0, \ldots, {\cal A}_{n})$-resultant polynomial $\textup{Res}_{{\cal A}_\bullet}$ is the unique polynomial (up to scaling) vanishing on all points $((c_{0,\alpha}), \ldots, (c_{n,\alpha})) \in \C^{|{\cal A}_0|} \times \cdots \times \C^{|{\cal A}_{n}|}$ for which $f_0 = \cdots = f_n = 0$ has a solution in $(\C^*)^n$, under the assumption that this set has codimension 1. Let $\PPP_i = \Conv({\cal A}_i) \subset \R^n$ and let $\LL = \PPP_0 \oplus \cdots \oplus \PPP_n$ be the Minkowski sum (defined below \eqref{eq:FUDelta}). One can show that each of the $f_i$ defines a Cartier divisor $V_{X_\LL}(f_i)$ on the normal toric variety $X_\LL$, and that $((c_{0,\alpha}), \ldots, (c_{n,\alpha})) \in V(\textup{Res}_{{\cal A}_\bullet})$ if and only if $V_{X_\LL}(f_0) \cap \cdots \cap V_{X_\LL}(f_n) \neq \emptyset$. This means that the resultant characterizes precisely when $f_0 = \cdots = f_n = 0$ has a solution \emph{in the toric variety $X_{\LL}$}. The construction automatically takes into account solutions on the boundary $X_{\LL} \setminus (\C^*)^n$ of the torus in this toric compactification. Recent efforts in numerical algebraic geometry have lead to methods for detecting and computing such solutions on the boundary \cite{duff2020polyhedral,bender2020toric,telen2020numerical}. Here, we will take the first steps in studying the polytopes $\LL$ associated to this compactification in the setting of Landau discriminants. We set $n = \E_G - 1$ and the polynomials $f_i$ are $\frac{\partial \F_G}{\partial \alpha_{i+1}}$ for $i = 0, \ldots, \E_G-1$ (here the essential number of variables is indeed $\E_G-1$, after dehomogenizing $\alpha_{\E_G} = 1$). The \emph{Landau polytope} $\LL_G$ associated to $G$ will be defined as the Minkowski sum of the polytopes $\Newt(f_i)$. Again, for our purposes, the coefficients $((c_{0,\alpha}), \ldots, (c_{n,\alpha}))$ are parametrized linearly by the kinematic space, and we ignore solutions in the hypersurface given by $\{ \U_G = 0 \}$.

In the above discussion, we have presented two, generally \emph{different} toric compactifications of the variety $X$ from \eqref{eq:defX}: $X \subset X_{\FF_G}$ and $X \subset X_{\LL_G}$. The former comes from interpreting the Landau discriminant analysis from an ${\cal A}$-discriminant point of view, while the latter comes from computing toric resultants. The boundaries $X_{\FF_G} \setminus X$ and $X_{\LL_G} \setminus X$ consist of the usual toric exceptional divisors, together with the hypersurface defined by $\U_G$ in the torus. The families of polytopes coming from these constructions are interesting combinatorial objecs in their own right. We will see in the next subsections, for instance, that their faces have factorization properties stemming from similar properties of the Symanzik polynomials. 

\subsection{\label{sec:factorization}Factorization Properties of Symanzik Polynomials}

Let $G$ be a Feynman diagram and let $\gamma \subset G$ be a connected subdiagram. We denote by $\U_G\big|_{\alpha_\gamma \rightarrow \epsilon \alpha_\gamma}$ and $\F_G\big|_{\alpha_\gamma \rightarrow \epsilon \alpha_\gamma}$ the result of replacing $\alpha_e$ by $\epsilon \alpha_e$ for all edges $e \in \gamma$ in $\U_G$ and $\F_G$ respectively. The symbol $\L_\gamma$ denotes the number of loops in $\gamma$ and the contraction $G/\gamma$ is obtained from $G$ by shrinking all the edges and vertices in $\gamma$ to a point. We start by recalling the following well-known result.
\begin{proposition}\label{prop:prop3}
For any connected subdiagram $\gamma \subset G$, we have
\begin{gather}
\U_G\big|_{\alpha_\gamma \rightarrow \epsilon \alpha_\gamma} = \epsilon^{\L_\gamma}\, \U_\gamma\, \U_{G/\gamma} + {\cal O}(\epsilon^{\L_\gamma + 1}),\label{eq:UGe}\\
\F_{G}\big|_{\alpha_\gamma \rightarrow \epsilon \alpha_\gamma} = \epsilon^{\L_\gamma}\, \U_{\gamma}\, \F_{G/\gamma} + {\cal O}(\epsilon^{\L_\gamma + 1}).\label{eq:FGe}
\end{gather}
\end{proposition}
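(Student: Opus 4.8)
The plan is to argue directly from the combinatorial definitions of the Symanzik polynomials in terms of spanning trees and spanning $2$-trees, tracking the power of $\epsilon$ carried by each monomial. For \eqref{eq:UGe}: a spanning tree $T \in \mathcal{T}_G$ contributes $\prod_{e \notin T}\alpha_e$, and after $\alpha_\gamma \to \epsilon\alpha_\gamma$ this monomial picks up $\epsilon^{k(T)}$ with $k(T) = \#\{e \in \gamma : e \notin T\}$. Since $T \cap \gamma$ is a subforest of the connected graph $\gamma$ it has at most $|V(\gamma)| - 1$ edges, so $k(T) \ge \E_\gamma - (|V(\gamma)| - 1) = \L_\gamma$, with equality exactly when $T \cap \gamma$ is a spanning tree of $\gamma$. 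I would then check that $T \mapsto (T\cap\gamma,\, T/(T\cap\gamma))$ is a bijection between $\{T \in \mathcal{T}_G : k(T) = \L_\gamma\}$ and $\mathcal{T}_\gamma \times \mathcal{T}_{G/\gamma}$: contracting the spanning tree $T\cap\gamma$ of $\gamma$ inside $T$ leaves a spanning tree of $G/\gamma$, and conversely gluing a spanning tree of $\gamma$ onto the apex of a spanning tree of $G/\gamma$ gives a connected subgraph of $G$ with $|V(G)|-1$ edges, hence a spanning tree. Under this bijection the complement splits as $E(G)\setminus T = (E(\gamma)\setminus(T\cap\gamma)) \sqcup (E(G/\gamma)\setminus(T/(T\cap\gamma)))$, so the monomial factors; summing gives that the coefficient of $\epsilon^{\L_\gamma}$ is $\U_\gamma\,\U_{G/\gamma}$.

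For \eqref{eq:FGe} I would use the representation $\F_G = \sum_{\{S,\bar S\}} s_I\,\F_{G,S} - (\sum_e \m_e\alpha_e)\,\U_G$. The mass term is immediate from \eqref{eq:UGe}: $(\sum_e\m_e\alpha_e)|_{\alpha_\gamma\to\epsilon\alpha_\gamma} = \sum_{e\notin\gamma}\m_e\alpha_e + \epsilon\sum_{e\in\gamma}\m_e\alpha_e$, so its product with $\U_G|_{\alpha_\gamma\to\epsilon\alpha_\gamma}$ has leading term $\epsilon^{\L_\gamma}(\sum_{e\notin\gamma}\m_e\alpha_e)\,\U_\gamma\,\U_{G/\gamma}$, and $\sum_{e\notin\gamma}\m_e\alpha_e$ is precisely the mass sum of $G/\gamma$. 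For each $\F_{G,S}$, a $2$-tree $T_S\sqcup T_{\bar S}$ contributes $\epsilon$ to the power $\E_\gamma - \#\{e\in\gamma : e\in T_S\cup T_{\bar S}\}$. The crucial observation is that if $\gamma$ has vertices on both sides of the partition (say $a\ge 1$ on the $T_S$-side and $|V(\gamma)|-a\ge 1$ on the $T_{\bar S}$-side), then $\gamma\cap T_S$ and $\gamma\cap T_{\bar S}$ are forests on $a$ and $|V(\gamma)|-a$ vertices, hence together have at most $|V(\gamma)|-2$ edges, so such a $2$-tree contributes only at order $\epsilon^{\L_\gamma+1}$ or higher; the order $\epsilon^{\L_\gamma}$ thus comes exactly from $2$-trees in which $\gamma$ lies entirely in one part and $\gamma\cap T_S$ (or $\gamma\cap T_{\bar S}$) is a spanning tree of $\gamma$. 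These biject with $\mathcal{T}_\gamma \times \mathcal{T}_{G/\gamma,S}$ (glue a spanning tree of $\gamma$ onto whichever part of the $2$-tree of $G/\gamma$ contains the apex; the external legs on $\gamma$ all lie in $S$ precisely when the apex lies in the $S$-part, which matches the condition for $\F_{G/\gamma,S}\neq 0$). Factoring the monomials as before gives that the $\epsilon^{\L_\gamma}$-coefficient of $\F_{G,S}|_{\alpha_\gamma\to\epsilon\alpha_\gamma}$ is $\U_\gamma\,\F_{G/\gamma,S}$ --- correctly including the vanishing case. Summing over $\{S,\bar S\}$ and subtracting the mass term, the $\epsilon^{\L_\gamma}$-coefficient of $\F_G|_{\alpha_\gamma\to\epsilon\alpha_\gamma}$ equals $\U_\gamma\big[\sum_{\{S,\bar S\}} s_I\F_{G/\gamma,S} - (\sum_{e\notin\gamma}\m_e\alpha_e)\U_{G/\gamma}\big] = \U_\gamma\,\F_{G/\gamma}$; the $s_I$-part and the $\m_e$-part cannot cancel, being independent linear functions on kinematic space, and each separately reproduces the matching part of $\F_{G/\gamma}$.

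The main obstacle is the $2$-tree power count: establishing that a $2$-tree whose two parts both meet $\gamma$ loses at least two edges of $\gamma$-spanning (not merely one), hence contributes strictly beyond leading order, together with the careful matching of external-leg data under contraction so that the bijection with $\mathcal{T}_\gamma\times\mathcal{T}_{G/\gamma,S}$ is exact. Once these combinatorial facts are in place, the factorization of monomials and the reassembly into $\U_\gamma\F_{G/\gamma}$ are bookkeeping. (Alternatively one could induct using deletion--contraction recursions for $\U_G$ and $\F_G$, but the spanning-tree bijections stay closest to the definitions recalled above.)
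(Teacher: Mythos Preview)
Your proposal is correct and follows essentially the same approach as the paper: both argue directly from the spanning tree/spanning $2$-tree definitions, counting the $\epsilon$-power via the number of $\gamma$-edges missing from $T$ (equivalently, the paper phrases this as $T\cap\gamma$ being a spanning $k$-tree of $\gamma$, contributing at order $\epsilon^{\L_\gamma+k-1}$), and then identify the leading coefficient via the bijection with $\mathcal{T}_\gamma \times \mathcal{T}_{G/\gamma}$ (resp.\ $\mathcal{T}_\gamma \times \mathcal{T}_{G/\gamma,S}$). Your write-up is in fact more detailed than the paper's brief sketch, which defers to the literature for the full argument.
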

\begin{proof}
The proof is standard; see, e.g., \cite[Thm.~5.1]{10.1143/PTPS.18.1} or \cite[Prop.~2.2, Thm.~2.7]{Brown:2015fyf}. We include a sketch for completeness. In the case of \eqref{eq:UGe} one needs to use the fact that each spanning tree $T$ in $G$ induces a spanning $k$-tree $T \cap \gamma$ in $\gamma$, where $k \in \{1,2,\ldots,\L_\gamma + 1\}$. At the leading order in $\epsilon \to 0$ only the spanning trees of $\gamma$ ($k=1$) contribute. In such cases also $T \cap (G/\gamma)$ must be a spanning tree in $G/\gamma$, from which \eqref{eq:UGe} follows. The derivation of \eqref{eq:FGe} uses the same arguments, except for the fact that a spanning $2$-tree in $G$, which induces a spanning $1$-tree in $\gamma$, must be a spanning $2$-tree in $G/\gamma$.
\end{proof}

While in the case of \eqref{eq:UGe} the leading order coefficient $\U_\gamma \U_{G/\gamma}$ is always non-zero on $\PP^{\E_G -1}_{>0}$, for \eqref{eq:FGe} it might happen that $\F_{G/\gamma} = 0$ identically. For simplicity, we will from now on assume that all the internal edges in $G$ are massive, i.e., $\m_e \neq 0$, which guarantees that $\F_{G/\gamma} \neq 0$ and makes the discussion less cluttered. 

\begin{remark}
Simple examples illustrate that factorization akin to \eqref{eq:FGe} does not persist at subleading orders in $\epsilon$ once the massive condition is relaxed \cite{AHHM}, though some factorization at the order $\epsilon^{\L_\gamma + 1}$ can happen under certain more restrictive kinematic conditions \cite[Thm.~2.7]{Brown:2015fyf}.
\end{remark}

In the following subsections we will also need the following exact version of Prop.~\ref{prop:prop3} in the special case where $\gamma$ is a single edge.
\begin{proposition}
For any edge $e$ of $G$, the Symanzik polynomials satisfy
\begin{gather}
\U_G = \U_{G/e} + \alpha_e \U_{G\setminus e},\label{eq:prop1-1}\\
\F_G = \F_{G/e} + \alpha_e (\F_{G \setminus e} - \m_e \U_G),\label{eq:prop1-2}
\end{gather}
\end{proposition}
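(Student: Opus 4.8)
The plan is to prove both identities by the same device: sort the spanning objects defining $\U_G$ and $\F_G$ according to whether or not they use the edge $e$. For \eqref{eq:prop1-1}, split the set $\mathcal{T}_G$ of spanning trees of $G$ into those that contain $e$ and those that do not. Contracting $e$ is a bijection $T\mapsto T/e$ from the first set onto $\mathcal{T}_{G/e}$, and since $e\in T$ the edges of $G$ outside $T$ coincide with the edges of $G/e$ outside $T/e$; this part therefore contributes $\U_{G/e}$. A spanning tree of $G$ avoiding $e$ is the same thing as a spanning tree of $G\setminus e$, and the edges of $G$ outside it are those of $G\setminus e$ outside it together with $e$ itself, so this part contributes $\alpha_e\,\U_{G\setminus e}$. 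Adding the two gives \eqref{eq:prop1-1}. (Throughout we use the standard conventions $\U_H = 0$, $\F_{H,S} = 0$ on a disconnected $H$, needed when $e$ is a bridge, and we assume $e$ is not a self-loop, which is automatic for all diagrams considered here.)

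Next I would prove the single-edge recursion for the two-tree sums, $\F_{G,S} = \F_{G/e,S} + \alpha_e\,\F_{G\setminus e,S}$ for every bipartition $\{S,\bar S\}$ of the external legs, by the same split. A spanning $2$-tree $T_S\sqcup T_{\bar S}$ of $G$ either has $e$ inside one of its two trees or has $e$ among its removed edges. In the first case both endpoints of $e$ lie in that single tree, so contracting $e$ produces an honest spanning $2$-tree of $G/e$; the external-leg separation is preserved because the contracted vertex inherits exactly the external legs attached to the two endpoints of $e$. This is a bijection onto the spanning $2$-trees of $G/e$ separating $S$ from $\bar S$, with unchanged complementary edge set, giving the contribution $\F_{G/e,S}$. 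In the second case $T_S\sqcup T_{\bar S}$ is itself such a $2$-tree of $G\setminus e$, with complementary edge set enlarged by $e$, giving $\alpha_e\,\F_{G\setminus e,S}$. Since deleting or contracting an internal edge leaves the external legs untouched, $\mathcal{P}_G = \mathcal{P}_{G/e} = \mathcal{P}_{G\setminus e}$ and the weights $(\sum_{i\in S}p_i)^2$ are unchanged; multiplying this recursion by the weights and summing over $\{S,\bar S\}$ yields the identical recursion for the first sum appearing in \eqref{eq:FG}.

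It remains to assemble \eqref{eq:prop1-2}. Write $\sum_{e'=1}^{\E_G}\m_{e'}\alpha_{e'} = \m_e\alpha_e + \sum_{e'\neq e}\m_{e'}\alpha_{e'}$, note that $\{e'\neq e\}$ indexes exactly the edges of both $G/e$ and $G\setminus e$, and substitute \eqref{eq:prop1-1}; expanding and collecting terms gives
\[
\Big(\sum_{e'=1}^{\E_G}\m_{e'}\alpha_{e'}\Big)\U_G = \Big(\sum_{e'\neq e}\m_{e'}\alpha_{e'}\Big)\U_{G/e} + \alpha_e\Big(\sum_{e'\neq e}\m_{e'}\alpha_{e'}\Big)\U_{G\setminus e} + \m_e\alpha_e\,\U_G,
\]
the last term being $\m_e\alpha_e(\U_{G/e}+\alpha_e\U_{G\setminus e})$. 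Subtracting this from the recursion for the first sum in \eqref{eq:FG} obtained above, and recognizing on the right-hand side the defining expression \eqref{eq:FG} for $\F_{G/e}$ and for $\F_{G\setminus e}$, leaves precisely $\F_G = \F_{G/e} + \alpha_e(\F_{G\setminus e} - \m_e\U_G)$.

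The one step requiring care, and the main obstacle to writing the argument out cleanly, is the two-tree recursion: one must check that contraction of $e$ is a well-defined bijection between the spanning $2$-trees of $G$ containing $e$ and those of $G/e$ that respect the separation by $S$ — in particular that it never merges vertices from different parts of the $2$-tree (true exactly because $e$ lies inside one tree) and that it preserves the external-leg attachments. Everything else is routine bookkeeping. As an alternative one could instead start from the matrix-tree (Kirchhoff) determinant representations of $\U_G$ and $\F_G$ and expand along the row and column indexed by $e$, but the spanning-tree argument above stays closer to the combinatorial definitions used in this paper.
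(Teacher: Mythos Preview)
Your proposal is correct and follows essentially the same approach as the paper: split the spanning trees (resp.\ spanning $2$-trees) of $G$ according to whether they contain $e$, identify the two parts with the corresponding sums for $G/e$ and $G\setminus e$, and then assemble the mass terms using \eqref{eq:prop1-1}. The paper phrases the split via linearity in $\alpha_e$ (writing $\U_G = \U_G|_{\alpha_e=0} + \alpha_e\,\partial\U_G/\partial\alpha_e$) rather than via explicit bijections, but the content is identical; your added remarks on bridges, self-loops, and the well-definedness of the $2$-tree bijection are useful clarifications that the paper's sketch omits.
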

\begin{proof}
The proof is standard, see, e.g., \cite[Lem.~1.9]{Brown:2015fyf}. We include a sketch for completeness. Because ${\cal U}_G$ is at most linear in each $\alpha_e$, we have
\be
\U_G = \U_G \big|_{\alpha_e = 0} + \alpha_e \frac{\partial \U_G}{\partial \alpha_e} 
= \sum_{\substack{T \in {\cal T}_G\\ T \ni e}} \prod_{e' \notin T} \alpha_{e'} \;+\; \alpha_e\!\! \sum_{\substack{T \in {\cal T}_G\\ T \not\ni e}} \prod_{\substack{e' \notin T\\ e' \neq e}} \alpha_{e'}.
\ee
Since the first sum is over only the spanning trees that contain $e$, it can be identified with $\U_{G/e}$. Similarly, the second sum contains precisely those spanning trees that do not contain the edge $e$, and hence it equals $\U_{G\setminus e}$, from which \eqref{eq:prop1-1} follows.

Using similar manipulations and definition \eqref{eq:FGS}, we see that for any subset $S \subset \{ 1, \ldots, n_G \}$ we have
\be
\F_{G,S} = \F_{G/e,S} + \alpha_e \F_{G\setminus e,S},
\ee
which together with \eqref{eq:prop1-1} allows us to write
\small
\be
\F_G = \sum_{\{S,\bar{S}\} \in {\cal P}_G} \!\!\!(\textstyle\sum_{i \in S} p_i)^2 (\F_{G/e,S} +\alpha_e \F_{G\setminus e,S})
- \big( \m_e \alpha_e +
\sum_{e' \neq e} \m_{e'} \alpha_{e'}\big) (\U_{G/e} + \alpha_e \U_{G\setminus e}),
\ee
\normalsize
according to \eqref{eq:FGe}.
We then recognize the polynomials $\F_{G/e}$ and $\F_{G\setminus e}$, together with the remaining terms proportional to $\m_e$,
which gives \eqref{eq:prop1-2}.
\end{proof}

The above proposition will be important in understanding the facet structure of Landau polytopes. As a side note, let us point out that it can be used to reformulate the Landau equations \eqref{eq:LE} in terms of Symanzik polynomials of the diagram $G\setminus e$ for any edge $e$.
Using \eqref{eq:prop1-1} and \eqref{eq:prop1-2} together with the fact that $\F_{G/e}$, $\U_{G/e}$, and $\U_{G\setminus e}$ are independent of $\alpha_e$, we find
\be
\frac{\partial \F_G}{\partial \alpha_e} = \F_{G \setminus e} - \m_e (\alpha_e\U_{G\setminus e} + \U_{G})  \quad \text{for $e=1,2,\ldots, \E_G$.}\label{eq:prop2-3}
\ee

\subsection{Facets of Symanzik and Landau Polytopes} \label{subsec:faces}

We can construct different types of polytopes based on the Symanzik polynomials. For a polynomial $P = \sum_{e \in \mathbb{N}^{\E_G}} c_e \alpha^e \in \C[\alpha_1, \ldots, \alpha_{\E_G}]$, let $\Newt(P) = \Conv(e \in \mathbb{N}^{\E_G} ~|~ c_e \neq 0) \subset \R^{\E_G}$ be its \emph{Newton polytope}.

\begin{definition}[Symanzik polytopes]\label{def:UF}
For a given diagram $G$ we define the \emph{Symanzik polytopes}
\be\label{eq:UF-polytopes}
\UU_G := \Newt(\U_G) \subset \R^{\E_G}, \qquad \FF_G := \Newt(\F_G) \subset \R^{\E_G},
\ee
where $\F_G$ is viewed as a polynomial in the Schwinger parameters after plugging in generic kinematic parameters. 
\end{definition}
\begin{remark}\label{rk:UF}
For the kinematic parameters to be ``generic'' in Definition \ref{def:UF}, it is necessary that all the internal and external masses are non-zero: $\m_e \neq 0$, $\M_i \neq 0$. In combination with generic Mandelstam invariants, this ensures that all possible monomials occurring in $\F_G$ have nonzero coefficients. One can study $\Newt(\F_G)$ for more degenerate kinematics, but this is beyond the scope of this paper.
\end{remark}
The polytope $\Newt(\U_G\F_G)$ first appeared in the work of Schultka \cite{Schultka:2018nrs} as a realization of the iterated blow-up procedure for the integration region of Feynman integrals introduced by Bloch, Esnault, Kreimer \cite{Bloch:2005bh}, and Brown \cite{Brown:2015fyf} (see also \cite{Pak:2010pt,Ananthanarayan:2018tog,Semenova:2018cwy,Panzer:2019yxl,Borinsky:2020rqs,Tellander:2021xdz} for related work). A systematic study of the polytopes \eqref{eq:UF-polytopes} was undertaken in \cite{AHHM}, where a specific Minkowski sum/difference of $\UU_G$ and $\FF_G$ was needed to study ultraviolet and infrared properties of Feynman integrals.

As a stepping stone to more complicated combinatorics, we first informally review some of the properties of $\UU_G$ and $\FF_G$ \cite{Schultka:2018nrs,AHHM}. 

For $d \in \R$, let $H_d$ be the hyperplane $H_{d} = \{ x \in \R^{\E_G} ~|~ x_1 + \cdots + x_{\E_G} = d \} \subset \R^{\E_G}$. In general, by homogeneity, the polytopes $\UU_G$ and $\FF_G$ are contained in the hyperplanes $H_{\L_G}$ and $H_{\L_G + 1}$ respectively. 

A connected graph $G$ is called \emph{one-vertex irreducible} (1VI) if it cannot be edge-disconnected by removing a single vertex. Note that a 1VI diagram that does not have a loop is necessarily a single edge. 
Note that all diagrams in Fig.~\ref{fig:diagrams} are 1VI. For a 1VI diagram $G$, we have
\be
\dim \UU_G = \dim \FF_G = \E_G - 1.
\ee
\vspace{-2em}
\begin{assumption}
In the rest of this section, we assume that $G$ is 1VI. 
\end{assumption}
We now investigate the facets (i.e., the $(\E_G {-}2)$-dimensional faces) of $\U_G$ and $\F_G$. Let $\langle \cdot , \cdot \rangle$ denote the usual pairing between $(\R^n)^\vee$ and $\R^n$. For a polytope $\mathbf{P} \subset \R^n$ and for each vector $v \in (\R^n)^\vee$, we denote the corresponding face of $\mathbf{P}$ by
\be
\v{\mathbf{P}} := \{ x \in \mathbf{P} ~|~ \langle v, x \rangle = \min_{y \in \mathbf{P}}  \langle v, y \rangle \}.
\ee
Note that $\partial_0 \mathbf{P} = \mathbf{P}$. As a consequence of Prop.~\ref{prop:prop3}, together with the fact that $\UU_G$ and $\FF_G$ are \emph{generalized permutohedra} \cite{Schultka:2018nrs}, all facets of $\UU_G \subset \R^{\E_G}$ are of the form $\UU_\gamma \times \UU_{G/\gamma} \subset \R^{\E_\gamma} \times \R^{\E_{G/\gamma}}  = \R^{\E_G}$, where $\gamma$ and $G/\gamma$ are both 1VI. For $\gamma$ such that $\gamma$ and $G/\gamma$ are both 1VI, we denote the corresponding facet of $\UU_G$ by
\be
\gam{\UU_G} = \partial_{w_{\gamma}} \UU_G = \UU_\gamma \times \UU_{G/\gamma},
\ee
where $w_{\gamma} = \sum_{e \in \gamma} w_e$ and $w_e$ is the $e$-th standard basis vector of $\R^{\E_G}$.

Similarly, for the polytope $\FF_G$, the facets are $\partial_\gamma \FF_G = \UU_\gamma \times \FF_{G/\gamma}$ for each 1VI subdiagram $\gamma$. Here the 1VI condition on $G/\gamma$ is not needed since it is already guaranteed by requiring that $G$ is 1VI.
In particular, this means that $\FF_G$ usually has more facets than $\UU_G$. Examples will be given in later sections. As explained in Rk.~\ref{rk:UF}, the cases where some of the masses are zero require more careful analysis, see \cite{Brown:2015fyf,AHHM}. For an explicit facet presentation of $\UU_G$ and $\FF_G$ under some generic kinematics conditions see \cite[Thm.~4.15]{Schultka:2018nrs} and \cite[Thm.~32]{Borinsky:2020rqs}. 

Since all monomials of $\F_{G}$ in \eqref{eq:FG} appear also in $(\sum_{e=1}^{\E_G} \alpha_e) \U_G$, we have
\be\label{eq:FUDelta}
\FF_G = \UU_G \oplus \mathbf{\Delta}_{\E_G -1},
\ee
where $\mathbf{\Delta}_{\E_G -1} = \Conv(w_1, \ldots, w_{\E_G})$ is the $(\E_G {-} 1)$-dimensional simplex and $\oplus$ denotes Minkowski addition. Recall that for two polytopes $\PPP_1, \PPP_2 \subset \R^n$, $\PPP_1 \oplus \PPP_2= \{ p_1 + p_2 ~|~ p_1 \in \PPP_1, p_2 \in \PPP_2 \} \subset \R^n$.

Next to the Symanzik polytopes $\UU_G$ and $\FF_G$, we introduce the class of \emph{Landau polytopes}, obtained directly from the critical point equations in  \eqref{eq:LE}.
\begin{definition}[Landau polytopes]
For a graph $G$ and an edge $e \in G$ we define
\be
\LL_{G,e} := \mathrm{Newt} \left( \frac{\partial \F_G}{\partial \alpha_e} \right),
\ee
where $\F_G$ is viewed as a polynomial in the Schwinger parameters after plugging in generic kinematic parameters.
The \emph{Landau polytope} $\LL_G$ of $G$ is given by
\be
\LL_G := \mathrm{Newt} \left( \prod_{e=1}^{\E_G} \frac{\partial \F_G}{\partial \alpha_{e}} \right) = \bigoplus_{e=1}^{\E_G} \LL_{G,e}.
\ee
\end{definition}
One motivation to consider these polytopes is to obtain a general bound on the degree of the Landau discriminant polynomial $\Delta_G$, see Prop.~\ref{prop:degmixedres}. 

Our next goal is to use Symanzik polytopes of subdiagrams $G' \subset G$ to describe the faces of the Landau polytope $\LL_G$. We start by considering the boundary structure of the constituent polytopes $\LL_{G,e}$.

\begin{proposition}\label{prop4}
For a subdiagram $\gamma \subset G$, let $\gam{\LL_{G,e}} := \partial_{w_\gamma} \LL_{G,e}$ be the corresponding face of $\LL_{G,e}$, with $w_\gamma = \sum_{e \in \gamma} w_e$. If either $\gamma$ and $\gamma \setminus e$ are both 1VI or $\gamma = e$, we have that $\partial_\gamma \LL_{G,e}$ is a facet given by 
\be\label{eq:prop4}
\partial_\gamma \LL_{G,e} = \begin{dcases}
	\UU_e \times \FF_{G\setminus e} &\qquad\textup{if}\qquad \gamma = e,\\
	\UU_{\gamma\setminus e} \times \FF_{G/\gamma} &\qquad\textup{if}\qquad e \in \gamma,\; \L_\gamma >0, \\
	\UU_{\gamma} \times \FF_{(G / \gamma)\setminus e} &\qquad\textup{if}\qquad e \in G/\gamma,\; \L_\gamma >0.
\end{dcases}
\ee
\end{proposition}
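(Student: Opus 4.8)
The plan is to compute every face $\partial_{w_\gamma}\LL_{G,e}$ as the Newton polytope of an \emph{initial form} of $\partial\F_G/\partial\alpha_e$. For a polynomial $P$ in $\alpha_1,\dots,\alpha_{\E_G}$ and the weight $w_\gamma=\sum_{e'\in\gamma}w_{e'}$, the substitution $\alpha_{e'}\mapsto\epsilon\alpha_{e'}$ for $e'\in\gamma$ turns $P$ into a polynomial in $\epsilon$ whose lowest-degree coefficient, \emph{when it is nonzero}, has Newton polytope equal to $\partial_{w_\gamma}\Newt(P)$. Since $\F_G$ is evaluated at generic kinematics with every mass nonzero (Rk.~\ref{rk:UF}), no accidental cancellations occur, so I only need to expand $\partial\F_G/\partial\alpha_e$ under $\alpha_\gamma\mapsto\epsilon\alpha_\gamma$, read off the leading coefficient, recognise its Newton polytope, and then verify the resulting face has codimension $1$.

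For $\gamma=e$ I would combine the exact identities \eqref{eq:prop1-1} and \eqref{eq:prop2-3}: from $\partial\F_G/\partial\alpha_e=\F_{G\setminus e}-\m_e(\alpha_e\U_{G\setminus e}+\U_G)$ and $\U_G=\U_{G/e}+\alpha_e\U_{G\setminus e}$, the $\epsilon^0$ part of the rescaled polynomial is $\F_{G\setminus e}-\m_e\U_{G/e}$. Each monomial of $\U_{G/e}$ corresponds to a spanning tree $T\ni e$ of $G$, and the same monomial appears in $\F_{G\setminus e}$ via the spanning $2$-tree $T\setminus e$ of $G\setminus e$; hence $\Newt(\F_{G\setminus e}-\m_e\U_{G/e})=\FF_{G\setminus e}$, and since $\U_e=1$ this is $\UU_e\times\FF_{G\setminus e}$. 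For the other two cases I would differentiate the factorization \eqref{eq:FGe}, $\F_G|_{\alpha_\gamma\mapsto\epsilon\alpha_\gamma}=\epsilon^{\L_\gamma}\U_\gamma\F_{G/\gamma}+\mathcal{O}(\epsilon^{\L_\gamma+1})$. When $e\notin\gamma$, differentiation commutes with the substitution, the leading term is $\epsilon^{\L_\gamma}\U_\gamma\,\partial\F_{G/\gamma}/\partial\alpha_e$, and since the two factors use disjoint variables its Newton polytope is $\UU_\gamma$ times $\Newt(\partial\F_{G/\gamma}/\partial\alpha_e)$; feeding the $\gamma=e$ computation for the diagram $G/\gamma$ into this produces $\UU_\gamma\times\FF_{(G/\gamma)\setminus e}$. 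When $e\in\gamma$, the chain rule gives $\partial_{\alpha_e}\bigl(\F_G|_{\alpha_\gamma\mapsto\epsilon\alpha_\gamma}\bigr)=\epsilon\,(\partial\F_G/\partial\alpha_e)|_{\alpha_\gamma\mapsto\epsilon\alpha_\gamma}$, and differentiating \eqref{eq:FGe} gives leading term $\epsilon^{\L_\gamma}(\partial\U_\gamma/\partial\alpha_e)\F_{G/\gamma}=\epsilon^{\L_\gamma}\U_{\gamma\setminus e}\F_{G/\gamma}$, using $\partial\U_\gamma/\partial\alpha_e=\U_{\gamma\setminus e}$ from \eqref{eq:prop1-1}, so the initial form of $\partial\F_G/\partial\alpha_e$ has Newton polytope $\UU_{\gamma\setminus e}\times\FF_{G/\gamma}$. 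The hypothesis that $\gamma$ and $\gamma\setminus e$ are $1$VI is what makes these leading coefficients nonzero: a $1$VI diagram is connected, so $\U_\gamma\neq0$ and $\U_{\gamma\setminus e}\neq0$, and $\F_{G/\gamma}$ is nonzero and depends on $\alpha_e$ since all edges are massive.

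The step I expect to be the real obstacle is upgrading these polytope identities to the \emph{facet} claim, i.e.\ checking the dimensions. One must show $\LL_{G,e}$ is $(\E_G-1)$-dimensional (it contains $\FF_{G\setminus e}$ in $\{x_e=0\}$ together with monomials of $\alpha_e$-degree $1$) and that each product on the right of \eqref{eq:prop4} has dimension $\E_G-2$. This is precisely where the $1$VI conditions on $\gamma$ and $\gamma\setminus e$ are needed: for a $1$VI diagram $G'$ one has $\dim\UU_{G'}=\dim\FF_{G'}=\E_{G'}-1$, whereas a cut vertex in $G'$ would split $\UU_{G'}$ (and $\FF_{G'}$) into a product of lower-dimensional pieces. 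Together with the always-valid relation $\FF_{G'}=\UU_{G'}\oplus\mathbf{\Delta}_{\E_{G'}-1}$, which keeps $\FF_{G/\gamma}$ and $\FF_{(G/\gamma)\setminus e}$ full-dimensional, the dimension count should go through; the only other thing to pin down carefully is the ``no cancellation'' claim, namely that each initial form's coefficients are nonzero polynomials in the kinematic invariants, so that they stay nonzero generically as assumed in Rk.~\ref{rk:UF}.
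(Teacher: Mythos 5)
Your overall strategy — read off the face $\partial_{w_\gamma}\LL_{G,e}$ as the Newton polytope of the leading coefficient in the $\epsilon$-expansion, then check dimensions using the $1$VI hypotheses — is exactly the paper's, and your treatment of the first two cases ($\gamma=e$ and $e\in\gamma$, $\L_\gamma>0$) agrees with theirs in detail, including the observation that the monomials of $\U_{G/e}$ are absorbed into those of $\F_{G\setminus e}$.

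There is, however, a genuine gap in your third case. You correctly identify the leading coefficient as $\U_\gamma\,\partial\F_{G/\gamma}/\partial\alpha_e$ and its Newton polytope as $\UU_\gamma\times\Newt(\partial\F_{G/\gamma}/\partial\alpha_e)$, and then assert that ``feeding the $\gamma=e$ computation for the diagram $G/\gamma$'' turns the second factor into $\FF_{(G/\gamma)\setminus e}$. But the $\gamma=e$ computation does \emph{not} show $\Newt(\partial\F_{G/\gamma}/\partial\alpha_e)=\FF_{(G/\gamma)\setminus e}$; it shows that $\FF_{(G/\gamma)\setminus e}$ is the \emph{face} of $\LL_{G/\gamma,e}:=\Newt(\partial\F_{G/\gamma}/\partial\alpha_e)$ cut out by minimising the $\alpha_e$-exponent. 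By \eqref{eq:prop2-3} applied to $G/\gamma$, the polynomial $\partial\F_{G/\gamma}/\partial\alpha_e=\F_{(G/\gamma)\setminus e}-\m_e(\alpha_e\U_{(G/\gamma)\setminus e}+\U_{G/\gamma})$ generically has monomials of $\alpha_e$-degree both $0$ and $1$, so $\LL_{G/\gamma,e}\supsetneq\FF_{(G/\gamma)\setminus e}$ and your identification fails. The paper closes this gap by performing one further degeneration: it applies \eqref{eq:Fe2} to $G/\gamma$, i.e.\ additionally sends $\alpha_e\mapsto\epsilon\alpha_e$, which isolates the $\alpha_e$-degree-$0$ part $\F_{(G/\gamma)\setminus e}-\m_e\U_{(G/\gamma)/e}$, and it is \emph{that} polynomial whose Newton polytope equals $\FF_{(G/\gamma)\setminus e}$ (again because the monomials of $\U_{(G/\gamma)/e}$ are contained among those of $\F_{(G/\gamma)\setminus e}$). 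This extra step is precisely what \eqref{eq:dFdz} encodes via its leading power $\epsilon^{\L_\gamma+1}$ rather than $\epsilon^{\L_\gamma}$ in the first and third lines. You should also note, as the paper does, that the only scenario in which \emph{both} terms of the $\epsilon^{\L_\gamma}$-coefficient of \eqref{eq:partial-FG} vanish is when $\gamma$ is a tree containing $e$; this is what justifies separating out the $\gamma=e$ case and restricting to $\L_\gamma>0$ otherwise, a structural point your write-up passes over in silence.
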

\begin{proof}
Using \eqref{eq:FGe} we have
\be\label{eq:partial-FG}
\frac{\partial \F_G}{\partial \alpha_e}\bigg|_{\alpha_\gamma \rightarrow \epsilon \alpha_\gamma} \!\! = \epsilon^{\L_\gamma} \left( \frac{\partial \U_\gamma}{\partial \alpha_e} \F_{G/\gamma} + \U_{\gamma} \frac{\partial \F_{G/\gamma}}{\partial \alpha_e}\right) + {\cal O}(\epsilon^{\L_\gamma + 1}).
\ee
Observe that, since $e$ is contained in either $\gamma$ or $G/\gamma$, at most one of the two terms in the leading coefficient standing with $\epsilon^{\L_\gamma}$ can be non-zero. We claim that the only one situation in which \emph{both} leading terms in \eqref{eq:partial-FG} vanish is when $\gamma$ is a tree ($\L_\gamma = 0$) containing $e$. 

If $e \in \gamma$, the first term vanishes if and only if $\gamma$ is a tree (for which $\partial \U_{\gamma} / \partial \alpha_e = 0$). 

If $e \in G / \gamma$, the second term is identically zero only when $G/\gamma$ contains a massless tadpole ($\m_e = 0$ and no external momentum flows through $e$, for which $\partial \F_{G/\gamma} / \partial \alpha_e =0$). This is excluded by our assumption of generic kinematics.

Let us investigate what happens when $e \in \gamma$ and $\L_\gamma = 0$. Then by \eqref{eq:prop1-2} we have
\be\label{eq:Fe2}
\frac{\partial \F_G}{\partial \alpha_e}\bigg|_{\alpha_e \rightarrow \epsilon \alpha_e} \!\! = \epsilon (\F_{G\setminus e} - \m_e \U_{G/e}) + {\cal O}(\epsilon^2).
\ee
If $\gamma$ has more than one edge, one can apply Prop.~\ref{prop:prop3} to conclude that $\gam{\LL_{G,e}}$ is a face of codimension $>1$. Hence it suffices to consider the case where $\gamma = e$.

The three cases corresponding to one or two zero terms in the leading coefficient of \eqref{eq:partial-FG} can be summarized as follows:
\be\label{eq:dFdz}
\frac{\partial \F_G}{\partial \alpha_e}\bigg|_{\alpha_\gamma \rightarrow \epsilon \alpha_\gamma} \!\! = \begin{dcases}
	\epsilon^{L_\gamma+1} \, \U_e \left(\F_{G\setminus e} - \m_e \U_{G/e}\right) + \ldots  &\textup{if } \gamma = e,\\
\epsilon^{L_\gamma} \,\U_{\gamma\setminus e}\, \F_{G/\gamma} + \ldots &\textup{if } e \in \gamma,\; L_\gamma >0, \\
\epsilon^{L_\gamma+1}\, \U_{\gamma} \left(\F_{(G/\gamma)\setminus e} - \m_e \U_{(G/ \gamma)/e} \right) + \ldots &\textup{if } e \in G/\gamma,\; L_\gamma >0.
\end{dcases}
\ee
In the first line we have inserted $\U_e = 1$ to make the structure more apparent.
In the second line we used \eqref{eq:prop1-1} with $G \to \gamma$, and in the third we used \eqref{eq:Fe2} with $G \to G/\gamma$.

We now investigate the Newton polytope of the leading coefficient polynomials in \eqref{eq:dFdz}. The first contains two terms: proportional to $\F_{G\setminus e}$ and $\U_{G/e}$. However, all the monomials in $\U_{G/e}$ are already contained in $\F_{G\setminus e}$ because every spanning tree in $G/e$ is also a $2$-tree in $G\setminus e$. Hence, for generic kinematics,
\be
\mathrm{Newt}(\F_{G\setminus e} - \m_e \U_{G/e}) = \mathrm{Newt}(\F_{G\setminus e}) = \FF_{G\setminus e}.
\ee
This gives the first line in \eqref{eq:prop4}. Recall that $\UU_e$ is a point.
A similar discussion applies to the third line of \eqref{eq:dFdz}, where the Newton polytope of the term in the parenthesis becomes $\FF_{(G/\gamma)\setminus e}$. Together with the 1VI condition on $\gamma$, it gives the required facet in \eqref{eq:prop4}. The second line in \eqref{eq:dFdz} gives the corresponding facet in \eqref{eq:prop4} provided that $\gamma \setminus e$ is 1VI.
\end{proof}

Boundaries of $\LL_{G,e}$ are therefore given by all ways in which we can combine the operations of shrinking $\gamma$ and removing $e$. Finally, we arrive at a similar result for the full Landau polytope.

\begin{proposition}\label{prop:L-boundary}
For a connected 1VI diagram $G$ and a subdiagram $\gamma \subset G$, let $\gam{\LL_{G}} := \partial_{w_\gamma} \LL_{G}$ be the corresponding face of $\LL_{G}$, with $w_\gamma = \sum_{e \in \gamma} w_e$. The polytope $\LL_{G}$ has a facet given by
\be\label{eq:prop5-1}
\partial_{\gamma} \LL_G = \UU_e \times \Bigg( \FF_{G\setminus e} \oplus \bigoplus_{\substack{e' \in G\\ e' \neq e}} \FF_{(G/ e)\setminus e'}\Bigg)
\ee
if $\gamma = e$ is a single edge, and a facet given by
\be\label{eq:prop5-2}
\partial_\gamma \LL_G = \left ( \E_{G/\gamma} \cdot \UU_{\gamma} \oplus \bigoplus_{e' \in \gamma} \UU_{\gamma\setminus e'} \right) \times \left (\E_\gamma \cdot \FF_{G/\gamma} \oplus \bigoplus_{e'\notin \gamma} \FF_{(G / \gamma)\setminus e'} \right)
\ee
if $\gamma$ is a 1VI subdiagram with $\L_\gamma > 0$. Here $c \cdot \mathbf{P}$ denotes a dilation of the polytope $\mathbf{P}$ by a constant $c>0$, and $\E_\gamma, \E_{G/\gamma} = \E_G - \E_\gamma$ are the number of edges in $\gamma$ and $G/\gamma$ respectively.
\end{proposition}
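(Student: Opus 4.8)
The plan is to reduce Proposition~\ref{prop:L-boundary} to the already-established facet descriptions of the constituent polytopes $\LL_{G,e}$ in Prop.~\ref{prop4}, using the fact that faces of a Minkowski sum decompose as Minkowski sums of the corresponding faces of the summands. Concretely, for any linear functional $v \in (\R^{\E_G})^\vee$ and any polytopes $\PPP_1, \ldots, \PPP_k$, one has $\partial_v(\PPP_1 \oplus \cdots \oplus \PPP_k) = \partial_v \PPP_1 \oplus \cdots \oplus \partial_v \PPP_k$. Applying this with $v = w_\gamma = \sum_{e \in \gamma} w_e$ and $\LL_G = \bigoplus_{e=1}^{\E_G} \LL_{G,e}$ gives
\be
\partial_\gamma \LL_G = \bigoplus_{e=1}^{\E_G} \partial_\gamma \LL_{G,e},
\ee
so the task is to compute each $\partial_\gamma \LL_{G,e}$ and add up the results, then check that the dimension of the sum is $\E_G - 2$ so that it is genuinely a facet.

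First I would treat the case $\gamma = e_0$ a single edge. For the summand with index $e = e_0$, Prop.~\ref{prop4} (first line) gives $\partial_{e_0} \LL_{G,e_0} = \UU_{e_0} \times \FF_{G \setminus e_0}$. For a summand with index $e' \neq e_0$, we have $e' \in G/e_0$ (since $e' \notin \gamma = e_0$) and $\L_{e_0} = 0$; here I need a slight extension of Prop.~\ref{prop4}, since that proposition's third case assumes $\L_\gamma > 0$. But re-examining \eqref{eq:partial-FG}–\eqref{eq:dFdz}: when $\gamma = e_0$ is a tree and $e' \in G/e_0$, the leading term in $\epsilon$ of $\partial \F_G/\partial \alpha_{e'}|_{\alpha_{e_0} \to \epsilon \alpha_{e_0}}$ is $\U_{e_0} \cdot \partial \F_{G/e_0}/\partial \alpha_{e'} = \partial \F_{G/e_0}/\partial \alpha_{e'}$ (using $\U_{e_0} = 1$), whose Newton polytope is $\UU_{e_0} \times \FF_{(G/e_0)\setminus e'}$ by applying the first line of \eqref{eq:prop4} with $G \to G/e_0$ and edge $e'$. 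Summing over all $e$ yields exactly \eqref{eq:prop5-1}:
\be
\partial_{e_0} \LL_G = \UU_{e_0} \times \FF_{G \setminus e_0} \;\oplus\; \bigoplus_{e' \neq e_0} \UU_{e_0} \times \FF_{(G/e_0)\setminus e'} = \UU_{e_0} \times \Bigl( \FF_{G\setminus e_0} \oplus \bigoplus_{e' \neq e_0} \FF_{(G/e_0)\setminus e'} \Bigr),
\ee
where I pull the point $\UU_{e_0}$ through the Minkowski sum in the second factor. For the case $\L_\gamma > 0$, I partition the index set into $e \in \gamma$ and $e \notin \gamma$; for $e \in \gamma$ the relevant line of \eqref{eq:prop4} gives $\UU_{\gamma \setminus e} \times \FF_{G/\gamma}$, and for $e \notin \gamma$ it gives $\UU_\gamma \times \FF_{(G/\gamma)\setminus e}$. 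Summing, the $\FF_{G/\gamma}$ from the $\E_\gamma$ summands with $e \in \gamma$ stack up to $\E_\gamma \cdot \FF_{G/\gamma}$, the $\UU_\gamma$ from the $\E_{G/\gamma}$ summands with $e \notin \gamma$ stack up to $\E_{G/\gamma} \cdot \UU_\gamma$, and the remaining pieces give $\bigoplus_{e' \in \gamma} \UU_{\gamma \setminus e'}$ and $\bigoplus_{e' \notin \gamma} \FF_{(G/\gamma)\setminus e'}$; since these all live in the complementary coordinate blocks $\R^{\E_\gamma}$ and $\R^{\E_{G/\gamma}}$, the Minkowski sum splits as the product in \eqref{eq:prop5-2}.

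The main obstacle I anticipate is the dimension count verifying that $\partial_\gamma \LL_G$ is a \emph{facet} (codimension exactly $1$), not merely a lower-dimensional face. This requires knowing that the Minkowski summands involved are ``compatible'' in the sense that their individual dimensions add up correctly. For the $\gamma = e_0$ case: each $\FF_{(G/e_0)\setminus e'}$ lives in $\R^{\E_G - 2}$ (two edges removed), and one must argue that the Minkowski sum $\FF_{G\setminus e_0} \oplus \bigoplus_{e'} \FF_{(G/e_0)\setminus e'}$ has dimension $\E_G - 2$ in the hyperplane $\{x_{e_0} = 0\}$-slice — equivalently that these polytopes jointly span that coordinate hyperplane. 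This follows because $\FF_{G\setminus e_0}$ already has dimension $\E_G - 2$ when $G \setminus e_0$ is 1VI, forcing $\partial_\gamma \LL_G$ to have dimension $\geq \E_G - 2$; combined with the $\epsilon$-order argument showing $w_\gamma$ is not a vertex-supporting functional (the leading term is not a monomial), we get dimension exactly $\E_G - 2$. When $G \setminus e_0$ fails to be 1VI one must instead verify the spanning claim from the collection of contracted-deleted polytopes, and I would handle this via the generalized-permutohedron structure. For the $\L_\gamma > 0$ case the analogous statement is that $\E_{G/\gamma}\cdot \UU_\gamma \oplus \bigoplus_{e'\in\gamma}\UU_{\gamma\setminus e'}$ is full-dimensional in $\R^{\E_\gamma}$ (dimension $\E_\gamma - 1$, since it lies in a homogeneity hyperplane) and likewise $\E_\gamma \cdot \FF_{G/\gamma} \oplus \bigoplus_{e'\notin\gamma}\FF_{(G/\gamma)\setminus e'}$ has dimension $\E_{G/\gamma} - 1$; adding, $\partial_\gamma \LL_G$ has dimension $\E_\gamma - 1 + \E_{G/\gamma} - 1 = \E_G - 2$, as required. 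The 1VI hypotheses on $\gamma$ and $\gamma \setminus e$ (inherited from Prop.~\ref{prop4}) are exactly what guarantees the relevant $\UU$'s and $\FF$'s have full dimension in their blocks, so the dimension bookkeeping closes.
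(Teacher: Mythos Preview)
Your approach is essentially the same as the paper's: decompose $\partial_\gamma \LL_G$ as the Minkowski sum $\bigoplus_{e'} \partial_\gamma \LL_{G,e'}$, invoke Prop.~\ref{prop4} for each summand according to whether $e' \in \gamma$ or $e' \in G/\gamma$, and then collect terms into the products \eqref{eq:prop5-1}--\eqref{eq:prop5-2} using the disjointness of the $\R^{\E_\gamma}$ and $\R^{\E_{G/\gamma}}$ coordinate blocks. Your extra care in re-deriving the $\gamma = e_0$, $e' \neq e_0$ case directly from \eqref{eq:partial-FG} (since the third line of \eqref{eq:prop4} formally assumes $\L_\gamma > 0$) and your discussion of the dimension count verifying that the face is genuinely a facet actually go slightly beyond what the paper's own proof makes explicit.
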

\begin{proof}
The face $\gam{\LL_G}$ is the Minkowski sum of the faces $\gam{\LL_{G,e'}}$, given in Prop.~\ref{prop4}, for every $e' \in G$.

When $\gamma=e$ is a single edge, there are two cases: either $e' = e$ giving $\UU_e \times \FF_{G\setminus e}$ from the first line of \eqref{eq:prop4}, or $e' \neq e$ which falls under the case $e' \notin e$ in the third line of \eqref{eq:prop4}. The latter gives $\UU_{e} \times \FF_{(G/e)\setminus e'}$ for every $e' \in G$, $e' \neq e$, showing \eqref{eq:prop5-1} since $\UU_e$ is a point.

Similarly, when $\gamma$ is not a tree, there are two cases depending on whether the specific $e'$ is or is not contained in $\gamma$. According to the second line in \eqref{eq:prop4} the former gives $\UU_{\gamma\setminus e'} \times \FF_{G/\gamma}$, while the latter $\UU_\gamma \times \FF_{(G/\gamma)\setminus e'}$. Summing over all possibilities for $e' \in G$ leaves us with the Newton polytope
\begin{align}
\partial_\gamma \LL_G &= \Newt \left(\prod_{e' \in \gamma} \left(\U_{\gamma\setminus e'}\, \F_{G/\gamma}\right)  \prod_{e' \notin \gamma} \left(\U_{\gamma}\, \F_{(G/\gamma)\setminus e'}\right) \right)\nn\\
&= \Newt \left( \U_\gamma^{\E_{G/\gamma}} \prod_{e' \in \gamma} \U_{\gamma\setminus e'} \right) \times \Newt\left(\F_{G/\gamma}^{\E_\gamma} \prod_{e' \notin \gamma} \F_{(G/\gamma)\setminus e'}\right),
\end{align}
where in the second line we have used `$\times$' since the sets of Schwinger parameters are disjoint. This leads to \eqref{eq:prop5-2}.
\end{proof}
Prop.~\ref{prop:L-boundary} identifies facets of the Landau polytope $\LL_G$ coming from 1VI subdiagrams. We expect that all facets arise in this way. 
\begin{conjecture} \label{conj:facets}
Let $G$ be a connected, 1VI diagram. There are $1:1$ correspondences
\be
\left \{ \textup{1VI subdiagrams $\gamma \subset G$} \right \} \quad \overset{1:1}{\longleftrightarrow} \quad \left \{ \textup{facets $\gam{\FF_G}$} \right \} \quad \overset{1:1}{\longleftrightarrow} \quad \left \{ \textup{facets $\gam{\LL_G}$} \right \}.
\ee
\end{conjecture}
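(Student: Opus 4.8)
\emph{Plan.} I would prove the two bijections one at a time, deriving the $\LL_G$ statement from the $\FF_G$ statement by means of a Minkowski-sum identity for $\LL_G$. For the first correspondence, the assignment $\gamma \mapsto \partial_\gamma \FF_G$ is a well-defined map into the facets of $\FF_G$, and is surjective onto them, by the known facet presentation of the Symanzik polytope under generic massive kinematics (as in Rk.~\ref{rk:UF}); concretely one invokes \cite[Thm.~4.15]{Schultka:2018nrs} or \cite[Thm.~32]{Borinsky:2020rqs} together with \cite{AHHM}, checking that the genericity hypotheses there match the ones in force here. Injectivity is elementary: a proper $1$VI subdiagram $\gamma \subsetneq G$ is determined by its edge set, and its facet normal $w_\gamma = \sum_{e \in \gamma} w_e$ recovers that edge set modulo the lineality line $\R(1,\dots,1)$ of $\FF_G$; distinct facet normals give distinct facets, and $\gamma = G$ corresponds to the trivial direction $w_G \equiv 0$, i.e. to $\FF_G$ itself, which is excluded on both sides.

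The bridge to $\LL_G$ is the identity, valid for generic massive kinematics and $G$ a $1$VI diagram (the case $\L_G = 0$ being the single edge, for which everything is vacuous; for $\L_G \geq 1$ every edge $e$ is a non-bridge, so $G \setminus e$ is connected):
\begin{equation}
\LL_{G,e} \;=\; \UU_{G \setminus e} \oplus \mathbf{\Delta}_{\E_G - 1}, \qquad\text{hence}\qquad \LL_G \;=\; \Big( \bigoplus_{e=1}^{\E_G} \UU_{G \setminus e} \Big) \oplus \E_G \cdot \mathbf{\Delta}_{\E_G - 1}.
\end{equation}
For the first equality I would use \eqref{eq:prop2-3}, $\partial \F_G / \partial \alpha_e = \F_{G \setminus e} - \m_e(\alpha_e \U_{G \setminus e} + \U_G)$: the part supported on $\{x_e = 1\}$ has Newton polytope $w_e + \UU_{G \setminus e}$, the part on $\{x_e = 0\}$ has Newton polytope $\FF_{G \setminus e}$ (every monomial of $\U_{G/e}$ occurs already in $\F_{G \setminus e}$, and genericity prevents cancellation), so $\LL_{G,e} = \Conv\big( (w_e + \UU_{G \setminus e}) \cup \FF_{G \setminus e} \big)$; now apply \eqref{eq:FUDelta} to $G \setminus e$, namely $\FF_{G \setminus e} = \UU_{G \setminus e} \oplus \Conv\{ w_{e'} : e' \neq e \}$, together with the identity $\Conv\big( \bigcup_i (v_i + P) \big) = P \oplus \Conv\{ v_i \}$. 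In particular $\LL_G$ is a Minkowski sum of (generalized permutohedral) matroid polytopes, hence is itself a generalized permutohedron; its normal fan coarsens the braid fan, so every facet of $\LL_G$ has normal direction $w_S := \sum_{e \in S} w_e$ for some $\emptyset \neq S \subsetneq \{1,\dots,\E_G\}$, and the facet equals $\partial_{w_S} \LL_G = \bigoplus_e \partial_{w_S} \UU_{G \setminus e} \oplus \partial_{w_S} \mathbf{\Delta}_{\E_G - 1}$.

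It then remains to identify which $S$ define facets. Write $\LL_G = B(z)$ for the base polytope of the supermodular function $z(S) = \sum_{e} \nulll_{G \setminus e}(S \setminus e) + \E_G \cdot \mathbf{1}[S = \{1,\dots,\E_G\}]$, where $\nulll_H(A)$ is the number of independent cycles of the subgraph of $H$ spanned by $A$ (note $\nulll_{G \setminus e}(S \setminus e) = \nulll_G(S)$ whenever $e \notin S$). Since $\dim \LL_G = \E_G - 1$, $z$ is connected, and by the combinatorial description of facets of such base polytopes, $\partial_{w_S} \LL_G$ is a facet if and only if both the restriction $z|_S$ and the contraction $z/S$ are connected (admit no nontrivial direct-sum decomposition). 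One then shows, using Prop.~\ref{prop:prop3} and the facet classifications of $\UU_G$ and $\FF_G$ in terms of $1$VI subdiagrams, and the fact (already used for $\FF_G$) that $G/\gamma$ is $1$VI whenever $G$ is, that these two connectivity conditions together are equivalent to ``$G[S]$ is $1$VI''. One inclusion — every $1$VI $\gamma$ gives a facet — is Prop.~\ref{prop:L-boundary}; the remaining content is the converse, that $z|_S$ or $z/S$ decomposes whenever $G[S]$ is not $1$VI, which yields surjectivity of $\gamma \mapsto \partial_\gamma \LL_G$. Injectivity is as in the first paragraph. Composing the two bijections gives the asserted bijection between facets of $\FF_G$ and of $\LL_G$.

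The main obstacle is this last combinatorial step. Unlike for a single matroid polytope, Minkowski summing generalized permutohedra can create facets present in no summand — the face $\partial_{w_S} \LL_G$ may have codimension one even though each $\partial_{w_S} \UU_{G \setminus e}$ and $\partial_{w_S} \mathbf{\Delta}_{\E_G - 1}$ has larger codimension, provided they meet transversally — so one must analyze the summed function $z$ directly rather than its pieces, and simultaneously rule out that some $S$ not coming from a $1$VI subdiagram sneaks in. The genuinely delicate cases are the degenerate ones: edges $e \in S$ that are bridges of $G[S]$, contractions producing self-loops or isolated vertices (where the matroid-face formulas $\partial_{w_A} \UU_{G'} = \UU_{G'[A]} \times \UU_{G'/A}$ require care), and $1$VI subdiagrams with $\L_\gamma = 0$, which are forced to be single edges. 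A secondary, more bookkeeping-type obstacle is confirming that the facet presentation of $\FF_G$ available in the literature is stated for — or extends to — exactly the generic massive kinematics used in the conjecture.
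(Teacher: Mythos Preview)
The statement you are attempting to prove is labeled a \emph{Conjecture} in the paper, and the paper does not prove it. What the paper offers is: a computational verification (via \texttt{Polymake}) for the finite list of diagrams in Fig.~\ref{fig:diagrams}; the trivial observation that for $G = \mathtt{A}_\E$ both $\FF_G$ and $\LL_G$ are dilated simplices; and a proof for the banana family $\mathtt{B}_\E$ (Prop.~\ref{prop:facetsF} and \ref{prop:facetsL}). So there is no ``paper's own proof'' to compare against --- your proposal goes well beyond what the paper establishes.

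Your plan is substantially more structured than anything in the paper. The Minkowski identity $\LL_{G,e} = \UU_{G\setminus e} \oplus \mathbf{\Delta}_{\E_G-1}$ you derive from \eqref{eq:prop2-3} appears to be correct and is a genuine simplification: it recovers, as a special case, the paper's banana computation $\Newt(\alpha_e\,\partial\F_{\mathtt{B}_\E}/\partial\alpha_e) = \mathbf{\Delta}_{\E-1} \oplus \UU_{\mathtt{B}_\E,e}$ (since $\UU_{\mathtt{B}_\E,e} = w_e + \UU_{\mathtt{B}_\E\setminus e}$), and it immediately exhibits $\LL_G$ as a generalized permutohedron, so that all facet normals are $0/1$-vectors. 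Combined with Prop.~\ref{prop:L-boundary}, this reduces the conjecture to the purely combinatorial converse you isolate: that $\partial_{w_S}\LL_G$ has codimension $\geq 2$ whenever $G[S]$ fails to be 1VI.

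You are right that this converse is the crux, and you are candid that you have not carried it out. The supermodular/base-polytope framework you sketch is a reasonable line of attack, but the step ``$z|_S$ or $z/S$ decomposes whenever $G[S]$ is not 1VI'' is exactly the content of the conjecture and would need a real argument. Note in particular that the first bijection (1VI subdiagrams $\leftrightarrow$ facets of $\FF_G$), which you treat as essentially known, is also only stated informally in the paper (with a pointer to \cite{Schultka:2018nrs,Borinsky:2020rqs,AHHM}); you should verify that those references give precisely the bijection under the generic-massive hypothesis, since the paper itself does not spell out a proof. In short: your outline is sound and more ambitious than the paper, but what remains is not a detail --- it is the open part of the conjecture.
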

We have verified Conj.~\ref{conj:facets} for the diagrams in Fig.~\ref{fig:diagrams} with the help of \texttt{Polymake} \cite{gawrilow2000polymake}. For the one-loop diagrams $\mathtt{A}_\E$, $\F_{\mathtt{A}_\E}$ is a dense quadratic polynomial (see Sec.~\ref{sec:ngon}) and its derivatives are linear forms. In this case, $\LL_{\mathtt{A}_\E}$ and $\FF_{\mathtt{A}_\E}$ are dilations of the standard simplex, and Conj.~\ref{conj:facets} trivially holds. For the banana diagrams $\mathtt{B_E}$, we prove the conjecture in the next section. Note that Conj.~\ref{conj:facets} implies that the normal fans of $\FF_G$ and $\LL_G$ coincide at the level of rays. We will see (Ex.~\ref{ex:banana}) that this is not true for cones of higher dimension. 
\begin{example}\label{ex:UFLpar}
Let us consider the diagram $G = \mathtt{par}$, labelled according to Fig.~\ref{fig:par}. The Symanzik polynomials read
\begin{align}
\U_{\mathtt{par}} &= (\alpha_1+\alpha_2)(\alpha_3+\alpha_4) + \alpha_3 \alpha_4,\\
\F_{\mathtt{par}} &= s \alpha_1 \alpha_2(\alpha_3 + \alpha_4) + (\M_3 \alpha_2 + \M_4 \alpha_1) \alpha_3 \alpha_4 - ({\textstyle\sum}_{e=1}^{4} \m_e \alpha_e) \U_{\mathtt{par}}.
\end{align}
The polytopes are given by
\be
\UU_{\mathtt{par}} = \Newt(\U_{\mathtt{par}}),\qquad  \FF_{\mathtt{par}} = \UU_{\mathtt{par}} + \mathbf{\Delta}_3,\qquad
\LL_{\mathtt{par}} = \bigoplus_{e=1}^{4} \Newt\left(\frac{\partial \F_{\mathtt{par}}}{\partial \alpha_e} \right),
\ee
where we used \eqref{eq:FUDelta}. They are displayed in Fig.~\ref{fig:UFLpar} (after dehomogenization by setting $\alpha_4 = 1$).
\begin{figure}[t]
    \centering
    \includegraphics[width=\textwidth]{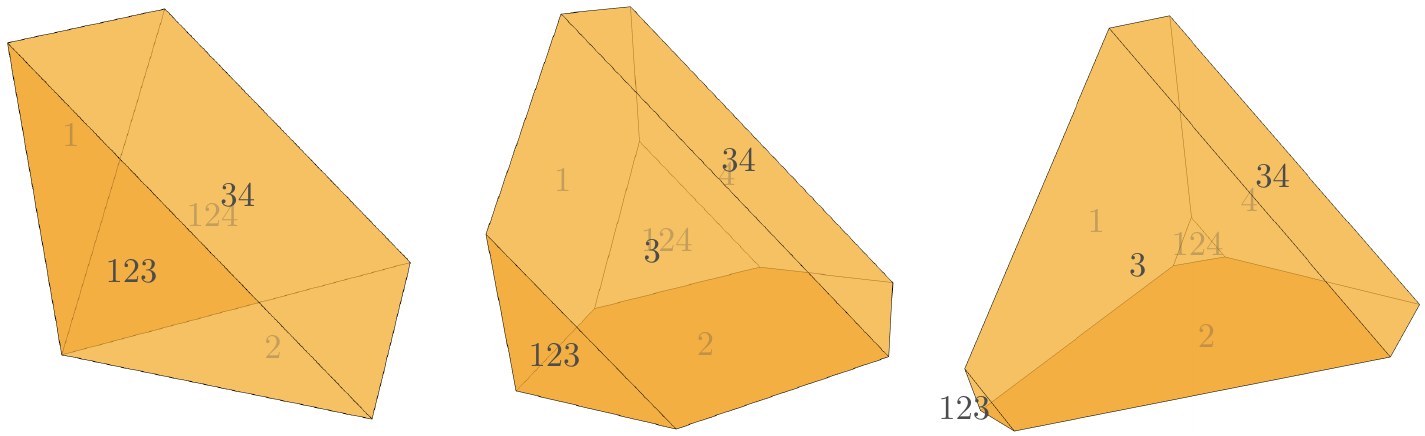}
    \caption{From left to right: polytopes $\UU_{\mathtt{par}}$, $\FF_{\mathtt{par}}$, $\LL_{\mathtt{par}}$. Facets are labelled by the edges belonging to the corresponding subdiagram $\gamma \subset \mathtt{par}$.}
    \label{fig:UFLpar}
\end{figure}
The set of facets is labelled by the 1VI subdiagrams
\be
\gamma \in \{ 1, 2, 3, 4, 34, 123, 124\},
\ee
where we represented each $\gamma$ by listing the edges that belong to it, e.g., the facet $\gamma=34$ corresponds to the subdiagram consisting of edges 3 and 4 (and their vertices). The polytope $\UU_{\mathtt{par}}$ is a pyramid with $5$ facets because for $\gamma = 123, 124$, $\mathtt{par}/\gamma$ is not 1VI. The polytopes $\FF_{\mathtt{par}}$ and $\LL_{\mathtt{par}}$ turn out to be combinatorially isomorphic, with $7$ facets each. Their $f$-vector is $(10, 15, 7)$.
\end{example}

\subsection{\label{sec:banana-polytopes}Example: Banana Diagrams}

As an explicit example of an infinite family of polytopes, in this subsection we focus on banana diagrams $\mathtt{B}_{\E}$ for any $\E \geq 2$, which we already considered in Sec.~\ref{sec:banana}. Previous work includes \cite{Klemm:2019dbm}. Recall that the Symanzik polynomials are given by
\be
\U_{\mathtt{B}_\E} = \sum_{j = 1}^\E \prod_{\substack{e=1\\ e \neq j}}^{\E} \alpha_e, \qquad \qquad \F_{\mathtt{B}_\E} = s \prod_{e=1}^{\E} \alpha_e -  \left ( \sum_{e=1}^{\E} \m_e \alpha_e \right ) \U_{\mathtt{B}_\E} .
\ee
Since the monomial $\prod_e \alpha_e$ occurs in $\left( \sum_e \m_e \alpha_e \right) \U_{\mathtt{B}_\E}$, we have 
\begin{align}
\FF_{\mathtt{B}_\E} &= \Newt \left ( \sum_e \m_e \alpha_e \right ) \oplus \UU_{\mathtt{B}_\E} \nn\\
&= \conv(w_1, w_2 \ldots, w_\E) \oplus \conv(u_1, u_2, \ldots, u_e), 
\end{align}
where $w_1, w_2, \ldots, w_\E$ are the standard basis vectors of $\R^\E$ and $u_i = \sum_{e \neq i} w_e$. Recall that $\mathbf{\Delta}_{\E-1} := \conv(w_1, w_2, \ldots, w_\E)$ is the standard simplex. The facet representations of $\UU_{\mathtt{B}_\E}$ and $\mathbf{\Delta}_{\E-1}$ are
\begin{align} 
\mathbf{\Delta}_{\E-1} &= \left \{ x \in \R^{\E} ~|~ x_e \geq 0,\quad e = 1, \ldots, \E  \;\text{ and }\; \sum_e x_e = 1 \right \},\\
\UU_{\mathtt{B}_\E} &= \left \{ x \in \R^{\E} ~|~ x_e \leq 1,\quad e = 1, \ldots, \E   \;\text{ and }\; \sum_e x_e = \E-1 \right \}.
\end{align}
Both these polytopes are simplices, so that $k$-dimensional faces are obtained by replacing $\E-k$ defining inequalities by equalities.

 A useful observation is that for a subset $\gamma \subset \{1,2, \ldots, \E \}$ of cardinality $k$ and $v = \sum_{j \in \gamma } c_j w_{j}, c_j > 0$, we have that $\v{\UU_{\mathtt{B}_\E}}$ is the codimension-$k$ face given by $\conv(w_j , j \notin \gamma)$. Moreover, all codimension-$k$ faces arise in this way. Similarly, for $v = - \sum_{j \in \gamma } c_j w_{j}, c_j > 0$, we have that $\v{\mathbf{\Delta}_{\E-1}}$ is the codimension-$k$ face given by $\conv(u_j , j \notin \gamma)$.

\begin{proposition} \label{prop:facetsF}
The facets of $\FF_{\mathtt{B}_\E}$ are in one to one correspondence with the faces of the standard simplex $\mathbf{\Delta}_{\E-1}$. In particular, the facets $\v{\FF_{\mathtt{B}_\E}}$ of $\FF_{\mathtt{B}_\E}$ are given by $v = \sum_{j \in \gamma} w_j$, for all proper, nonempty subsets $\gamma \subset \{1,2, \ldots, \E\}$. 
\end{proposition}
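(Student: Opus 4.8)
The plan is to exploit the Minkowski decomposition $\FF_{\mathtt{B}_\E} = \mathbf{\Delta}_{\E-1} \oplus \UU_{\mathtt{B}_\E}$ recorded above, together with the facts that both summands are $(\E-1)$-dimensional simplices lying in parallel hyperplanes orthogonal to $\1=(1,\dots,1)$, that $\mathbf{\Delta}_{\E-1}=\conv(w_1,\dots,w_\E)$, and that $\UU_{\mathtt{B}_\E}=\1-\mathbf{\Delta}_{\E-1}=\conv(u_1,\dots,u_\E)$ with $u_e=\1-w_e$. I will use two elementary facts about Minkowski sums: for any linear functional $v$ one has $\partial_v(\mathbf{\Delta}_{\E-1}\oplus\UU_{\mathtt{B}_\E})=\partial_v\mathbf{\Delta}_{\E-1}\oplus\partial_v\UU_{\mathtt{B}_\E}$, since $\langle v,x+y\rangle$ attains its minimum on $\mathbf{\Delta}_{\E-1}\oplus\UU_{\mathtt{B}_\E}$ exactly when both $\langle v,x\rangle$ and $\langle v,y\rangle$ attain theirs; and $\dim(P\oplus Q)$ equals the dimension of $L_P+L_Q$, where $L_P:=\Span\{p-p':p,p'\in P\}$. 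Computing the facets then reduces to bookkeeping the dimensions of $\partial_v\mathbf{\Delta}_{\E-1}$ and $\partial_v\UU_{\mathtt{B}_\E}$ as $v$ varies.

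First I would normalize $v$. Since $\FF_{\mathtt{B}_\E}$ lies in a hyperplane orthogonal to $\1$, adding a multiple of $\1$ to $v$ leaves every face of $\FF_{\mathtt{B}_\E}$ unchanged, so I may assume $\min_e v_e=0$; I may also discard $v=0$, for which $\partial_v\FF_{\mathtt{B}_\E}=\FF_{\mathtt{B}_\E}$. Put $A=\{e:v_e=0\}$ and $B=\{e:v_e=\max_{e'}v_{e'}\}$; then $A$ and $B$ are nonempty and disjoint. Since the vertices of $\mathbf{\Delta}_{\E-1}$ are the $w_e$ with $\langle v,w_e\rangle=v_e$, we get $\partial_v\mathbf{\Delta}_{\E-1}=\conv(w_e:e\in A)$, of dimension $|A|-1$; since the vertices of $\UU_{\mathtt{B}_\E}$ are the $u_e$ with $\langle v,u_e\rangle=\langle v,\1\rangle-v_e$, minimized precisely when $v_e$ is maximal, we get $\partial_v\UU_{\mathtt{B}_\E}=\conv(u_e:e\in B)$, of dimension $|B|-1$. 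This is exactly the ``useful observation'' above, specialized to one simplex at a time.

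Now, because $A$ and $B$ are disjoint, $L_{\partial_v\mathbf{\Delta}_{\E-1}}\subseteq\Span(w_e:e\in A)$ and $L_{\partial_v\UU_{\mathtt{B}_\E}}\subseteq\Span(w_e:e\in B)$ meet only in $0$, so $\dim\partial_v\FF_{\mathtt{B}_\E}=(|A|-1)+(|B|-1)=|A|+|B|-2\le\E-2$, with equality if and only if $A\sqcup B=\{1,\dots,\E\}$, i.e. $v$ takes only the two values $0$ and $\max_e v_e$. Hence the facets of $\FF_{\mathtt{B}_\E}$ are precisely the faces $\partial_v\FF_{\mathtt{B}_\E}$ with $v$ of this two-valued shape, which after rescaling and the normalization above is $v=\sum_{j\in\gamma}w_j$ with $\gamma=B$ a proper nonempty subset of $\{1,\dots,\E\}$; conversely each such $v$ has $|A|+|B|=\E$ and therefore defines a genuine facet. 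Finally, distinct proper nonempty $\gamma$ give distinct normal directions modulo $\R\1$ (the difference $\sum_{j\in\gamma}w_j-\sum_{j\in\gamma'}w_j$ is a $\{-1,0,1\}$-vector, constant only when it is $0$, given that $\gamma,\gamma'$ are both proper and nonempty), hence distinct facets, and each $\gamma$ corresponds to the proper nonempty face $\conv(w_j:j\in\gamma)$ of $\mathbf{\Delta}_{\E-1}$; this yields the asserted bijection.

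I do not anticipate a genuine obstacle: the argument is essentially the normal-fan computation for a sum of two simplices. The only steps needing a little care are the normalization that quotients out the lineality direction $\1$ (and separately treats $v\in\R\1$), and the verification that the two difference-spans $L_{\partial_v\mathbf{\Delta}_{\E-1}}$ and $L_{\partial_v\UU_{\mathtt{B}_\E}}$ are in direct sum, which is exactly where the disjointness of the min- and max-level sets $A$ and $B$ of $v$ enters.
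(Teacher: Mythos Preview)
Your proof is correct and more self-contained than the paper's. The paper's argument invokes the fact (cited from Schultka) that $\FF_G$ is a generalized permutohedron, which immediately forces every facet normal to be proportional to some $\sum_{j\in\gamma}w_j$; it then only verifies the converse direction by computing $\partial_v\mathbf{\Delta}_{\E-1}$ and $\partial_v\UU_{\mathtt{B}_\E}$ and checking that their Minkowski sum has the right dimension. You replace the citation by a direct dimension count: normalizing $v$, identifying the min- and max-level sets $A$ and $B$, and observing that $\dim\partial_v\FF_{\mathtt{B}_\E}=|A|+|B|-2$ can reach $\E-2$ only when $v$ is two-valued. This buys you an elementary proof that does not rely on the permutohedron classification, at the cost of a slightly longer case analysis; the paper's version is shorter but imports more machinery. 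The actual computation of the faces of the two simplex summands and the direct-sum argument for their affine spans are essentially the same in both proofs.
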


\begin{proof}

Since $\FF_G$ is a generalized permutohedron \cite{Schultka:2018nrs}, we know that for each facet $\v{\FF_G}$, $v$ is proportional to $\sum_{j \in \gamma} w_j$. 
Conversely, if $v = \sum_{j \in \gamma} w_j$ for a nonempty, proper subset $\gamma \subset \{1, \ldots, E\}$, we need to check that $\v{\FF_{\mathtt{B}_\E}}$ is a facet of $\FF_{\mathtt{B}_\E}$. Note that $\v{\mathbf{\Delta}_{\E-1}} = \Conv(w_j, j \notin \gamma)$ and $\v{\UU_{\mathtt{B}_\E}} = \Conv(u_j, j \in \gamma)$. Translating $\v{\UU_{\mathtt{B}_\E}} = \Conv(u_j, j \in \gamma)$ by $-\sum_e w_e$, we see that the sum of these faces of dimension $|\gamma|-1$ and $\E - |\gamma|-1$ is indeed a facet. 
\end{proof}

We now relate the facet-structure of $\FF_{\mathtt{B}_\E}$ to that of $\LL_{\mathtt{B}_\E}$. Recall that
\be
\LL_{\mathtt{B}_\E} = \bigoplus_{e = 1}^\E \LL_{{\mathtt{B}_\E},e} = \bigoplus_{e = 1}^\E \Newt \left (\frac{\partial \F_{\mathtt{B}_\E}}{\partial \alpha_e} \right ).
\ee
Since $ \frac{\partial \F_{\mathtt{B}_\E}}{\partial \alpha_e} = s  \prod_{i \neq e} \alpha_i  - \m_e \U_{\mathtt{B}_\E} - \left ( \sum_{i = 1}^\E \m_i \alpha_i \right )  \frac{\partial \U_{\mathtt{B}_\E}}{\partial \alpha_e}$ and
\be
\Newt( s \prod_{i \neq e} \alpha_i ) \subset \Newt( \U_{\mathtt{B}_\E}) \subset \Newt \left ( \left ( \sum_{i = 1}^\E \m_i \alpha_i \right )  \frac{\partial \U_{\mathtt{B}_\E}}{\partial \alpha_e}  \right ),
\ee
we conclude that 
\be
\Newt \left( \alpha_e \frac{\partial \F_{\mathtt{B}_\E}}{\partial \alpha_e} \right ) = \mathbf{\Delta}_{\E-1} \oplus \UU_{\mathtt{B}_\E,e},
\ee
where
\be
\UU_{\mathtt{B}_\E,e} := \conv( u_i, i \neq e) = \UU_{\mathtt{B}_\E} \cap \{ x_e = 1 \}.
\ee
As a consequence, the polytope $\LL_{\mathtt{B}_\E}$ is, up to translation by $(1,1,\ldots, 1)$, given by $\E \cdot \mathbf{\Delta}_{\E-1} \oplus \bigoplus_{e = 1}^\E \UU_{\mathtt{B}_\E,e}$.

\begin{proposition} \label{prop:facetsL}
The facets of $\LL_{\mathtt{B}_\E}$ are in one-to-one correspondence with the facets of $\FF_{\mathtt{B}_\E}$. In particular, the facets $\v{\LL_{\mathtt{B}_\E}}$ of $\LL_{\mathtt{B}_\E}$ are given by $v = \sum_{j \in \gamma} w_j$, for all proper, nonempty subsets $\gamma \subset \{1,2, \ldots, \E\}$. 
\end{proposition}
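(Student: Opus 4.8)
The plan is to pin down the facets of $\LL_{\mathtt{B}_\E}$ from two sides: an upper bound of $2^{\E}-2$ facets coming from the fact that $\LL_{\mathtt{B}_\E}$ is a generalized permutohedron, and a matching lower bound of $2^{\E}-2$ facets, one for each proper nonempty $\gamma\subseteq\{1,\ldots,\E\}$, coming from Prop.~\ref{prop:L-boundary}.

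For the upper bound I would start from the decomposition recorded just above, namely that up to a translation
\[
\LL_{\mathtt{B}_\E}=\E\cdot\mathbf{\Delta}_{\E-1}\oplus\bigoplus_{e=1}^{\E}\UU_{\mathtt{B}_\E,e}.
\]
Writing $\mathbf{\Delta}_S:=\conv(w_i:i\in S)$ and using $u_i=\sum_{e\neq i}w_e=\mathbf{1}-w_i$ with $\mathbf{1}=(1,\ldots,1)$, each facet $\UU_{\mathtt{B}_\E,e}=\conv(u_i:i\neq e)$ equals $\mathbf{1}-\mathbf{\Delta}_{\{1,\ldots,\E\}\setminus e}$, i.e.\ a translate of the negated coordinate simplex $-\mathbf{\Delta}_{\{1,\ldots,\E\}\setminus e}$. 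So, up to translation and dilation, $\LL_{\mathtt{B}_\E}$ is a Minkowski sum of coordinate simplices and of negatives of coordinate simplices. Since the braid arrangement fan is invariant under $v\mapsto-v$ and the property ``normal fan coarsens the braid fan'' is stable under Minkowski sums, dilations and translations, $\LL_{\mathtt{B}_\E}$ is a generalized permutohedron. Exactly as in the proof of Prop.~\ref{prop:facetsF}, this forces every facet $\partial_v\LL_{\mathtt{B}_\E}$ to have $v$ proportional to $w_\gamma=\sum_{j\in\gamma}w_j$ for some proper nonempty $\gamma\subseteq\{1,\ldots,\E\}$; as distinct such $\gamma$ give distinct rays of the braid fan, $\LL_{\mathtt{B}_\E}$ has at most $2^{\E}-2$ facets. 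One could instead argue directly, computing $\partial_v\LL_{\mathtt{B}_\E}=\E\cdot\mathbf{\Delta}_{M}\oplus\bigoplus_{e}(-\mathbf{\Delta}_{M_e})$ with $M=\{i:v_i=\min_j v_j\}$ and $M_e=\{i\neq e:v_i=\max_{j\neq e}v_j\}$, and checking by a dimension count that this face has dimension $\E-2$ precisely when $v$ takes exactly two distinct values, i.e.\ $v\propto w_\gamma$.

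For the lower bound I would apply Prop.~\ref{prop:L-boundary} to $G=\mathtt{B}_\E$: for any proper nonempty $\gamma\subseteq\{1,\ldots,\E\}$ the subdiagram on the edge set $\gamma$ is a single edge when $|\gamma|=1$ and otherwise the banana diagram $\mathtt{B}_{|\gamma|}$ with loop number $|\gamma|-1>0$; in both cases $\gamma$ (and each $\gamma\setminus e'$) is 1VI, so Prop.~\ref{prop:L-boundary} yields that $\partial_{w_\gamma}\LL_{\mathtt{B}_\E}$ is a facet, with $w_\gamma=\sum_{j\in\gamma}w_j$. As $\gamma$ runs over the $2^{\E}-2$ proper nonempty subsets these facets are pairwise distinct. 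Combining the two bounds, $\LL_{\mathtt{B}_\E}$ has exactly $2^{\E}-2$ facets, each of the form $\partial_{w_\gamma}\LL_{\mathtt{B}_\E}$ with $\gamma$ proper nonempty; since Prop.~\ref{prop:facetsF} describes the facets of $\FF_{\mathtt{B}_\E}$ by the same index set, the assignment $\gamma\mapsto\gamma$ gives the asserted one-to-one correspondence. The step needing the most care is the passage from the Minkowski decomposition to ``$\LL_{\mathtt{B}_\E}$ is a generalized permutohedron'' (or, in the alternative route, the dimension count, which must be split according to whether the extreme level set of $v$ is a singleton); everything else is routine bookkeeping with subdiagrams of $\mathtt{B}_\E$.
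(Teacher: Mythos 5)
Your proof is correct, and it agrees with the paper on the upper bound: both arguments use that $\LL_{\mathtt{B}_\E}= \E\cdot\mathbf{\Delta}_{\E-1}\oplus\bigoplus_e\UU_{\mathtt{B}_\E,e}$ is a Minkowski sum of generalized permutohedra, hence a generalized permutohedron, so every facet normal must be $w_\gamma$ for a proper nonempty $\gamma$. (You spell out \emph{why} each $\UU_{\mathtt{B}_\E,e}$ is a generalized permutohedron, as a translated negated coordinate simplex; the paper takes this as read.) Where you diverge is the lower bound. The paper verifies directly that $\partial_{w_\gamma}\LL_{\mathtt{B}_\E}$ has dimension $\E-2$ by splitting the affine span into the factor supported on $\{w_j : j\notin\gamma\}$, coming from $\E\cdot\partial_{w_\gamma}\mathbf{\Delta}_{\E-1}$, and the factor supported (after a $-\E\cdot\mathbf{1}$ translation) on $\{w_i : i\in\gamma\}$, coming from $\bigoplus_e\partial_{w_\gamma}\UU_{\mathtt{B}_\E,e}$, and adding $(\E-k-1)+(k-1)=\E-2$. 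You instead invoke Prop.~\ref{prop:L-boundary}, after checking that every proper nonempty $\gamma$ (and each $\gamma\setminus e'$) gives a 1VI subdiagram of $\mathtt{B}_\E$, which is fine and shorter. The trade-off is that your route leans on the general assertion of Prop.~\ref{prop:L-boundary} that the listed faces are facets, whereas the paper's self-contained dimension count reproves this fact concretely for $\mathtt{B}_\E$. Both are valid; the paper's version is arguably safer given that the facet claim in Prop.~\ref{prop:L-boundary} is established somewhat informally, while yours is cleaner and more modular.
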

\begin{proof}

Since $\LL_{\mathtt{B}_\E} = \E \cdot \mathbf{\Delta}_{\E-1} \oplus \bigoplus_{e = 1}^\E \UU_{\mathtt{B}_\E,e}$ is a Minkowski sum of generalized permutohedra, it is a generalized permutohedron itself. Therefore, if $\v{\LL_\E}$ is a facet, $v$ is proportional to $\sum_{j \in \gamma} w_j$ for some nonempty strict subset $\gamma \subset \{1, 2, \ldots, \E\}$.
Conversely, we need to show that the vector $v = \sum_{j \in \gamma} w_j$ defines a facet. Suppose $\gamma$ has cardinality $k$. Then $v$ defines a $(k{-}1)$-dimensional face of $\bigoplus_{e=1}^\E \UU_{\mathtt{B}_\E,e}$ contained in $\E \cdot \Newt \left ( \sum_{i \in \gamma} \prod_{j \neq i} \alpha_j \right )$. The translation of the latter by $(-\E, \ldots, -\E)$ is contained in the span of $\{u_i, i \in \gamma \}$. The face $\E \cdot \v{\mathbf{\Delta}_{\E-1}}$ is contained in the span of the remaining standard basis vectors $\{u_i, i \notin \gamma \}$, so that
\begin{align}
\dim \left ( \E \cdot \v{\mathbf{\Delta}_{\E-1}}  \oplus  \bigoplus_{e=1}^\E \v{\UU_{\mathtt{B}_\E,e}}  \right )
&= \dim (\E \cdot \v{\mathbf{\Delta}_{\E-1}}) + \dim \left( \bigoplus_{e=1}^\E \v{\UU_{\mathtt{B}_\E,e}} \right)\nn \\
&= k - 1+ \E-1-k = \E-2.
\end{align}
This concludes the proof.
%
\end{proof}

\begin{corollary}
Conj.~\ref{conj:facets} holds for $G = \mathtt{B}_\E$, $\E \geq 2$.
\end{corollary}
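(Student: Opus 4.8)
The plan is to deduce the corollary by composing Prop.~\ref{prop:facetsF} and Prop.~\ref{prop:facetsL}, the only genuinely new input being a short graph-theoretic dictionary between 1VI subdiagrams of $\mathtt{B}_\E$ and proper nonempty subsets of its edge set.

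First I would record the following. Since $\mathtt{B}_\E$ has exactly two vertices and each of its $\E$ edges joins them, any connected subdiagram containing at least one edge contains both vertices, hence equals the sub-banana $\mathtt{B}_{|\gamma|}$ on its edge set $\gamma \subseteq \{1,\ldots,\E\}$. Every such sub-banana is 1VI (a single edge is 1VI by definition, and $\mathtt{B}_k$ for $k \ge 2$ has no cut vertex), and conversely every connected --- in particular every 1VI --- subdiagram of $\mathtt{B}_\E$ arises this way. Hence $\gamma \mapsto (\text{subdiagram on }\gamma)$ is a bijection between proper nonempty subsets $\gamma \subsetneq \{1,\ldots,\E\}$ and proper 1VI subdiagrams of $\mathtt{B}_\E$, and under it the normal direction $w_\gamma = \sum_{e\in\gamma} w_e$ cutting the face $\partial_\gamma$ in the general theory is exactly the vector $v = \sum_{j\in\gamma} w_j$ that labels facets in Prop.~\ref{prop:facetsF} and Prop.~\ref{prop:facetsL}. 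The full edge set $\gamma = \{1,\ldots,\E\}$ is excluded because then $w_\gamma = (1,\ldots,1)$ is constant on $\FF_{\mathtt{B}_\E}$ and on $\LL_{\mathtt{B}_\E}$ by homogeneity, so $\partial_\gamma$ is the whole polytope rather than a facet --- consistent with $\mathtt{B}_\E$ not being a proper subdiagram of itself.

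Now Prop.~\ref{prop:facetsF} states that $\gamma \mapsto \partial_{w_\gamma}\FF_{\mathtt{B}_\E}$ is a bijection from proper nonempty subsets onto the facets of $\FF_{\mathtt{B}_\E}$, and Prop.~\ref{prop:facetsL} states that $\gamma \mapsto \partial_{w_\gamma}\LL_{\mathtt{B}_\E}$ is a bijection from the same index set onto the facets of $\LL_{\mathtt{B}_\E}$, with the same labelling vectors. Composing these two bijections with the dictionary above produces
\[ \{\text{1VI subdiagrams } \gamma \subsetneq \mathtt{B}_\E\} \;\overset{1:1}{\longleftrightarrow}\; \{\text{facets } \partial_\gamma\FF_{\mathtt{B}_\E}\} \;\overset{1:1}{\longleftrightarrow}\; \{\text{facets } \partial_\gamma\LL_{\mathtt{B}_\E}\}, \]
which is Conj.~\ref{conj:facets} for $G = \mathtt{B}_\E$. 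Here I use, as was noted for 1VI diagrams earlier in this section, that $\FF_{\mathtt{B}_\E}$ and $\LL_{\mathtt{B}_\E}$ are full-dimensional inside their respective homogeneity hyperplanes, so that ``facet'' unambiguously means codimension-one face; both facet sets then have cardinality $2^\E-2$, matching the number of proper nonempty subsets of $\{1,\ldots,\E\}$.

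There is no real obstacle here beyond bookkeeping, since the analytic content is entirely in Prop.~\ref{prop:facetsF}--\ref{prop:facetsL}. The one point meriting a careful sentence is reconciling the two uses of the symbol $\gamma$ --- an edge-subset in the banana section versus a subdiagram in the conjecture --- and checking that $\partial_\gamma$ in Conj.~\ref{conj:facets} is precisely $\partial_{w_\gamma}$ with $w_\gamma = \sum_{e\in\gamma} w_e$, which is the normal vector used in both propositions. Once that is made explicit, the corollary is immediate.
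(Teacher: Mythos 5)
Correct, and it is the intended route: the paper offers no separate argument for this corollary because it is the immediate composition of Prop.~\ref{prop:facetsF} and Prop.~\ref{prop:facetsL}, together with the observation you supply that proper 1VI subdiagrams of $\mathtt{B}_\E$ are precisely the sub-bananas indexed by proper nonempty edge subsets, with $w_\gamma=\sum_{e\in\gamma}w_e$ the common facet normal in both propositions. Your bookkeeping (excluding the full edge set, identifying connected subdiagrams with edge subsets, and matching the $2^\E-2$ count) is exactly what makes the two bijections compose into the one stated in Conj.~\ref{conj:facets}.
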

\begin{example} \label{ex:banana} 
We illustrate Prop.~\ref{prop:facetsF} and \ref{prop:facetsL} for the banana diagram with $\E = 4$ edges. In this case, $\UU_{\mathtt{B}_4} = \mathbf{\Delta}_3$, $\FF_{\mathtt{B}_4}$ and $\LL_{\mathtt{B}_4}$ are $3$-dimensional polytopes in $\R^4$. Their projections onto $(\alpha_2, \alpha_3, \alpha_4)$-space are shown (not to scale) in Fig.~\ref{fig:polytopes}, which we generated using \texttt{Polymake.jl} \cite{gawrilow2000polymake,kaluba2020polymake}.
\begin{figure}
\centering
\includegraphics[width=\textwidth]{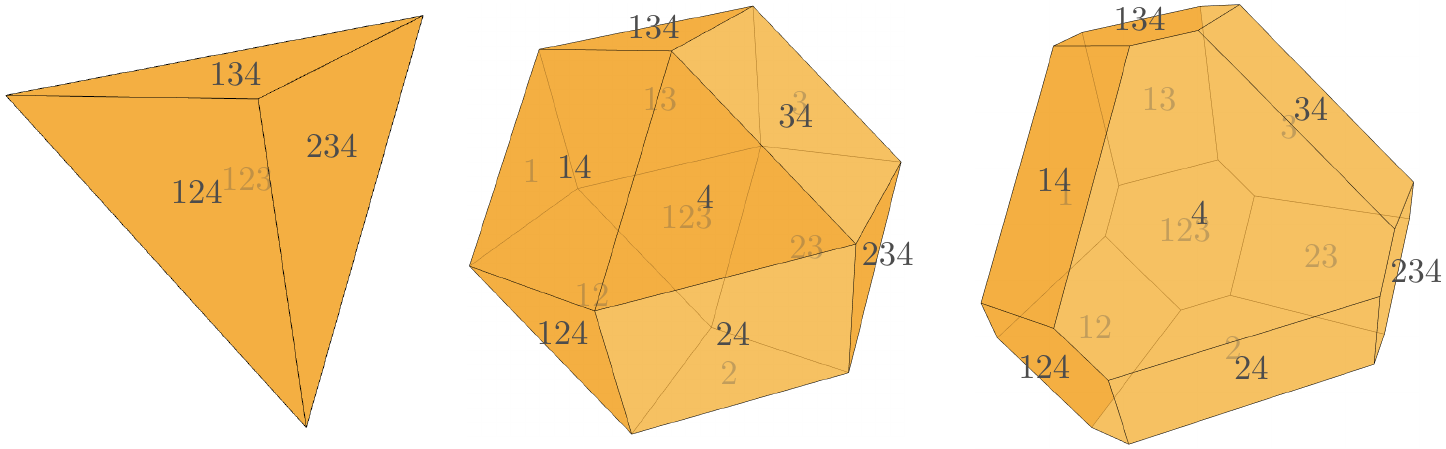}
\caption{From left to right: polytopes $\UU_{\mathtt{B}_4}$, $\FF_{\mathtt{B}_4}$, $\LL_{\mathtt{B}_4}$. Facets are labelled by the edges belonging to the corresponding subdiagram $\gamma \subset \mathtt{B}_4$.}
\label{fig:polytopes}
\end{figure}
Their facets are labelled by the $14$ 1VI subdiagrams
\be
\gamma \in \{1,2,3,4,12,13,14,23,24,34,123,124,134,234\},
\ee
with the same notation as in Ex.~\ref{ex:UFLpar}. $\UU_{\mathtt{B}_4}$ is a simplex and hence has only $4$ facets labelled by $\gamma = 123,124,134,234$, which are the only subdiagrams for which $\mathtt{B}_4/\gamma$ is 1VI. The polytopes $\FF_{\mathtt{B}_4}$ and $\LL_{\mathtt{B}_4}$ have $14$ facets each, corresponding to the $2^4 - 2$ choices of $\gamma$ in Prop.~\ref{prop:facetsF} and \ref{prop:facetsL}. Note that, although the normal fan for both polytopes coincide at the level of rays, the lower-dimensional face-structure is quite different. The $f$-vector of $\FF_{\mathtt{B}_4}$ is $(12,24,14)$, while that of $\LL_{\mathtt{B}_4}$ is $(24,36,14)$. 
\end{example}

\subsection{Bounds on the Degree of the Landau Discriminant} \label{subsec:degree}

In some cases, a bound on the degree of the Landau discriminant polynomial $\Delta_G$ can be obtained from the volume of $\FF_G$ and its faces. The statement requires some additional notation. For an $(\E_G{-}1)$-dimensional lattice polytope $\PPP \subset H_d \subset \R^{\E_G}$, let $\textup{Vol}(\PPP)$ be the volume of $\PPP$, viewed as a lattice polytope in the $(\E_G {-}1)$-dimensional quotient lattice $M_{\E_G - 1} = \Z^{\E_G}/(1,\ldots,1)\Z$. The volume is normalized such that $\textup{Vol}(\mathbf{\Delta}_{\E_G -1}) = 1$. The lattice points of each $D$-dimensional face $\mathbf{\Gamma} \subset \PPP$ generate a $D$-dimensional affine sublattice $M_{\mathbf{\Gamma}}$ of $M_{\E_G-1}$, in which $\mathbf{\Gamma}$ has volume $\textup{Vol}_{\mathbf{\Gamma}}(\mathbf{\Gamma})$. The normalization of the volume in the sublattice is again given by $\textup{Vol}_{\mathbf{\Gamma}}(\mathbf{\Delta}_D) = 1$. Finally, recall that a $D$-dimensional polytope is called \emph{simple} if all of its vertices are contained in exactly $D$ edges.

\begin{proposition} \label{prop:degAdisc}
If $\FF_G$ is simple, the degree of $\Delta_G$ is bounded by 
\be \label{eq:degAdisc}
\deg(\Delta_G) \leq \sum_{\mathbf{\Gamma} \subseteq \FF_G} (-1)^{\textup{codim}(\mathbf{\Gamma})} \,  (\textup{dim}\, \mathbf{\Gamma} + 1) \, \textup{Vol}_{\mathbf{\Gamma}}(\mathbf{\Gamma}),
\ee
where the sum is over all faces $\mathbf{\Gamma} \subset \FF_G$ together with $\FF_G$ viewed as a face of itself.
\end{proposition}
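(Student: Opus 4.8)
The plan is to bound $\deg\Delta_G$ by the degree of the $\A$-discriminant of the second Symanzik polynomial, and then to invoke the classical Gelfand--Kapranov--Zelevinsky degree formula for $\A$-discriminants of simple polytopes. Set $\A := \Supp(\F_G)\subset\Z^{\E_G}$, so that $\Conv(\A)=\FF_G$, and regard $\F_G=\sum_{a\in\A}c_a\,\alpha^a$ as a polynomial in the Schwinger parameters whose coefficients $c_a$ depend \emph{linearly} on the kinematic variables; this gives a linear embedding $\ell\colon \PP(\K_G)\hookrightarrow\PP^{|\A|-1}$. Let $\nabla_{\A}\subset\PP^{|\A|-1}$ be the $\A$-discriminant, i.e.\ the Zariski closure of the locus of coefficient vectors for which $\{\F_G=0\}$ is singular somewhere in the torus $(\C^*)^{\E_G}$. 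Any point $(\alpha,q)\in Y$ has $\alpha$ in the torus with $\partial\F_G/\partial\alpha_e(\alpha;q)=0$ for all $e$, and $\F_G(\alpha;q)=0$ by Euler's relation, so $\alpha$ is a torus singularity of $\{\F_G(\cdot;q)=0\}$ and $\ell(q)\in\nabla_{\A}$. Since $\ell^{-1}(\nabla_{\A})$ is closed, $\nabla_G=\overline{\pi_{\PP(\K_G)}(Y)}\subseteq\ell^{-1}(\nabla_{\A})$.

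Next I would show that $\nabla_{\A}$ is a hypersurface whose defining polynomial $\Delta_{\A}$ does not vanish identically on $\ell(\PP(\K_G))$. Vanishing would mean $\{\F_G(\cdot;q)=0\}$ is singular in the torus for every $q\in\PP(\K_G)$; by Thm.~\ref{thm1} there is, for generic $q$, no such singularity with $\U_G\neq 0$, so one would need a singular point on $\{\U_G=0\}\cap(\C^*)^{\E_G}$ for generic $q$. A dimension count rules this out: the incidence variety $\{(\alpha,q): \U_G(\alpha)=0,\ \alpha\in(\C^*)^{\E_G},\ \partial\F_G/\partial\alpha_e(\alpha;q)=0\ \forall e\}$ maps to $V(\U_G)\cap(\C^*)^{\E_G}$ (dimension $\E_G-2$ in $\PP^{\E_G-1}$) with linear fibers $\ker M(\alpha)$, where $M(\alpha)$ is the matrix of the linear system \eqref{eq:linlandau}; as $M(\alpha)$ has maximal rank $\E_G$ for generic $\alpha\in V(\U_G)$ (a nonvanishing maximal minor can be exhibited, computed much as in the proof of Thm.~\ref{thm1}), the image in $\PP(\K_G)$ has dimension at most $\dim\K_G-2$ and is a proper subvariety. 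Hence $\Delta_{\A}\circ\ell$ is a nonzero polynomial of degree at most $\deg\Delta_{\A}$ — equality of degrees holding only if no leading terms cancel, but the linearity of $\ell$ already gives the bound — that vanishes on $\nabla_G$. If $\nabla_G$ is a hypersurface, its square-free defining polynomial $\Delta_G$ divides $\Delta_{\A}\circ\ell$; if $\codim\nabla_G>1$ then $\Delta_G=1$ by definition. In both cases $\deg\Delta_G\le\deg\Delta_{\A}$.

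Finally, since $\FF_G=\Conv(\A)$ is simple, its normal fan is simplicial, so the toric variety $X_{\FF_G}$ has at worst abelian quotient singularities and the GKZ degree formula for $\A$-discriminants \cite{gelfand2008discriminants} applies, yielding
\[
\deg\Delta_{\A}=\sum_{\mathbf{\Gamma}\subseteq\FF_G}(-1)^{\codim(\mathbf{\Gamma})}\,(\dim\mathbf{\Gamma}+1)\,\Vol_{\mathbf{\Gamma}}(\mathbf{\Gamma}),
\]
the sum running over all faces of $\FF_G$ together with $\FF_G$ itself. Combined with $\deg\Delta_G\le\deg\Delta_{\A}$ this gives \eqref{eq:degAdisc}.

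The step I expect to be the main obstacle is the non-degeneracy claim in the second paragraph: showing that the restricted $\A$-discriminant $\Delta_{\A}\circ\ell$ is not identically zero — equivalently, that $\{\F_G=0\}$ is smooth in the torus for generic kinematics — since this is precisely what legitimizes the divisibility argument. The simplicity hypothesis on $\FF_G$ is exactly the condition guaranteeing the closed-form GKZ degree formula, and the remaining steps are essentially bookkeeping.
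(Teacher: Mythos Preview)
Your overall architecture matches the paper's: embed $\PP(\K_G)$ linearly into the coefficient space of $\F_G$, observe that $\nabla_G$ lands in the pullback of the $\A$-discriminant, and read off the degree from the GKZ formula. There is, however, a genuine gap in your last step. From ``$\FF_G$ is simple'' you only conclude that the normal fan is simplicial, so that $X_{\FF_G}$ has at worst abelian quotient singularities. But \cite[Ch.~9, Thm.~2.8]{gelfand2008discriminants}, which gives the face-volume formula you quote, is stated for \emph{smooth} $X_{\A}$; simplicial is not enough as written. The paper closes this gap by using an extra structural fact you have available but did not invoke: $\FF_G$ is a generalized permutohedron, and by \cite[Cor.~3.10]{Postnikov06faces} every \emph{simple} generalized permutohedron is Delzant. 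Hence $X_{\FF_G}$ is smooth and the GKZ formula applies verbatim. You should insert this step.

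On the non-degeneracy issue you flag: your proposed rank argument does not go through ``much as in the proof of Thm.~\ref{thm1}.'' The maximal minor computed there --- the one indexed by the internal-mass columns $\m_1,\ldots,\m_{\E_G}$ --- equals $(\L_G+1)\,\U_G(\alpha)^{\E_G}$, which vanishes identically on $V(\U_G)$. In fact, on $\{\U_G=0\}$ the $\m_{e'}$-column of $M(\alpha)$ reduces to $-\alpha_{e'}\,(\partial_{\alpha_e}\U_G)_e$, so all $\E_G$ mass columns are proportional and that submatrix has rank $\le 1$ there. To get the rank you need you must use the columns indexed by $s_I$ and $\M_i$, and a separate computation is required; alternatively, it suffices to exhibit a single $q\in\PP(\K_G)$ for which $\{\F_G(\cdot;q)=0\}$ is smooth in the torus. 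The paper's proof simply asserts that $\nabla_G$ is contained in a linear section of $\nabla_{\A}$ and does not address this point explicitly, so your instinct to justify it is good --- just be aware that the specific minor from Thm.~\ref{thm1} will not do the job.
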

\begin{proof}
By \cite[Cor.~3.10]{Postnikov06faces}, every simple generalized permutohedron is Delzant. Therefore, the projective toric variety $X_{\FF_G}$ corresponding to $\FF_G$ is smooth and \cite[Ch.~9, Thm.~2.8]{gelfand2008discriminants} applies for its ${\cal A}$-discriminant. The proposition follows from the observation that $\nabla_G$ is a subvariety of a linear section of this ${\cal A}$-discriminant.
\end{proof}

\begin{remark}
The inequality in Prop.~\ref{prop:degAdisc} is expected to be strict in many cases, as the condition $\U_G \neq 0$ in \eqref{eq:LE} and the fact that $\FF_G$ is not ``generic" with respect to its monomial support may cause the degree of $\Delta_G$ to be strictly smaller than the bound in Prop.~\ref{prop:degAdisc}. 
\end{remark}

\begin{example}
For $G = \mathtt{A_4}$, we have $\FF_G = 2 \mathbf{\Delta}_3$ (see Sec.~\ref{sec:ngon}). Hence $\FF_G$ is simple, and the bound \eqref{eq:degAdisc} evaluates to 
\be
\deg(\Delta_{\mathtt{A_4}}) \leq \mathbf{1} \cdot (-1)^0 \cdot 4 \cdot 8 + \mathbf{4} \cdot (-1)^1 \cdot 3 \cdot 4 + \mathbf{6} \cdot (-1)^2 \cdot 2 \cdot 2 + \mathbf{4} \cdot (-1)^3 \cdot 1 \cdot 1 = 4.
\ee
Here the numbers in bold correspond to the number of codimension 0, 1, 2, 3 faces of $\FF_G$. We have seen in Ex.~\ref{ex:box2} that this bound is tight. Another example for which $\FF_G$ is simple is $G = \mathtt{par}$ (Ex.~\ref{ex:UFLpar}). The bound \eqref{eq:degAdisc} gives $6 \leq 24$.
\end{example}

Recall that for positive, real parameters $\lambda_1, \ldots, \lambda_{\E_G -1}$, the \emph{mixed volume} of $\E_G - 1$ polytopes $\LL_i \subset H_{d_i}, i = 1, \ldots, \E_G-1$, denoted $\MV(\LL_1, \ldots, \LL_{\E_G-1})$, is the coefficient standing with $\lambda_1 \lambda_2 \cdots \lambda_{\E_G - 1}$ in the homogeneous polynomial $\frac{1}{(\E_G-1)!}\Vol(\lambda_1 \cdot \LL_1 \oplus \cdots \oplus \lambda_{\E_G -1} \cdot \LL_{\E_G -1})$.
\begin{proposition} \label{prop:degmixedres}
The degree of $\Delta_G$ satisfies
\be \label{eq:degmixedres}
\deg(\Delta_G) \leq \sum_{e = 1}^{\E_G} \textup{MV}(\LL_{G,1}, \ldots, \widehat{\LL_{G,e}}, \ldots, \LL_{G,\E_G} ),
\ee
where $\widehat{\LL_{G,e}}$ indicates that the $e$-th polytope is omitted in the list.
\end{proposition}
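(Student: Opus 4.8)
The plan is to recognize $\Delta_G$ as a factor of a mixed toric resultant and then apply the classical Bernstein--Kushnirenko--Khovanskii-type formula for the degree of such resultants. First I would recall the setup from Section~\ref{subsec:motiv}: after dehomogenizing $\alpha_{\E_G} = 1$, we view the $\E_G$ polynomials $f_e := \frac{\partial \F_G}{\partial \alpha_e}\big|_{\alpha_{\E_G}=1}$, $e = 1, \ldots, \E_G$, as Laurent polynomials in the $n = \E_G - 1$ variables $\alpha_1, \ldots, \alpha_{\E_G-1}$, with Newton polytopes $\Newt(f_e) = \LL_{G,e}$ (up to the harmless translation coming from the dehomogenization). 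These are $n+1 = \E_G$ Laurent polynomials in $n$ variables, so they form an overdetermined system, and the mixed $(\A_1, \ldots, \A_{\E_G})$-resultant $\Res_{\A_\bullet}$ is the (up to scaling unique) polynomial in the coefficients that vanishes exactly when $f_1 = \cdots = f_{\E_G} = 0$ has a common solution in $(\C^*)^n$, i.e. in $X \setminus V(\U_G)$ — precisely the Landau equations \eqref{eq:LE}.

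Next I would make the comparison between $\Res_{\A_\bullet}$ and $\Delta_G$ precise. The coefficients of the $f_e$ depend linearly on the kinematic parameters $q \in \PP(\K_G)$. Pulling $\Res_{\A_\bullet}$ back along this linear map $\PP(\K_G) \hookrightarrow \prod_e \PP^{|\A_e|-1}$ gives a polynomial on $\PP(\K_G)$ that vanishes on $q$ iff the Landau equations admit a toric solution, hence vanishes exactly on $\nabla_G$ (assuming $\nabla_G$ is a hypersurface). Therefore $\Delta_G$ divides this pullback, and since the linear embedding does not increase degree, $\deg(\Delta_G) \le \deg(\Res_{\A_\bullet})$, where the degree of $\Res_{\A_\bullet}$ is measured as a polynomial in the coefficient block of any single $f_e$ (these degrees are what the right-hand side of \eqref{eq:degmixedres} sums). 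The degree of $\Res_{\A_\bullet}$ in the coefficients of $f_e$ is, by the classical formula of Gelfand--Kapranov--Zelevinsky \cite[Ch.~8]{gelfand2008discriminants} (see also Sturmfels' treatment of sparse resultants), exactly the mixed volume $\MV(\LL_{G,1}, \ldots, \widehat{\LL_{G,e}}, \ldots, \LL_{G,\E_G})$ of the remaining $n = \E_G - 1$ polytopes. Summing over $e$ gives \eqref{eq:degmixedres}.

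The main obstacle, and the place where care is needed, is the passage from "$\Delta_G$ divides the pullback of $\Res_{\A_\bullet}$" to a clean degree inequality, together with the genericity hypotheses hidden in the resultant theory. Specifically: (i) one must ensure the mixed resultant $\Res_{\A_\bullet}$ is genuinely a nonzero polynomial, i.e. that the system $f_1 = \cdots = f_{\E_G} = 0$ is generically inconsistent in $(\C^*)^n$ over the full coefficient space — this is automatic since $\Res_{\A_\bullet}$ is defined precisely when the resultant variety has codimension one, and for $n+1$ generic Laurent polynomials in $n$ variables it does; (ii) the linear section $\PP(\K_G) \subset \prod_e \PP^{|\A_e|-1}$ might conceivably lie entirely inside $V(\Res_{\A_\bullet})$, in which case the bound is vacuous but still correct (this does not happen in our examples, where $\nabla_G$ is a proper subvariety by Thm.~\ref{thm1}); and (iii) the condition $\U_G \neq 0$ means we are actually discarding part of what the toric resultant sees, which can only make $\Delta_G$ smaller, consistent with the inequality direction. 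A secondary subtlety is matching the Newton polytopes $\LL_{G,e}$ exactly: after dehomogenizing $\alpha_{\E_G}=1$, the polytope $\Newt(f_e) \subset \R^{\E_G-1}$ is the image of the homogeneous $\LL_{G,e} \subset H_{\L_G} \subset \R^{\E_G}$ under the coordinate projection, and mixed volume is invariant under this identification of $H_{\L_G}$ with $\R^{\E_G-1}$, so no volume is lost. With these points addressed, \eqref{eq:degmixedres} follows directly from the GKZ degree formula for mixed resultants.
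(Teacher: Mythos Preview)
Your proposal is correct and follows essentially the same approach as the paper: recognize that $\nabla_G$ is contained in the linear section of the mixed $(\A_1,\ldots,\A_{\E_G})$-resultant variety obtained by specializing the coefficients of the $\partial \F_G/\partial \alpha_e$ to their kinematic parametrization, and then apply the GKZ degree formula \cite[Ch.~8, Prop.~1.6]{gelfand2008discriminants} for the multidegree of $\Res_{\A_\bullet}$ in terms of mixed volumes. Your treatment is in fact more detailed than the paper's two-sentence proof, explicitly addressing the dehomogenization, the $\U_G \neq 0$ condition, and the possibility that the linear section lies entirely in $V(\Res_{\A_\bullet})$; one small caveat is that in case~(ii) the bound would not be ``vacuous but still correct'' if that containment held---you would need a separate argument---but the paper does not address this either.
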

\begin{proof}
For $e = 1, \ldots, \E_G$, let $\A_e \subset \Z^{\E_G}/(1,\ldots,1)\Z$ be the set of lattice points in $\LL_{G,e}$. Observe that $\nabla_G$ is a subvariety of a linear section of the mixed $({\cal A}_1, \ldots, {\cal A}_{\E_G})$-resultant \cite[Ch.~8, \S 1]{gelfand2008discriminants}. By \cite[Ch.~8, Prop.~1.6]{gelfand2008discriminants}, the proposition follows.
\end{proof}

\begin{remark} 
The degree bound from Prop.~\ref{prop:degAdisc} has the disadvantage that it only holds in the case where $\FF_G$ is simple. The bound from Prop.~\ref{prop:degmixedres} applies more generally. If $\FF_G$ is simple, the bound in \eqref{eq:degAdisc} is bounded from above by the bound from Prop.~\ref{prop:degmixedres}. We point out that both degree bounds trivially hold for $\Delta_G(\LLL)$, where $\LLL$ is any projective subspace of $\PP(\K_G)$ (e.g., the equal mass case).
\end{remark}

\begin{example}
For $G = \mathtt{par}$, the bounds evaluate to $6 \leq 24 \leq 8+8+6+6 = 28$. For $G = \mathtt{A_4}$, both bounds are tight: \eqref{eq:degAdisc} and \eqref{eq:degmixedres} give $4 \leq 4 \leq 4$. In the case of $\mathtt{A_4}$ and $\LLL = \PP^3$ as in Ex.~\ref{ex:box2}, we find $\deg(\Delta_{\mathtt{A}_4}(\LLL)) = 2 \leq 4 \leq 4$.
\end{example}

\section{\label{sec:counting}Counting the Number of Master Integrals}
In this section we change gears and demonstrate how to apply homotopy continuation methods to the problem of counting the number of independent multi-loop Feynman integrals in analytic regularization. The approach rests on a theorem by Huh \cite{huh2013maximum} which identifies this dimension as the number of solutions to a system of rational critical point equations. A similar technique has been recently applied to tree-level scattering amplitudes in \cite{Sturmfels:2020mpv,agostini2021likelihood} (see also \cite{Liu:2018brz} for previous work).

\subsection{\label{sec:twisted}Feynman Integrals and Twisted Cohomology}

We first introduce a ``potential function'' $W_G$ associated to a given Feynman diagram $G$, constructed out of the Symanzik polynomials
\be\label{eq:WG}
W_G := (d_G - \D/2) \log\, \U_G - d_G \log \F_G + \sum_{e=1}^{\E_G} \delta_e \log \alpha_e,
\ee
where as before $\delta_e \in \C \setminus \Z$ are the analytic regulators and $d_G = m_G + \E_G - \L_G \D/2 + \delta$ is the degree of divergence with $\delta = \sum_{e=1}^{\E_G} \delta_e$.
Monodromies of $W_G$ define a line bundle ${\cal L}_G$ on the space of Schwinger parameters
\be\label{eq:MG}
X_G := (\C^{\ast})^{\E_G-1} \setminus \{ \U_G \F_G = 0\},
\ee
where we have fixed the projective gauge by setting $\alpha_{\E_G} = 1$.

We can now define the $k$-th locally-finite homology with coefficients in the line bundle ${\cal L}_G$ on $X_G$ (see, e.g., \cite{AomotoKita}), $H_k^{\text{lf}}(X_G, {\cal L}_G)$, as well as the twisted cohomology, $H^k(X_G, \boldsymbol{\nabla}_G)$, with the integrable connection $\boldsymbol{\nabla}_G := \d{+}\d W_G\wedge$.
A theorem of Aomoto, applied to the above case, states that the only non-trivial twisted (co)homology is in the middle dimension, $k = \dim_{\C}X_G = \E_G {-} 1$, if ${\cal L}_G$ is generic enough \cite[Thm. 1]{10.2969/jmsj/02720248}. That is, $d_G$ and the $\delta_e$'s do not satisfy certain linear relations. We refer the reader to \cite{AomotoKita} for more details on such constructions in the context of the theory of hypergeometric functions.

Feynman integrals in analytic regularization over some point in the physical region ${\cal P}_G \subset \R\PP(\K_G)$ are then defined as pairings
\be
H_{\E_G - 1}^{\text{lf}}(X_G, {\cal L}_G) \times H^{\E_G - 1}(X_G, \boldsymbol{\nabla}_G) \to \C,
\ee
given by
\be
( [\Gamma_G \otimes e^{W_G}], [\varphi] ) \mapsto \int_{\Gamma_G} \frac{\prod_{e=1}^{\E_G} \alpha_e^{\delta_e} }{\U_G^{\D/2 - d_G} \F_G^{d_G}} \varphi.
\ee
Here we choose the homology class, which has the same effect as the $i\epsilon$ factor in \eqref{eq:IG}, given by
\be\label{eq:Gamma-G}
\Gamma_G := \Big\{ \left( \alpha_e \exp \left(i\epsilon \frac{\partial \V_G}{\partial \alpha_e}\right) \right )_{ e=1,2,\ldots,\E_G {-} 1} \;\Big |\; \alpha_e \in \R_+ \Big\} \subset X_G
\ee
for sufficiently small $\epsilon$. In this section we are interested in Feynman integrals that \emph{do} converge, for values of kinematic invariants where Landau equations are not satisfied, in which case the above prescription is equivalent to the Feynman $i\epsilon$ prescription \cite{Mizera:2021fap}.
A twisted cohomology class $[\varphi]$ can be represented by a holomorphic $(\E_G {-} 1)$-form on $X_G$. That is, a meromorphic form on $\C^{\E_G-1}$ with poles along the divisor $\{\alpha_1 \cdots \alpha_{\E_G-1}\U_G = 0 \}$. Such forms are given by polynomials in $\alpha_e, 1/\alpha_e$ and $1/\U_G$ times the measure $\d^{\E_G - 1}\alpha$. Different choices of $[\varphi]$ give Feynman integrals within the same \emph{family} defined by the potential $W_G$.

Since $\R_+^{\E_G - 1}$ remains fixed, the problem of counting the number of independent Feynman integrals (known as the \emph{master integrals}) within a given family amounts to computing the dimension of the cohomology $H^{\E_G -1}(X_G, \boldsymbol{\nabla}_G)$. Because it is the only non-trivial cohomology and the line bundle ${\cal L}_G$ is flat, the topological Euler characteristic of $X_G$ equals \cite[Thm. 3]{10.2969/jmsj/02720248}
\begin{align}
\chi(X_G) &= \sum_{k} (-1)^k \dim H^{k} (X_G, \boldsymbol{\nabla}_G)\nn\\ &= (-1)^{\E_G - 1} \dim H^{\E_G - 1}(X_G, \boldsymbol{\nabla}_G).\label{eq:dimH}
\end{align}
Thus, the number we are looking for is the signed Euler characteristic. In practical applications it might be difficult to compute $\chi(X_G)$ directly for large enough $G$.

A more efficient route stems the connection to Morse theory on $X_G$ associated to the function $\Re(W_G)$. Since $W_G$ is holomorphic, the critical points of $\Re(W_G)$ are the same as those of $W_G$. Let us define the critical locus of $W_G$ to be
\be
\mathrm{Crit}(W_G) := \{ (\alpha_1, \alpha_2, \ldots, \alpha_{\E_G - 1}) \in X_G \;|\; \d W_G = 0 \}.
\ee
For suitably generic values of the parameters $\delta_1,\ldots, \delta_{\E_G}, d_G$, the above set is finite and all the critical points are isolated and non-degenerate.
In particular, we have the following result.
\begin{theorem}\label{thm:thm2}
The signed Euler characteristic $(-1)^{\E_G - 1} \chi(X_G)$ is equal to the number of critical points of $W_G$ for generic complex parameters $\delta_1, \delta_2, \ldots, \delta_{\E_G}, d_G$.
\end{theorem}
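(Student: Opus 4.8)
The plan is to combine two classical facts: a theorem of Huh relating the signed Euler characteristic of a very affine variety to the number of critical points of a generic logarithmic potential, and Aomoto's vanishing theorem (already invoked in \eqref{eq:dimH}) which collapses the twisted cohomology into the middle degree. First I would make precise that $X_G$ as defined in \eqref{eq:MG} is a smooth very affine variety: it is the complement in $(\C^*)^{\E_G - 1}$ of the hypersurface $\{\U_G \F_G = 0\}$, hence admits a closed embedding into a torus via the coordinates $\alpha_1, \ldots, \alpha_{\E_G - 1}, \alpha_1^{-1}, \ldots, \alpha_{\E_G - 1}^{-1}, \U_G^{-1}, \F_G^{-1}$. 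The function $W_G$ in \eqref{eq:WG} is, up to the generic complex multipliers $(d_G - \D/2)$, $-d_G$, and $\delta_e$, exactly a $\C$-linear combination of logarithms of the coordinate functions defining this embedding.

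Next I would apply \cite[Thm.~1 and Thm.~2]{huh2013maximum}: for a smooth very affine variety $U$ of dimension $n$, and for generic coefficients, the number of critical points of a master function (a generic $\C$-linear combination of logarithms of the defining monomials / linear forms) equals $(-1)^n \chi(U)$. Applying this with $U = X_G$, $n = \E_G - 1$, and the master function $W_G$, one concludes directly that $\#\,\mathrm{Crit}(W_G) = (-1)^{\E_G - 1}\chi(X_G)$ for generic $\delta_1, \ldots, \delta_{\E_G}, d_G$. The finiteness and non-degeneracy of the critical points for generic parameters, asserted in the paragraph preceding the theorem, is part of the same circle of results and follows from the fact that the critical equations $\d W_G = 0$ form a generic system in the sense of \cite{huh2013maximum}; alternatively one can cite the Morse-theoretic input used in \eqref{eq:dimH}.

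The main technical point — and the one I expect to require the most care — is verifying the hypotheses of Huh's theorem, rather than any computation. Two things must be checked: (i) that the multipliers appearing in $W_G$, namely $d_G - \D/2$, $-d_G$, and the $\delta_e$, are generic enough, which is exactly the genericity condition on $d_G$ and the $\delta_e$ already needed for Aomoto's theorem and stated in the excerpt; and (ii) that the defining data of the very affine embedding — the monomials $\alpha_e$ and the polynomials $\U_G$, $\F_G$ — are in sufficiently general position, or more precisely that $X_G$ is smooth and that the divisor at infinity is a normal crossings (or at least that the master function has only non-degenerate critical points for generic multipliers). Smoothness of $X_G$ is immediate since it is an open subset of a torus; the normal-crossings or transversality condition is the delicate part, but it is precisely what is guaranteed once the exponents $\delta_e$ and $d_G$ are chosen generically, because Huh's result (building on Kouchnirenko, Orlik--Terao, and the theory of the likelihood correspondence) does not require the arrangement itself to be generic — only the linear combination of logarithms. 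Thus the proof reduces to citing \cite{huh2013maximum} together with the genericity assumptions already in force, and spelling out the identification of $W_G$ with a master function on the very affine variety $X_G$.
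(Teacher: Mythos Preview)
Your proposal is correct and follows essentially the same approach as the paper: verify that $X_G$ is a smooth very affine variety (you observe it is open in a torus; the paper gives an explicit closed embedding $\alpha \mapsto (\alpha_1,\ldots,\alpha_{\E_G-1},(\U_G\F_G)^{-1})$ into $(\C^*)^{\E_G}$ and checks smoothness via the Jacobian), then invoke \cite[Thm.~1]{huh2013maximum}. Your discussion of normal crossings in point (ii) is unnecessary extra caution---Huh's theorem requires only smoothness of the very affine variety, not any transversality of the boundary divisor---but you arrive at the right conclusion regardless.
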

\begin{proof}
The statement of the theorem will follow from \cite[Thm. 1]{huh2013maximum} after we show that $X_G$ is a smooth very affine variety, i.e., a smooth closed subvariety of a torus. To see this, we embed $X_G \rightarrow (\C^*)^{\E_G}$ via 
\be
(\alpha_1, \alpha_2, \ldots, \alpha_{\E_G-1}) \mapsto \Big(\alpha_1, \alpha_2, \ldots, \alpha_{\E_G - 1}, \frac{1}{\U_G \F_G } \Big).
\ee
Hence $X_G$ is given by $y~ \U_G \F_G - 1 = 0$, where we use coordinates $(\alpha_1, \alpha_2, \ldots, \alpha_{\E_G-1},\allowbreak y)$ on $(\C^*)^{\E_G}$. At a singular point, we would have $\frac{\partial}{\partial y} (y~ \U_G \F_G - 1) = \U_G \F_G = 0$.
\end{proof}

For an alternative derivation using Morse theory, see \cite[Thm. 3]{10.2969/jmsj/02720248}.
Combining this result with \eqref{eq:dimH} we find that the number of critical points $W_G$ counts the number of independent Feynman integrals in analytic regularization.
The connection between counting master integrals, twisted cohomology, and the number of critical points was first explained in \cite{Mastrolia:2018uzb,Frellesvig:2019uqt} in the Baikov representation as well as \cite{Mizera:2019vvs} in the parametric representation. Related observations were previously made in the context of relative cohomology groups \cite{Lee:2013hzt} and D-modules \cite{Bitoun:2017nre}. In this section we gave a mild reformulation in terms of the potential $W_G$ from \eqref{eq:WG}, in order to stress more the connection to the combinatorics behind the Symanzik polynomials. Finding critical points has many other applications within the intersection theory of twisted cohomologies, including computation of integration-by-parts relations and differential equations for Feynman integrals \cite{Mizera:2017rqa,Mastrolia:2018uzb,Frellesvig:2019kgj,Frellesvig:2019uqt,Mizera:2019vvs,Weinzierl:2020xyy,Frellesvig:2020qot,Caron-Huot:2021xqj}. This motivates the need for an efficient algorithm for computing $\mathrm{Crit}(W_G)$ using homotopy continuation methods.

\begin{remark}
Using \eqref{eq:prop1-1} and \eqref{eq:prop1-2} the conditions for critical points can be written as
\be
\frac{\partial W_G}{\partial \alpha_e} = (d_G - \D/2) \frac{\U_{G\setminus e}}{\U_G} - d_G \frac{\F_{G \setminus e} + \m_e(\U_{G\setminus e} - 2 \U_G)}{\F_G} + \frac{\delta_e}{\alpha_e} = 0
\ee
for $e=1,2,\ldots,\E_G$.
Therefore we obtain a system of equations defined purely in terms of the combinatorics of the graph $G$.
\end{remark}

\subsection{\label{sec:chi-homotopy}Computational Results}

The critical points of $W_G$ can be computed with the help of \texttt{Landau.jl} and off-the-shelf software from numerical algebraic geometry, making the \texttt{Julia} code very concise:

\begin{minted}{julia}
edges = [[1,2],[2,3],[3,4],[4,5],[5,6],[6,1],[3,6]]
nodes = [1,2,4,5]

E = length(edges)

F, U, α, p, mm = getF(edges, nodes)
F, s, t, M, m = substitute4legs(F, p, mm)

@var u[1:E+2]
W = u[1] * log(U) + u[2] * log(F) + dot(u[3:E+2], log.(α))
dW = System(differentiate(subs(W, α[E] => 1), α[1:E-1]),
            parameters = [s; t; M; m; u])
            
Crit = monodromy_solve(dW)
crt = certify(dW, Crit)
println(ndistinct_certified(crt))
\end{minted}

The lines $1$--$7$ simply compute the Symanzik polynomials using \texttt{Landau.jl}, here in the example of $G=\mathtt{dbox}$ with generic masses. The potential function $W_G$ is given by $\mathtt{W}$, where $\mathtt{u[i]}$ are $\E_G{+}2$ generic parameters in front of the logarithms s in \eqref{eq:WG}. The following lines set up the critical point equations evaluated at $\alpha_{\E_G} =1$. These are solved using \texttt{HomotopyContinuation.jl} (v2.6.0) in line 14. In line 15, we use \texttt{certify} to get a rigorous proof that each of the computed solutions is an approximate solution in a suitable sense \cite{breiding2020certifying}. The number of distinct certified solutions, printed in line 16, is a lower bound on $\chi(X_G)$.

The above code is rather efficient, for example inputting $G = \mathtt{B}_{14}$, it finds $16383$ solutions in about $12$ minutes on the hardware used in Sec.~\ref{subsubsec:equalmass}. Certifying these solutions takes about 40 seconds. Let us apply the above code to the families of diagrams considered in this paper.

\begin{example}\label{ex:chiAB}
The results for $|\chi(X_{\mathtt{A}_{\E}})|$ and $|\chi(X_{\mathtt{B}_\E})|$ for $\E \leq 10$ are collected in Tab.~\ref{tab:chiAB}. For the banana diagrams $\mathtt{B}_{\E}$ we give the result for generic masses $\m_e$. The number of critical points matches the result $2^{\E}-1$ proven in \cite[Prop.~55]{Bitoun:2017nre}, in agreement with \cite{Kalmykov:2016lxx}.

\begin{table}[!t]
\centering
\begin{tabular}{c|c|c|c|c|c|c|c|c|c} 
$G$ & $\E=2$ & $3$ & $4$ & $5$ & $6$ & $7$ & $8$ & $9$ & $10$\\
\hline
$\mathtt{B}_\E$ $(\m_e)$ & $3$ & $7$ & $15$ & $31$ & $63$ & $127$ & $255$ & $511$ & $1023$ \\
$\mathtt{A}_\E$ $(\M_i,\m_e)$ & $3$ & $7$ & $15$ & $31$ & $63$ & $127$ & $255$ & $511$ & $1023$ \\
$\mathtt{A}_\E$ $(0, \m_e)$ & $2$ & $3$ & $11$ & $26$ & $57$ & $120$ & $247$ & $502$ & $1013$ \\
$\mathtt{A}_\E$ $(\M_i, 0)$ & $1$ & $4$ & $11$ & $26$ & $57$ & $120$ & $247$ & $502$ & $1013$ \\
$\mathtt{A}_\E$ $(0,0)$ & $1$ & $1$ & $3$ & $11$ & $33$ & $85$ & $199$ & $439$ & $933$ \\
\end{tabular}
\caption{Signed Euler characteristic $|\chi(X_G)|$ computed using the numerical code from Sec.~\ref{sec:chi-homotopy} for banana ($G = \mathtt{B}_\E$) and one-loop diagrams ($G = \mathtt{A}_\E$); see Ex.~\ref{ex:chiAB}.}\label{tab:chiAB}
\end{table}
In the case of the one-loop diagrams with $\E$ edges, $\mathtt{A}_\E$, we present the results for different ways of assigning masses, e.g., $(0,\m_e)$ means that all $\M_i = 0$ and $\m_e$ are left generic. We notice that in the fully generic case $(\M_i, \m_e)$, we seem to find $|\chi(M_{\mathtt{A}_\E})| = 2^{\E} -1$, and the case $(\M_i, 0)$ gives results consistent with $2^\E - \E - 1$. We conjecture these are valid for all $\E$. Previous work on counting the size of cohomology basis for one-loop integrals includes \cite{Boyling1968,AIF_2003__53_4_977_0}, though in different formalisms.
\end{example}

\begin{example}\label{ex:chiG}
The results for $|\chi(X_G)|$ for the other diagrams in Fig.~\ref{fig:diagrams} are presented in Tab.~\ref{tab:chiG} in the same notation as in Ex.~\ref{ex:chiAB}. We notice that for all the diagrams, the values of $|\chi(X_G)|$ are non-increasing going to the right of the table, as more and more special mass configurations are chosen. Individual entries of the table match previous results contained in \cite{Bitoun:2017nre,Klausen:2019hrg,Frellesvig:2020qot}.
\begin{table}[!t]
\centering
    \begin{tabular}{c|c|c|c|c|c|c|c}
        $G$ & $(\M_i,\m_e)$ & $(\M,\m)$ & $(0,\m_e)$ & $(0,\m)$ & $(\M_i,0)$ & $(\M,0)$ & $(0,0)$ \\
        \hline
        $\mathtt{par}$    & $19$  & $19$  & $13$  & $13$  & $4$  & $4$  & $1$ \\
        $\mathtt{acn}$    & $55$  & $55$  & $36$  & $25$  & $20$ & $20$ & $3$ \\
        $\mathtt{env}$    & $273$ & $273$ & $181$ & $181$ & $56$ & $56$ & $10$ \\
        $\mathtt{npltrb}$ & $116$ & $116$ & $77$  & $52$  & $28$ & $28$ & $5$ \\
        $\mathtt{tdetri}$ & $51$  & $51$  & $33$  & $33$  & $4$  & $4$  & $1$ \\
        $\mathtt{debox}$  & $43$  & $43$  & $31$  & $25$  & $11$ & $11$ & $3$ \\
        $\mathtt{tdebox}$ & $123$ & $123$ & $87$  & $87$  & $11$ & $11$ & $3$ \\
        $\mathtt{pltrb}$  & $81$  & $81$  & $61$  & $47$  & $16$ & $16$ & $4$ \\
        $\mathtt{dbox}$   & $227$ & $227$ & $159$ & $111$ & $75$ & $75$ & $12$ \\
        $\mathtt{pentb}$  & $543$ & $543$ & $430$ & $341$ & $228$ & $228$ & $62$ \\
    \end{tabular}
\caption{Signed Euler characteristic $|\chi(X_G)|$ computed using the numerical code from Sec.~\ref{sec:chi-homotopy} for diagrams $G$ from Fig.~\ref{fig:diagrams}; see Ex.~\ref{ex:chiG}.}\label{tab:chiG}
\end{table}

\end{example}

\section{\label{sec:conclusion}Conclusion and Outlook}

In this work we introduced the Landau discriminant $\nabla_G$ of a Feynman diagram $G$. This formalises the notion of the singularity locus of Feynman integrals in the kinematic space, with a view towards explicit computations. In particular, we perform the Landau analysis from the point of view of nonlinear algebra. We proved that the Landau discriminant is an irreducible, strict subvariety of the kinematic space, and present examples where it has codimension $>1$. Our symbolic and numerical methods allow us to compute the dimension, degree and (with some simplifying assumptions on the kinematic parameters) the defining equation of the Landau discriminant for nontrivial examples that were previously out of reach. We studied combinatorial properties of convex polytopes related to Feynman diagrams and Landau equations, and provided bounds on the degree of $\nabla_G$ via ${\cal A}$-discriminants and mixed toric resultants. Finally, we showed how to use numerical nonlinear algebra to compute the number of master integrals of $G$. 

A number of open questions remain, especially about estimating the complexity of the discriminant for a general diagram $G$. For example, while the computation of the dimension and degree of $\nabla_G$ can be made efficient in the examples we studied in this paper (see Tab.~\ref{tab:dimdeg}), it would be interesting to know whether they can be determined (or at least estimated) purely from the combinatorics of $G$. Similarly, on physical grounds one expects that $\nabla_G$ simplifies in the limit when the Mandelstam invariants $s_{ij}$ become large at fixed masses $\M_i, \m_e$. Concretely, for $n_G=4$ one could study the asymptotes of the components of $\nabla_G$ when $|s|, |t| \gg 1$.

The Landau discriminant polynomial $\Delta_G$ is in many cases sparse with respect to its degree. For numerical interpolation, knowing the monomials (or other basis functions) of $\Delta_G$ a priori would lead to a significant improvement. Listing these monomials (or a superset) from the combinatorics of $G$ could be done by exploiting the connection with ${\cal A}$-discriminants and sparse resultants, see for instance \cite{sturmfels1994newton}.

In a number of results we made simplifying assumptions about genericity of the masses $\M_i, \m_e$. It would be interesting to extend them to massless cases, where one expects to see effects of infrared physics come into play. For instance, the degree of Landau discriminants can jump discontinuously and the facet description of the Landau polytopes $\LL_G$ could change drastically in such a degenerate limit. We leave such questions for future investigations.

\acknowledgments
We thank Nima Arkani-Hamed, Nick Early, Mathieu Giroux, Hofie Hannesdottir, Aaron Hillman, Chiara Meroni, Andrzej Pokraka, Marcus Spradlin, and Bernd Sturmfels for useful discussions, work on related projects, and comments on drafts of this paper. We are grateful to Sascha Timme for his help with \texttt{HomotopyContinuation.jl} and to Lukas K\"uhne for his help with the mathrepo website.
S.M. gratefully acknowledges the funding provided by Frank and Peggy Taplin, as well as the grant DE-SC0009988 from the U.S. Department of Energy.


\appendix
\section{\label{sec:appendix} Feynman Integrals in a Nutshell}
\normalsize

In this appendix, we summarize the role of Feynman integrals in the theory of scattering amplitudes. Moreover, we present two different formulations of these integrals in some detail. The first representation, for which the integration domain is \emph{loop-momentum space}, can be constructed intuitively from a Feynman diagram $G$, taking the \emph{Feynman rules} for granted. The second formulation corresponds to the \emph{worldline formalism} discussed in Sec.~\ref{subsec:feynman}. It involves \emph{Schwinger parameters} and \emph{Symanzik polynomials}, which play a crucial role in this paper (see Sec.~\ref{sec:Symanzik}). The appendix' purpose is to make this paper more accessible for non-physicists, and to shed some light on the physical interpretation of the mathematical objects it studies. 

\paragraph*{Feynman diagrams and scattering amplitudes.}\,
The experimental set-up to keep in mind is a \emph{scattering process} or \emph{scattering experiment} in a \emph{particle accelerator}. For our purposes, the accelerator is a hollow sphere, and an experiment corresponds to sending elementary particles (e.g., photons, electrons, or muons) into it. After some \emph{particle interactions} inside the accelerator, some new particles exit.
The interactions that can happen depend on the physical theory governing the process. 

In total, there are $n$ ingoing and outgoing particles. These particles are labeled by their \emph{momentum vectors} $p_i = (p_i^{(0)}, p_i^{(1)}, \ldots, p_i^{(\D-1)}) \in \R^{1,\D-1}$, $i = 1, \ldots, n$. As in Sec.~\ref{subsec:feynman}, $\R^{1,\D-1}$ is the \emph{Minkowski momentum space}, endowed with the pairing 
$p \cdot q = p^{(0)}q^{(0)} - p^{(1)} q^{(1)} - \cdots - p^{(\D-1)} q^{(\D-1)}$, and we write $p^2 = p \cdot p$. In many real-world examples, the dimension $\D$ of $\R^{1,\D-1}$ is taken to be $4$. One dimension corresponds to time, and the other three are space dimensions. The momentum vectors capture the relevant physical information about the particles involved in the scattering experiment, such as their mass and velocity. \emph{Momentum conservation} imposes the relation $p_1 + \cdots + p_n = 0$. 

The \emph{scattering amplitude} $A(p_1, \ldots, p_n) : (\R^{1,\D-1})^n \rightarrow \C$ is a complex-valued function of the momenta, associated to a scattering process. Its modulus $|A(p_1, \ldots, p_n)|$ can be roughly thought of as a \emph{joint probability density function}, telling us what to expect for the outcome of the experiment. Such an amplitude function captures a great amount of physical information. Its evaluation and analytic properties are important active areas of research. For theories respecting Lorentz symmetry, an amplitude is invariant under the action of the Lorentz group. Therefore, it can be expressed as a function of the \emph{Lorentz invariants} $s_I = (\sum_{i \in I} p_i)^2$, $I \subset \{1, \ldots, n\}$ (also called \emph{Mandelstam invariants} when $1<|I| <n-1$), see Sec.~\ref{sec:Symanzik}. 

In \emph{perturbation theory}, one expresses the amplitude $A$ as a sum over all possible interaction patterns inside the particle collider. Let ${\cal G}$ be the set of these interaction patterns. We write 
\be \label{eq:perturb}
A = \sum_{G \in {\cal G}} {\cal I}_G. 
\ee 
This is usually an infinite sum, but as a rule of thumb, more complicated interaction patterns have a smaller contribution to the amplitude. Therefore, one hopes to approximate the amplitude by truncating the sum \eqref{eq:perturb} after considering sufficiently many ``simple'' scenarios $G$. As our notation suggests, the interaction patterns summed over in \eqref{eq:perturb} are encoded by \emph{Feynman diagrams}. Examples of Feynman diagrams are shown in Fig.~\ref{fig:diagrams} and Fig.~\ref{fig:bubble}. We state a definition below. The contribution ${\cal I}_G$ corresponding to the Feynman diagram $G$ is an integral, called a \emph{Feynman integral}. What this integral looks like is determined by the \emph{Feynman rules}. This is explained below. A standard strategy to study the analytic properties of the amplitude $A$, which is also applied in this paper, is to study those of the summands ${\cal I}_G$ in \eqref{eq:perturb}.

For our purposes, a \emph{Feynman diagram} is a connected, undirected but \emph{oriented} graph. That is, to each edge we assign an arbitrary orientation, but the resulting integral ${\cal I}_G$ does not depend on this choice. The graph has $n_G = n$ \emph{open edges}, corresponding to incoming and outgoing particles, and $\E_G$ \emph{internal edges}, corresponding to newly formed particles inside the accelerator. For instance, the Feynman diagram in Fig.~\ref{fig:bubble} has $n = 4$ open edges, labeled by $p_i$, and $\E_G = 2$ internal edges.

\paragraph*{Loop-momentum integrals.}\, We are now ready to describe the loop-momentum formulation of ${\cal I}_G$. Momentum conservation at each node of the diagram imposes that the sum of incoming momenta equals the sum of the outgoing momenta. This mirrors Kirchhoff's current law for electrical circuits. Assigning a momentum vector $q_e \in \R^{1,\D-1},  e = 1, \ldots, \E_G$ to each internal edge, we obtain a linear equation in $p_i, q_e$ for each node of the diagram. 

\begin{figure}[t]
    \centering
    \includegraphics{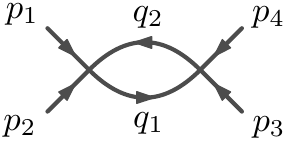}
    \caption{Bubble Feynman diagram used in Ex.~\ref{ex:bubble} and \ref{ex:pinch}.}
    \label{fig:bubble}
\end{figure}
\begin{example}[Bubble diagram]\label{ex:bubble}
Consider the bubble diagram in Fig.~\ref{fig:bubble}. For this graph, we have $n = 4$. Momentum conservation at the left and right vertex gives $p_1 + p_2 + q_2 = q_1$ and $p_3 + p_4 + q_1 = q_2$ respectively. 
\end{example}

These equations allow us to write the internal momenta $q_e$ in terms of the external momenta $p_i$ and $\L_G$ other independent parameters, called the \emph{loop momenta} $\ell_1, \ldots, \ell_{\L_G}$. Here $\L_G$ is the number of independent loops of $G$, or equivalently, the rank of the first homology group of $G$. If $\textup{V}_G$ is the number of vertices of $G$, we have $\L_G = \E_G - \textup{V}_G + 1$.

\begin{example}[Bubble diagram, continued] \label{ex:bubblecont}
The number of loops in the bubble Feynman diagram is $\L_G = 1$. Setting $q_2 = \ell$ we obtain $q_1 = p_1 + p_2 + \ell = - p_3 - p_4 + \ell$, where the second inequality is satisfied by the overall momentum conservation $p_1 + p_2 + p_3 + p_4 = 0$. 
\end{example}

We reiterate that momentum conservation fixes the internal momenta $q_e$ up to $\L_G$ degrees of freedom, called loop momenta. The Feynman integral ${\cal I}_G$ can be expressed as an integral over all possible loop momenta $\ell_1, \ldots, \ell_{\L_G}$. That is, the integration domain is $(\R^{1,\D-1})^{\L_G}$, where the $j$-th factor $\R^{1,\D-1} $ has coordinates $\ell_{j}^{(0)}, \ldots, \ell_{j}^{(\D-1)}$. The integrand is a product over all internal edges of the diagram $G$, in which the factor corresponding to the $e$-th edge is $i\hbar(q_e^2 - m_e^2 + i \epsilon)^{-1}$, with $m_e$ the mass of the particle propagating along edge $e$, a constant $\hbar>0$ known as the reduced Planck constant, and $i = \sqrt{-1}$. Here it is understood that $q_e$ is expressed as a linear combination of the external momenta $p_j$ and the loop momenta $\ell_j$. The integral reads
\be  \label{eq:loopmomentum}
{\cal I}_G = \frac{1}{(i\pi\hbar)^{\D\L_G/2}} \int_{(\R^{1,\D-1})^{\L_G}}  \prod_{e=1}^{\E_G} \frac{i\hbar}{q_e(\ell_1, \ldots, \ell_{\L_G}, p_1, \ldots, p_n)^2 - m_e^2 + i \epsilon} \, \d^{\D \L_G} \ell,
\ee
where $\d^{\D \L_G} \ell$ is short for $\d \ell_{1}^{(0)} \wedge \cdots \wedge \d \ell_{1}^{(\D-1)} \wedge \cdots \wedge \d \ell_{\L_G}^{(0)} \wedge \cdots \wedge \d \ell_{\L_G}^{(\D-1)}$ and the infinitesimal positive parameter $\epsilon$ is used to avoid singularities along the integration contour, also known as the \emph{Feynman $i\epsilon$ prescription}. The overall normalization is introduced for later convenience.

\paragraph*{Worldline formalism.}\, In order to rewrite the integral \eqref{eq:loopmomentum} as an integral over the positive orthant, we observe that 
\be
\frac{i\hbar}{q_e^2 - m_e^2 + i \epsilon} = \int_0^\infty e^{\frac{i}{\hbar}(q_e^2 - m_e^2 + i \epsilon)\alpha_e} \d \alpha_e.
\ee
The $\alpha_e$ are called \emph{Schwinger parameters}. 
Substituting this into \eqref{eq:loopmomentum} we obtain 
\be \label{eq:integr1} {\cal I}_G = \frac{1}{(i \pi\hbar)^{\D\L_G/2}} \int_{\R^{\E_G}_+}  \left (\int_{(\R^{1,\D-1})^{\L_G}}e^{\frac{i}{\hbar} \sum_{e = 1}^{\E_G} (q_e^2 - m_e^2 + i \epsilon)\alpha_e} \d^{\D \L_G} \ell \right) \d^{\E_G} \alpha, 
\ee
with $\d^{\E_G} \alpha = \d \alpha_1 \wedge \cdots \wedge \d \alpha_{\E_G}$. 
The advantage of this formulation is that now the inner integral is Gaussian, which allows us to integrate out the loop momenta analytically. There is a symmetric matrix $\mathbf{A}$, a column vector of momentum vectors $\mathbf{b}$ and a scalar $c$
such that 
\be \label{eq:ABc} \sum_{e = 1}^{\E_G} (q_e(\ell_1, \ldots, \ell_{\L_G},p_1, \cdots, p_n)^2 - m_e^2 + i \epsilon)\alpha_e =  \sum_{i,j = 1}^{\L_G} \mathbf{A}_{ij} (\ell_i \cdot \ell_j) + 2 \sum_{i = 1}^{\L_G} \mathbf{b}_{i} \cdot \ell_{i} + c. 
\ee
Here $\mathbf{A}$, $\mathbf{b}$ and $c$ have polynomial entries in the Schwinger parameters $\alpha_e$, the Lorentz invariants $s_I = \left( \sum_{i \in I} p_i \right)^2$, the internal masses $m_e$ and the parameter $\epsilon$. In fact, it is not hard to see that $\mathbf{A}$ only depends on the $\alpha_e$, and this dependence is linear. Replacing the exponents in \eqref{eq:integr1} by \eqref{eq:ABc}, the integral becomes 
\begin{align*}
{\cal I}_G &= \frac{1}{(i \pi\hbar)^{\D\L_G/2}} \int_{\R^{\E_G}_+}  \left (\int_{(\R^{1,\D-1})^{\L_G}}e^{\frac{i}{\hbar} (\sum_{i,j = 1}^{\L_G} \mathbf{A}_{ij} (\ell_i \cdot \ell_j) + 2 \sum_{i = 1}^{\L_G} \mathbf{b}_{i} \cdot \ell_{i} + c)} \d^{\D \L_G} \ell \right) \d^{\E_G} \alpha, \\
&= \int_{\R^{\E_G}_+} \frac{\d^{\E_G} \alpha}{(\det \mathbf{A})^{\D/2}} e^{ \frac{i}{\hbar} (-\mathbf{b}^\top \mathbf{A}^{-1} \mathbf{b} + c)}.
\end{align*} 
We now set $\U_G = \det \mathbf{A}$ and $\F_G = \U_G (-\mathbf{b}^\top \mathbf{A}^{-1} \mathbf{b} + c)$, where the Lorentz indices between $\mathbf{b}^\top$ and $\mathbf{b}$ are contracted in the second equation. This gives the integral in \eqref{eq:Feynman0} with $\N_G = 1$, also known as a \emph{scalar} Feynman integral. The polynomials $\U_G$ and $\F_G$ are known as \emph{Symanzik polynomials}. They can be obtained from the graph $G$ in a nice combinatorial way, see Sec.~\ref{sec:Symanzik}.

\begin{remark}Feynman integrals for more complicated scattering processes might give loop momentum dependent numerators in the integrand of \eqref{eq:loopmomentum}. Using a similar set of manipulations to those above, they translate to non-trivial polynomials $\N_G$. As discussed in Sec.~\ref{subsec:feynman}, additional regularization is needed to ensure convergence of $\I_G$. For more details we refer the reader to standard textbooks such as \cite{Smirnov:2012gma}.
\end{remark}

We conclude with an example that illustrates the worldline formulation for the bubble graph in Fig.~\ref{fig:bubble} and how the singularities of ${\cal I}_G$ depend on the singularities of $\F_G$, see Sec.~\ref{sec:landaueq}.

\begin{example}[Pinch singularity] \label{ex:pinch}
For the bubble diagram $G = \mathtt{B}_2$ from Fig.~\ref{fig:bubble}, the Symanzik polynomials are given by 
\[\U_{\mathtt{B}_2} = \alpha_1 + \alpha_2, \quad  \F_{\mathtt{B}_2} = s \alpha_1 \alpha_2 - (\m_1 \alpha_1 + \m_2 \alpha_2)(\alpha_1 + \alpha_2). \]
To illustrate how the Landau equations govern the singularities of the Feynman integral ${\cal I}_{\mathtt{B}_2}$ as a function of $s, \m_1, \m_2$, we will restrict to $\m = \m_1 = \m_2 = 1$. After regularization and performing the rewriting steps in Sec.~\ref{sec:landaueq}, this integral in the worldline formalism is given by 
\[ {\cal I}_{\mathtt{B}_2}(s) = \int_{\R_+} \frac{\textup{N}_{\mathtt{B}_2} \textup{R}_{\mathtt{B}_2}^{\textup{reg}}}{(\alpha_1 + 1)^{\D/2 - d_{\mathtt{B}_2}} (-\alpha_1^2 + (s-2) \alpha_1 -1)^{d_{\mathtt{B}_2}}} \d \alpha_1 ,\]
up to a constant factor. Recall that we set $\alpha_2 =1$. As explained in Ex.~\ref{ex:example4}, $(s:\m)$ belongs to the (equal-mass) Landau discriminant, if and only if $s = 4$. Indeed, for $s = 4$, $\F_{\mathtt{B}_2} = -\alpha_1^2 + (s-2) \alpha_1 -1 = 0$ has a singular solution, and $\U_{\mathtt{B}_2} \neq 0$ at this solution. To see how this value of $s$ causes trouble for the integration, let us evaluate the function ${\cal I}_{\mathtt{B}_2}(s)$ along a 1-real dimensional trajectory in the complex $s$-plane, as shown in the left panel of Fig.~ \ref{fig:pinch}. Note that the trajectory ends in $s = 4$. Along the way it crosses the real axis at $s^* > 4$. Before this happens, there are no poles of the integrand in $\R_+$. When $s$ passes the value $s^*$, ${\cal I}_{\mathtt{B}_2}(s)$ can be analytically continued by deforming the integration contour $\R_+$. This is illustrated in the right panel of Fig.~\ref{fig:pinch}. This deformation can be continued until $s$ reaches the value 4, at which point the relevant branch of ${\cal I}_{\mathtt{B}_2}(s)$ is necessarily singular.

\begin{figure}[!t]
    \centering
    \includegraphics[scale=1.1]{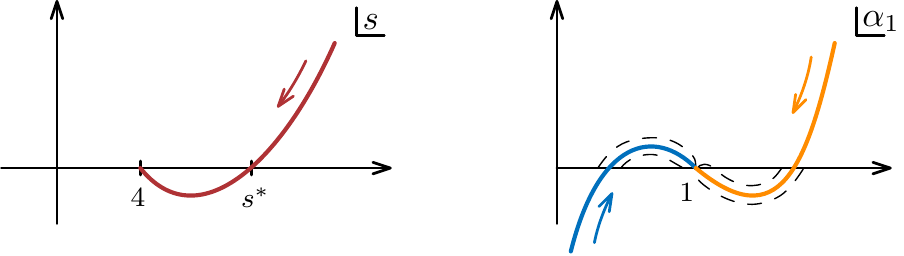}
    \caption{Illustration of Ex.~\ref{ex:pinch}. Left: trajectory of the parameter $s$ (in red) and the real axis. The end point of the trajectory is $s = 4$. Right: corresponding trajectories of the two solutions of $\F_{\mathtt{B}_2} = 0$ in the complex $\alpha_1$-plane, one in blue and one in orange. The black line represents the positive real axis. The dashed lines show a possible deformation of the integration contour as the blue and orange trajectories approach each other at $\alpha_1=1$.}
    \label{fig:pinch}
\end{figure}

\end{example}

\pagebreak
\addcontentsline{toc}{section}{References}
\bibliographystyle{JHEP}
\bibliography{references}

\end{document}